\pgfplotsset{compat=1.18}
\renewcommand*\env@matrix[1][*\c@MaxMatrixCols c]{%
  \hskip -\arraycolsep
  \let\@ifnextchar\new@ifnextchar
  \array{#1}}
\DeclareSymbolFont{symbols2}{LS1}{stixfrak}{m}{n}
\DeclareMathSymbol{\typecolon}{\mathbin}{symbols2}{"25}
\newcommand{\be}{\begin{equation}}
\newcommand{\ee}{\end{equation}}
\def\CC {{\cal C}}
\def\CF {{\cal F}}
\def\CL {{\cal L}}
\def\CO {{\cal O}}
\def\CR {{\cal R}}
\def\CO {{\cal O}}
\def\CS {{\cal S}}
\def\IH{\mathbb{H}}
\def\IN{\mathbb{N}}
\def\IR{{\mathbb{R}}}
\def\IX{{\mathbb{X}}}
\def\IZ{{\mathbb{Z}}}
\def\F\IX{\mathfrak{x}}
\def\F\IX{\mathfrak{X}}
\def\r0{\lceil r_0\rceil}
\def\ben{\begin{eqnarray}}
\def\een{\end{eqnarray}}
\def\eps{\epsilon}
\def\vt{\vartheta}
\def\figfour{

\def\JPicScale{1}
\ifx\JPicScale\undefined\def\JPicScale{1}\fi
\unitlength \JPicScale mm
\begin{picture}(130,75)(0,0)
\linethickness{0.3mm}
\qbezier(30,70)(29.98,59.59)(31.19,52.38)
\qbezier(31.19,52.38)(32.39,45.16)(35,40)
\qbezier(35,40)(37.58,34.8)(41.19,30.59)
\qbezier(41.19,30.59)(44.8,26.38)(50,22.5)
\qbezier(50,22.5)(55.19,18.56)(60,18.56)
\qbezier(60,18.56)(64.81,18.56)(70,22.5)
\qbezier(70,22.5)(75.2,26.38)(78.81,30.59)
\qbezier(78.81,30.59)(82.42,34.8)(85,40)
\qbezier(85,40)(87.61,45.16)(88.81,52.38)
\qbezier(88.81,52.38)(90.02,59.59)(90,70)
\linethickness{0.3mm}
\put(60,18.7){\line(0,1){50}}
\linethickness{0.3mm}
\multiput(60,18.4)(0.15,0.13){167}{\line(1,0){0.15}}
\linethickness{0.3mm}
\multiput(35,40)(0.15,-0.130){167}{\line(1,0){0.15}}
\linethickness{0.3mm}
\put(35,40){\line(0,1){30}}
\linethickness{0.3mm}
\put(85,40){\line(0,1){30}}
\put(65,75){\makebox(0,0)[cc]{$y := \sigma_2/\tau_2$}}

\put(110,13){\makebox(0,0)[cc]{$x := z_2/\tau_2$}}

\put(50,45){\makebox(0,0)[cc]{$\CR$}}

\put(70,45){\makebox(0,0)[cc]{$\CL$}}

\put(60,15){\makebox(0,0)[cc]{$0$}}

\put(77,23){\makebox(0,0)[cc]{${1\over 2}$}}

\put(86,35){\makebox(0,0)[cc]{${1}$}}

\put(34,35){\makebox(0,0)[cc]{${-1}$}}

\linethickness{0.3mm}
\multiput(60,18.4)(0.15,0.092){110}{\line(1,0){0.15}}

\multiput(76.5,28.5)(0.15,0.2){110}{\line(1,0){0.15}}

\linethickness{0.3mm}
\put(0,18.5){\line(1,0){130}}
\end{picture}

}
\def\figfive{

\def\JPicScale{1}
\ifx\JPicScale\undefined\def\JPicScale{1}\fi
\unitlength \JPicScale mm
\begin{picture}(130,75)(0,0)
\linethickness{0.3mm}
\qbezier(30,70)(29.98,59.59)(31.19,52.38)
\qbezier(31.19,52.38)(32.39,45.16)(35,40)
\qbezier(35,40)(37.58,34.8)(41.19,30.59)
\qbezier(41.19,30.59)(44.8,26.38)(50,22.5)
\qbezier(50,22.5)(55.19,18.56)(60,18.56)
\qbezier(60,18.56)(64.81,18.56)(70,22.5)
\qbezier(70,22.5)(75.2,26.38)(78.81,30.59)
\qbezier(78.81,30.59)(82.42,34.8)(85,40)
\qbezier(85,40)(87.61,45.16)(88.81,52.38)
\qbezier(88.81,52.38)(90.02,59.59)(90,70)
\linethickness{0.3mm}
\put(60,18.7){\line(0,1){50}}

\put(69,22){\line(0,1){46.7}}

\put(77,28.8){\line(0,1){39.9}}

\multiput(60,18.7)(0.15,0.055){60}{\line(1,0){0.15}}

\multiput(69,22)(0.15,0.13){54}{\line(1,0){0.15}}

\multiput(77,28.8)(0.15,0.22){55}{\line(1,0){0.15}}

\linethickness{0.3mm}
\linethickness{0.3mm}
\linethickness{0.3mm}
\linethickness{0.3mm}
\put(85,40){\line(0,1){28.6}}
\put(65,75){\makebox(0,0)[cc]{$y' := \sigma_2'/\tau_2'$}}

\put(110,13){\makebox(0,0)[cc]{$x' := z_2'/\tau_2'$}}


\put(65,45){\makebox(0,0)[cc]{$\CL$}}

\put(60,15){\makebox(0,0)[cc]{$0$}}

\put(77,26){\makebox(0,0)[cc]{${2}$}}

\put(71,21){\makebox(0,0)[cc]{${1}$}}

\put(86,37){\makebox(0,0)[cc]{${3}$}}


\linethickness{0.3mm}


\linethickness{0.3mm}
\put(0,18.5){\line(1,0){130}}
\end{picture}

}
\newcommand{\refb}[1]{(\ref{#1})}
\newcommand{\mr}[1]{\mathrm{#1}}
\newcommand{\wt}[1]{\widetilde{#1}}
\begin{document}
\title{Generating Function of Single Centered Black Hole Index from the
Igusa Cusp Form}
\author{Ajit Bhand$^a$, Ashoke Sen$^b$, Ranveer Kumar Singh$^c$}
\affiliation[a]{Department of Mathematics, Indian Institute of Science Education and Research Bhopal Bhopal bypass, Bhopal 462066, India}
\affiliation[b]{International Centre for Theoretical Sciences - TIFR, Bengaluru - 560089, India}
\affiliation[c]{NHETC and Department of Physics and Astronomy, Rutgers University, 
126 Frelinghuysen Rd., Piscataway NJ 08855, USA}
\emailAdd{abhand@iiserb.ac.in, ashoke.sen@icts.res.in, ranveersfl@gmail.com}

\abstract{We introduce manifestly duality invariant
generating function of the index of single centered black
holes in the heterotic string theory compactified on a six dimensional torus.
This function is obtained by subtracting, from the inverse of the Igusa
cusp form, the generating function of the index of two centered black
holes constructed from the Dedekind eta function. 
We also study the analytic
properties of this function in the Siegel upper half plane.} 
\def\rem{$\clubsuit$}
\newcommand{\cas}[1]{\marginpar{\raggedright \tiny \rem AS: #1 \rem}} 
\newcommand{\caj}[1]{\marginpar{\raggedright \tiny \textcolor{blue}{\rem AB: #1 \rem}}}    
\newcommand{\cran}[1]{\marginpar{\raggedright \tiny \textcolor{OliveGreen}{\rem RKS: #1 \rem}}}    

\maketitle

\section{Introduction and summary}
In the
heterotic string theory on $T^6\times \IR^{3,1}$, the BPS index of a class of quarter BPS dyons is known to be given by the Fourier
coefficients of the inverse of the Igusa cusp form $\Phi_{10}$ -- a weight ten
Siegel modular form for 
$\mr{Sp}(2,\mathbb{Z})$ \cite{Dijkgraaf:1996it,Shih:2005uc,LopesCardoso:2004law,David:2006yn}. 
These Fourier coefficients can be expressed as appropriate Fourier integrals
of $1/\Phi_{10}$.
The correspondence is as follows. 
The black holes in heterotic string theory on $T^6\times \IR^{3,1}$ are labeled by a
28 dimensional electric charge vector  $Q$ and a 28 dimensional magnetic charge vector
$P$. However, for torsion one dyons, for which
\be
\gcd\{ Q_i P_j - Q_j P_i\} = 1\, ,
\ee
the index is a function of only three quadratic 
combination of charges which we shall denote by $Q^2,P^2$ and $Q\cdot P$ \cite{Dabholkar:2007vk,Banerjee:2007sr}. We define
\be
m:= Q^2 / 2, \qquad n := P^2 / 2, \qquad \ell :=Q\cdot P\, , \qquad m,n,\ell\in \mathbb{Z}\,.
\ee
Let us denote the index by $d(m,n,\ell)$. Then we have
\be \label{edmnl}
d(m,n,\ell) = (-1)^{\ell+1} \int_{\mathcal{C}}
d\tau d\sigma dz\, e^{-2 \pi i\left(m\tau+n\sigma+\ell z\right)}
\frac{1}{\Phi_{10}(\tau,\sigma,z)}\, ,
\ee
where 
$\tau,\sigma,z$ are  three complex
variables 
that
parametrize the Siegel upper half space $\IH_2$:
\be
\tau_2>0, \qquad \sigma_2>0, \qquad \tau_2\sigma_2>z_2^2\, ,
\ee
where we have defined
\be
(\tau_2,\sigma_2,z_2)= {\rm Im}(\tau,\sigma,z), \qquad (\tau_1,\sigma_1,z_1)
= {\rm Re}(\tau,\sigma,z)\, .
\ee
The integration is performed over $\tau_1,\sigma_1,z_1$ from 0 to 1
keeping fixed $\tau_2,\sigma_2,z_2$. $\CC$ carries information about the values of $\tau_2,\sigma_2,z_2$ along the integration contour. 
We shall introduce the notations
\be
\Omega := \begin{pmatrix} \tau & z\cr z & \sigma \end{pmatrix}~,
\ee
and
\be
T  := \begin{pmatrix}
    m & \ell/2\\ \ell/2 & n
\end{pmatrix}\, ,
\ee
and use $\Omega$ to label the variables $(\tau,\sigma,z)$ and $T$ to label the
charges $(m,n,\ell)$.

There are however some subtleties in this result.
Heterotic string theory on $T^6\times \mathbb{R}^{3,1}$ has a moduli space. 
For a given set of
charges $(m,n,\ell)$,
the moduli space
can be divided into chambers separated by codimension one walls known as
walls of marginal stability.
While the index remains constant inside a given chamber,
it can undergo jumps as we move from one chamber to another \cite{Sen:2007vb,Dabholkar:2007vk}. 
These jumps
can be traced to the fact that for a given set of charges $(m,n,\ell)$, the index gets
contribution from both, single centered black holes and
the two centered black holes carrying the same total charge. While the contribution to the
index from single centered black holes is independent of the moduli, the two centered
black holes can cease to exist as we cross a wall of marginal stability
from one chamber to another, and for this reason
their contribution to the index can jump \cite{Sen:2007pg,Cheng:2007ch}. 
On the other hand since $1/\Phi_{10}$ 
has poles, its Fourier transform is also not unique since the Fourier integrals of
$1/\Phi_{10}$ depends on the choice of integration contour. In
particular, the $(\tau_2,\sigma_2,z_2)$ space labelling the integration contour 
can also be divided into chambers and the
Fourier integrals performed in different chambers give different results. 
It turns out \cite{Sen:2007vb} that for a given set of charges,
there is a one-to-one map from the
chambers in the moduli space to the chambers in the $(\tau_2,\sigma_2, z_2)$ space
in the region where
\be
\det \rm Im\, \Omega>{1\over 4} \, .
\ee
In this region the walls separating the different chambers  in the $(\tau_2,\sigma_2, z_2)$ space
lie along
\be \label{ewall}
m_1\tau_2 - n_1\sigma_2 + j\, z_2 = 0, \qquad m_1,n_1,j\in\mathbb{Z}, \qquad
m_1n_1+{j^2\over 4}
={1\over 4}\, .
\ee
The Fourier coefficient $d(m,n,\ell)$ of $\Phi_{10}^{-1}$
in a given chamber in the $(\tau_2,\sigma_2,z_2)$ space, defined by
\refb{edmnl} with $\tau_2,\sigma_2,z_2$ lying inside that chamber,  
gives the index in the
corresponding chamber in the moduli space. 

As discussed before, the index typically includes contribution
from both single centered black holes and two centered black holes. However,
for a given $(m,n,\ell)$ satisfying
\be \label{erangepositive}
m\ge 0, \qquad n\ge 0, \qquad 4mn-\ell^2\ge 0\, ,
\ee
there is a special chamber in the $(\tau_2,\sigma_2,z_2)$ space, known as the
attractor chamber, 
where the contribution from two centered black holes vanish and we get single centered
black hole index $d^*(m,n,\ell)$ carrying charges $(m,n,\ell)$. 
The precise expression for $d^*(m,n,\ell)$ is \cite{Cheng:2007ch}
\begin{equation}\label{edstarTint}
\wt d^{*}(m,n,\ell) =\left\{ \begin{split}
&(-1)^{\ell+1} \int_{\mathcal{C}_{m,n,\ell}} d\tau d\sigma dz\, e^{-2 \pi i\left(m\tau+n\sigma+\ell z\right)} \frac{1}{\Phi_{10}(\Omega)}~, \ \ \hbox{{for} $m\ge 0$, $n\ge 0$, $4mn-\ell^2\ge 0$~,} \\
& 0~, \qquad \hbox{\rm otherwise}~, \,
    \end{split}\right. 
\end{equation}
where the contour $\mathcal{C}_{m,n,\ell}$ is given as 
\begin{equation}
\begin{split}
\mathcal{C}_{m,n,\ell}:\quad 
\tau_2=\frac{2n}{\varepsilon}, \quad \sigma_2=\frac{2m}{\varepsilon}, \quad z_2=-\frac{\ell}{\varepsilon},\qquad
0 \leq \tau_1,\sigma_1,z_1<1~.
\end{split}
\label{eq 3.3}
\end{equation}
Here $\varepsilon>0$ is a real positive number, sufficiently small so that  
$\det \rm Im \, \Omega$ is 
larger than $1/4$.

Special attention is needed for zero discriminant states satisfying $4mn-\ell^2=0$, $m,n\ge 0$,
since in this
case \refb{eq 3.3} gives $\tau_2\sigma_2-z_2^2=0$ where $\Phi_{10}$ is ill defined. We resolve
this issue by deforming $m,n,\ell$ in \refb{eq 3.3} by small real numbers so that $(\tau_2,\sigma_2,z_2)$
is brought into the interior of $\IH_2$ where $\Phi_{10}$ is well defined and,
at the same time, take $\varepsilon$ to be small enough so that $\det \rm Im \, \Omega$
is larger than $1/4$: 
\be \label{edeformedcontour}
\tau_2=\frac{2n+\varepsilon_1}{\varepsilon}, \quad \sigma_2=\frac{2m+\varepsilon_2}{\varepsilon}, \quad 
z_2=-\frac{\ell+\varepsilon_3}{\varepsilon}, \qquad 0<|\varepsilon_1|,|\varepsilon_2|,|\varepsilon_3|\ll 1\, .
\ee
We
then define $d^*(m,n,\ell)$ using \refb{edstarTint}. One may worry that this may not give a unique
answer since $1/\Phi_{10}$ has poles on the `walls of marginal stability', 
and different deformations of $m,n,\ell$ may land us on
different sides of a wall of marginal stability, producing different results for the integral in
\refb{edstarTint}. We shall show in Section \ref{sgenerating} that this does not happen and we get an unambiguous result
for $d^*(m,n,\ell)$. The same procedure needs to be applied to deal with the poles of $1/\Phi_{10}$
that lie on the contour $\mathcal{C}_{m,n,\ell}$ for general $m,n,\ell$, {\it e.g.} for $\ell=0$ the
pole at $z=0$ lies on the integration contour. We simply deform the 
contour to avoid the pole and the result does not depend on how we deform the contour.

Equation \refb{eq 3.3} can be written as
\begin{equation}\label{edstarTintOmega}
d^{*}(T) =\left\{\begin{split}
    &(-1)^{T_{12}+1} \int_{{\rm Im}\, \Omega = 2\gamma_0^t T \gamma_0/\varepsilon} 
d^3( {\rm Re}\,\Omega)\, e^{-2 \pi i{\rm Tr} (\Omega T)} \frac{1}{\Phi_{10}(\Omega)}~,\quad 
\hbox{for $T\ge 0$}~, \\
&0~, \qquad \hbox{otherwise}\, ,
    \end{split}\right. 
\end{equation}
where $T\ge 0$ means that $T$ has non-negative eigenvalues and,
\be
\gamma_0:= \begin{pmatrix} 0 & -1 \cr 1 & 0\end{pmatrix}\, .
\ee
It follows from \refb{edstarTintOmega} and the identities
\be
\Phi_{10} (\gamma\Omega \gamma^t) = \Phi_{10}(\Omega), \quad
\gamma \gamma_0 \gamma^t=\gamma_0, \quad \gamma^t \gamma_0 \gamma=\gamma_0~,\quad d^3(\mr{Re}\,\Omega)=d^3(\mr{Re}\,\gamma\Omega\gamma^t)~, 
\quad \hbox{for} \ \gamma\in \mathrm{PSL}(2,\IZ)\, ,
\ee
that $d^*$ is invariant under an $\mr{SL}(2,\mathbb{Z})$ transformation: 
\be\label{esl2zT}
d^*(\gamma^t T\gamma) = d^*(T), \qquad \gamma\in \mr{SL}(2,\mathbb{Z})\, .
\ee
For $\gamma=\begin{pmatrix} a & b\cr c & d\end{pmatrix}$ the transformation
$T\mapsto \gamma^t T \gamma$ takes the form:
\be\label{e2}
\begin{pmatrix}
    m\\ n\\ \ell
\end{pmatrix}\mapsto \begin{pmatrix}
    a^2 m + c^2 n + ac \ell\cr b^2 m + d^2 n + bd\ell\cr
2 ab m + 2 cd n + (ad + bc)\ell
\end{pmatrix}\, .
\ee

Our focus in this paper will be on the generating function of single centered
black holes. We define
\be\label{edefFO}
F(\Omega) = \sum_{m,n,\ell\in {\mathbb{Z}}} d^*(m,n,\ell)\, (-1)^{\ell+1}\, 
e^{2 \pi i\left(m\tau+n\sigma+\ell z\right)}=\sum_T (-1)^{2 T_{12}+1}\, d^*(T) e^{2\pi i {\rm Tr}(\Omega T)}\, .
\ee
If we can prove the absolute convergence of $F$, then it follows from \refb{edefFO} and \refb{esl2zT} that 
$F(\Omega)$ is also $\mr{SL}(2,\mathbb{Z})$ invariant:
\be
F(\gamma\Omega\gamma^t)=F(\Omega)\, .
\ee

We also consider a second way of defining the single centered index:
\begin{eqsp}\label{eq:deg_multi_single}
\wt d^*(m,n,\ell)=d(m,n,\ell)-d^{\text{two}}(m,n,\ell)~,    
\end{eqsp}
where $d(m,n,\ell)$ is the total index in a given chamber in the moduli space,
computed from \refb{edmnl} and 
$d^{\text{two}}(m,n,\ell)$ denotes the contribution to the index from
two-centered black holes in the same
chamber of the moduli space.
We introduce the generating function for $\wt d^*$:
\be\label{edeftildeF}
\wt F(\Omega) = \sum_T \, (-1)^{2 T_{12}+1} \, \wt d^*(T) \, e^{2\pi i \mr{Tr}(\Omega T)}\, .
\ee
From the results on $d^{\text{two}}(m,n,\ell)$ found in 
\cite{Dabholkar:2007vk,Sen:2007pg,Cheng:2007ch,1104.1498,1210.4385,
Chowdhury:2019mnb,LopesCardoso:2020pmp} one gets
the following expression for $\wt F(\Omega)$:
\begin{eqnarray}
\label{eguessfinintro}
\wt F(\Omega) &=& {1\over \Phi_{10}(\Omega)}
- {1\over 2} \sum_{\big{(}\begin{smallmatrix} a & b\cr c & d\end{smallmatrix}\big{)}\in 
\mr{PSL}(2,\mathbb{Z})}
\left(e^{\pi i \{ac\tau + bd\sigma + (ad+bc)z\}} - e^{-\pi i \{ac\tau + bd\sigma + (ad+bc)z\}}\right)^{-2} \nonumber \\ && 
\hskip 1in 
\times\ f_+(a^2\tau +b^2\sigma +2abz) \ f_+(c^2\tau+d^2\sigma+2cd z) \nonumber \\ 
&-&  \sum_{p\ge 0} f_p f_{-1} \sum_{r>0} r 
\sum_{\big{(}\begin{smallmatrix} a & b\cr c & d\end{smallmatrix}\big{)}\in  \mr{PSL}(2,\mathbb{Z})}
 H(ac\tau_2 + bd\sigma_2 + (ad+bc)z_2) \nonumber \\ && \hskip 1in \times\
H\left(-ac\tau_2 - bd\sigma_2 - (ad+bc)z_2 + ra^2\tau_2 + rb^2\sigma_2+2rab z_2
\right) \,
\nonumber \\ && \hskip 1in \times 
e^{2\pi i \{(pa^2-c^2+r ac)\tau+(pb^2-d^2+r bd)\sigma+
(2pab-2cd + r(ad+bc) )z\}} 
\nonumber \\ 
&-&  f_{-1}^2 \sum_{r>0} r 
\sum_{\big{(}\begin{smallmatrix} a & b\cr c & d\end{smallmatrix}\big{)}\in G_r\backslash \mr{PSL}(2,\mathbb{Z})}\hskip .1in 
 \bigg\{\prod_{n=-\infty}^\infty 
H(a_nc_n\tau_2 + b_nd_n\sigma_2 + (a_nd_n+b_nc_n)z_2)
\bigg\}
\nonumber \\ && \hskip 1in \times \
e^{2\pi i \{(-a^2-c^2+r ac)\tau+(-b^2-d^2+r bd)\sigma+
(-2ab-2cd + r(ad+bc)) z\}}  \, ,
\end{eqnarray}
where $H$ is the Heaviside function
\be
H(x) = \begin{cases} \hbox{1 for $x > 0$}\cr
\hbox{0 for $x\le 0$} \end{cases} \, ,
\ee
the coefficients $f_p$ and the function $f_+(\tau)$ are defined through the expansion
\be \label{e122}
\eta(\tau)^{-24} = \sum_{p=-1}^\infty f_p \, e^{2\pi ip \tau}, \qquad f_+(\tau) 
= \sum_{p=0}^\infty f_p \, e^{2\pi i p\tau}\, ,
\ee
where $\eta$ is the Dedekind eta function, 
$G_r$ is the cyclic group generated by 
\be
\begin{pmatrix}0 & -1\cr 1 & -r \end{pmatrix}~,
\ee
and 
\be\label{edefanbnintro}
\begin{pmatrix} a_n & b_n\cr c_n & d_n\end{pmatrix}:=\begin{pmatrix} 0 & -1\cr 1 & -r\end{pmatrix}^n
\begin{pmatrix} a & b\cr c & d\end{pmatrix}\, .
\ee
In Section \ref{section3}, we shall provide a physical derivation of \eqref{eguessfinintro} based on \cite{Dabholkar:2007vk,Sen:2007pg,Cheng:2007ch,1104.1498,1210.4385,Chowdhury:2019mnb}.

We prove the following results:
\begin{enumerate}
\item In Section \ref{sgenerating}, we show that $d^*(m,n,\ell)$ defined in \refb{edstarTint} is unambiguous even when $T$ has a zero
eigenvalue and when a pole falls on the contour $\mathcal{C}_{m,n,\ell}$ since the residue
at the pole vanishes.
\item In Section \ref{sgenerating}, we also argue that the sum over $m,n,\ell$ in \refb{edefFO} converges absolutely and uniformly on compact subsets of the domain $\det \rm Im \, \Omega>1/4$.
As a result $F(\Omega)$ is analytic for $\det \rm Im \, \Omega>1/4$.
\item In Section \ref{stildeFconverge}, in Theorem \ref{thm:S_conv}, we show that the sum over $a,b,c,d,r$ in each of the terms in
\refb{eguessfinintro} converges absolutely and uniformly on compact subsets of
$\IH_2$ except on the subspaces
\be\label{epoleint}
m_1\tau - n_1\sigma + m_2 + j\, z = 0, \qquad
m_1,n_1,m_2\in \mathbb{Z}, \qquad m_1n_1  +
{j^2\over 4} = {1\over 4}\, .
\ee
\item In Theorem \ref{thm:S_mero_poles}, we show that on the subspaces \refb{epoleint} 
the sums in \refb{eguessfinintro}  have poles.
In theorem \ref{thm:F_rel_Igusa} we show that these poles precisely cancel the poles
of $1/\Phi_{10}(\Omega)$ at \refb{epoleint}.
\item In Theorem \ref{thm:sing_deg_gen}, we show that despite the presence of the Heaviside functions in its definition,
$\wt F(\Omega)$ admits a meromorphic continuation to $\IH_2$
with double poles at 
\begin{eqnarray}
&& n_2 (\tau\sigma - z^2) + m_1\tau - n_1\sigma + m_2 + j\, z = 0~, \nonumber \\
&&
m_1,n_1,m_2,n_2\in \mathbb{Z}, \qquad n_2\ge 1, \qquad m_1n_1 + m_2 n_2 +
{j^2\over 4} = {1\over 4}\, .
\end{eqnarray}
The behaviour at these poles coincide with those of $1/\Phi_{10}(\Omega)$. 
However $1/\Phi_{10}(\Omega)$ has additional poles for 
$n_2=0$ which are absent in $\wt F(\Omega)$.
\item In Theorem \ref{thm:sing_cent_att_cont}, we show that the two different definitions of single centered index and their generating functions
coincide:
\be\label{etildeFFequality}
d^*(T)=\wt d^*(T)\quad \ \forall \ T, \qquad F(\Omega)=\wt F(\Omega)\, ,
\ee
including contributions from states where $T$ has a negative eigenvalue.
This agrees with the analytical arguments of \cite{1104.1498} and numerical results of
\cite{Chowdhury:2019mnb}.
Since both $F(\Omega)$ and $\wt F(\Omega)$ are defined by analytic continuation from their
domains of convergence, a more precise statement will be that the analytic continuations
of $F(\Omega)$ and $\wt F(\Omega)$ agree.
\end{enumerate}

It is known that for fixed $m$, the coefficients $d^*(m,n,\ell)$ can be identified as Fourier
coefficients of a mock Jacobi form of index $m$ \cite{Dabholkar:2012nd}, {\it albeit} 
with some restrictions on $m,n,\ell$ \cite{Bhand:2023rhm,Banerjee:2025bqi}. In contrast $d^*(m,n,\ell)$ are the
Fourier coefficients of $F(\Omega)$ with no exceptions. $F(\Omega)$ has $\mathrm{PSL}(2,\mathbb{Z})$-invariance but does not have the full $\mathrm{Sp}(2,\mathbb{Z})$ invariance. It is tempting to
speculate that $F(\Omega)$ is closely related to a mock Siegel modular form \cite{bringmann2012kohnen,raum,bringmann2016almost}, although it is not
known at present if we can add appropriate non-holomorphic part to $F(\Omega)$ to
construct a harmonic form for $\mathrm{Sp}(2,\mathbb{Z})$.

While the analysis of this paper is restricted exclusively to the index of quarter BPS
black holes in heterotic string theory compactified in $T^6$, results similar to the ones
described in \refb{edmnl}, \refb{edstarTint} are also known to hold for more general 
class of string 
compactifications with sixteen unbroken supersymmetries
\cite{Jatkar:2005bh,David:2006ji,David:2006yn,David:2006ru,David:2006ud,Sen:2007qy}. 
We expect that even in these
general class of theories we should be able to define the generating function for single
centered black hole index and prove their convergence. We leave this for future work.

\section{Construction of the generating function  $\wt F(\Omega)$} \label{section3}

Our goal in this section is
to construct the generating function of the index of single-centered black holes by taking the difference between the complete generating function for black holes 
and the generating function for two-centered black holes. We
begin with $1/\Phi_{10}$ in a given chamber, and then subtract from it, the 
generating function of two centered black holes in the same chamber.

The inverse $\Phi_{10}^{-1}$ of the Igusa cusp form is given by \cite{Dijkgraaf:1996it,David:2006ji} 
\be \label{es5}
{1\over \Phi_{10}(\tau, \sigma, z)} 
=e^{-2\pi i(\tau+\sigma+z)}
\prod_{j,k,l\in \mathbb{Z}, \, 
4kl-j^2\ge -1\atop
k,l\ge 0, j<0 \, \hbox{\tiny for}\, k=l=0 }
\left( 1 - e^{2\pi i (k\tau+l\sigma + j z)}\right)^{-c(4kl - j^2)}~,
\ee
where the coefficients $c(s)$ are defined via the equation
\be \label{es6}
8\, \sum_{i=2}^4 {\vt_i(\tau, z)^2 \over \vt_i(\tau, 0)^2}
= \sum_{n,j} c(4n - j^2) \, e^{2\pi i (n\tau + j z)} \, ,
\ee
$\vt_i$'s being the Jacobi theta functions defined by the expnasion 
\begin{eqsp}\label{eq:theta_exp}
\begin{aligned}
& \vartheta_1(\tau, z)=\sum_{n \in \mathbb{Z}}(-1)^n q^{\frac{1}{2}\left(n-\frac{1}{2}\right)^2} \zeta^{n-\frac{1}{2}}, \quad \vartheta_2(\tau, z)=\sum_{n \in \mathbb{Z}} q^{\frac{1}{2}\left(n-\frac{1}{2}\right)^2} \zeta^{n-\frac{1}{2}} \\
& \vartheta_3(\tau, z)=\sum_{n \in \mathbb{Z}} q^{\frac{n^2}{2}} \zeta^n, \quad \vartheta_4(\tau, z)=\sum_{n \in \mathbb{Z}}(-1)^n q^{\frac{n^2}{2}} \zeta^n~,\quad q=e^{2\pi i\tau}~,~~\zeta=e^{2\pi i z}~.
\end{aligned}
\end{eqsp}
It follows from \refb{es6} and \eqref{eq:theta_exp} that
\be \label{es7}
c(s)=0  \quad \hbox{for} \quad s< -1\, , \qquad c(-1)=2\, .
\ee 
$\Phi_{10}$ is a Siegel modular form of weight ten of Sp(2,$\IZ$)
-- a holomorphic function on $\IH_2$ satisfying,
\begin{eqsp}
    \Phi_{10}((A\Omega+B)(C\Omega+D)^{-1})=[\mr{det}(C\Omega+D)]^{10}\Phi_{10}(\Omega)~,\quad \begin{pmatrix}
        A&B\\C&D
    \end{pmatrix}\in\mr{Sp}(2,\IZ)~.
\end{eqsp}

\noindent
$1/\Phi_{10}(\Omega)$ has double poles at \cite{Dijkgraaf:1996it,LopesCardoso:2004law,Sen:2007qy} 
\be \label{epoles}
n_2(\tau\sigma-z^2) - m_1\tau +n_1\sigma + jz
+ m_2=0, \quad
m_1,n_1,m_2,n_2,j\in \IZ,
\quad m_1 n_1+m_2n_2 + {j^2\over 4}={1\over 4}\, .
\ee
For $n_2\neq 0$, we can rewrite the hypersurface \eqref{epoles} as 
\begin{eqsp}\label{eq:poles_new}
    n_2\left[\left(\tau+\frac{n_1}{n_2}\right)\left(\sigma-\frac{m_1}{n_2}\right)-\left(z-\frac{j}{2n_2}\right)^2\right]+\frac{1}{4n_2}=0~.
\end{eqsp}
Let us write 
\begin{eqsp}
    \tilde{\tau}_1=\tau_1+\frac{n_1}{n_2}~,\quad \tilde{\sigma}_1=\sigma_1-\frac{m_1}{n_2}~,\quad \tilde{z}_1=z_1-\frac{j}{2n_2}~.
\end{eqsp}
Then the imaginary part of \eqref{eq:poles_new} gives 
\begin{eqsp}\label{eq:im_poles_new}
    \tilde{\tau}_1=\frac{2\tilde{z}_1z_2-\tau_2\tilde{\sigma}_1}{\sigma_2}~,
\end{eqsp}
and the real part of \eqref{eq:poles_new} gives 
\begin{eqsp}\label{eq:re_poles_new}
    n_2\left[\tilde{\tau}_1\tilde{\sigma}_1-\tau_2\sigma_2-\tilde{z}_1^2+z_2^2\right]+\frac{1}{4n_2}=0~.
\end{eqsp}
Combining \eqref{eq:im_poles_new} and \eqref{eq:re_poles_new} we get 
\begin{eqsp}\label{eq:im_re_poles_new}
{\tau_2\over \sigma_2}\tilde{\sigma}_1^2 + \tilde{z}_1^2 - 2\, {z_2\over \sigma_2}\,
\tilde{\sigma}_1 \tilde{z}_1 = {1\over 4n_2^2} - (\tau_2\sigma_2 - z_2^2)~.    
\end{eqsp}
The left hand side of \eqref{eq:im_re_poles_new}
is a positive definite quadratic form in the variables $\tilde{\sigma}_1,\tilde{\tau}_1$ for $\tau_2\sigma_2>z_2^2$. Therefore,
a solution to these equations exists only when the right hand side is positive, i.e.
when $\det \mathrm{Im} \,\Omega= \tau_2\sigma_2-z_2^2 \le 1/4n_2^2$. In particular, a sufficient condition for absence of poles with $n_2\neq 0$ is $\det \mathrm{Im} \,\Omega= \tau_2\sigma_2-z_2^2 > 1/4$.

This shows that, for  $\det\rm Im\,\Omega > 1/4$ we
only need to examine the poles corresponding to $n_2=0$ in \refb{epoles}. 
One of these, corresponding
to $m_1=m_2=n_1=0$, $j=1$, is the pole at $z=0$.
Near the double pole $z\to 0$, we have
\be\label{epolestructure}
{1\over \Phi_{10}(\Omega)} = (e^{\pi i z}- e^{-\pi i z})^{-2} f(\sigma) f(\tau)\, ,
\ee
where 
\be\label{edeff}
f(\tau)= \eta(\tau)^{-24} = \sum_{p\ge -1} f_p \, e^{2\pi i p\tau}\, ,
\ee
is the partition function of half-BPS states. The behaviour of $1/\Phi_{10}$ near other
double poles of the type given in \refb{epoles} with $n_2=0$ is obtained by $\mr{SL}(2,\mathbb{Z})$
transformation of \refb{epolestructure} given by $\Omega\to\gamma\Omega\gamma^t$ since one can
show that every other pole for $n_2=0$ can be related to the pole at $z=0$ using
the $\mathrm{PSL}(2,\mathbb{Z})$-map (see Lemma \ref{lemma:sl2z_linear_poles}). 
The Fourier coefficients of the RHS of \eqref{epolestructure} in a given chamber are known to give the contribution of two centered
black holes to the index and are responsible for the jump in the index
across the walls of marginal stability \cite{Sen:2007pg,Cheng:2007ch}. 
Thus the naive expectation would be that to get the generating function of
single centered index, we should remove from $1/\Phi_{10}$ the contribution on the right
hand side of \refb{epolestructure} and its $\mr{PSL}(2,\mathbb{Z})$ images.
This leads to the following guess for the generating function:
\begin{eqnarray}
\label{eguess}
&& {1\over \Phi_{10}(\Omega)}
- {1\over 2} \sum_{\big{(}\begin{smallmatrix} a & b\cr c & d\end{smallmatrix}\big{)}\in \mr{PSL}(2,\mathbb{Z})}
\left(e^{\pi i \{ac\tau + bd\sigma + (ad+bc)z\}} - e^{-\pi i \{ac\tau + bd\sigma + (ad+bc)z\}}\right)^{-2} \nonumber \\ && 
\hskip 1in 
\times\ f(a^2\tau +b^2\sigma +2abz) \ f(c^2\tau+d^2\sigma+2cd z) \nonumber \\ 
&=&  {1\over \Phi_{10}(\Omega)}
- {1\over 2} \sum_{\big{(}\begin{smallmatrix} a & b\cr c & d\end{smallmatrix}\big{)}\in \mr{PSL}(2,\mathbb{Z})}
\left(e^{\pi i \{ac\tau + bd\sigma + (ad+bc)z\}} - e^{-\pi i \{ac\tau + bd\sigma + (ad+bc)z\}}\right)^{-2}  \nonumber \\ &&
\hskip 1in \times \ \sum_{p,q=-1}^\infty f_p f_q \,
 e^{2\pi i \, p\{a^2\tau +b^2\sigma +2abz\}} \,  
e^{2\pi i q\{c^2\tau+d^2\sigma+2cd z\}} \, ,
\end{eqnarray}
where the factor of $1/2$ compensates for the fact that
the summand in (\ref{eguess}) remains 
invariant under $(a,b,c,d)\to (-c,-d,a,b)$ and hence each term effectively appears
twice. 

It is easy to see however that this cannot be the correct result.
For example, if we take $(a,b,c,d)=(1,0,c,1)$ with $c\in\mathbb{Z}$,
then for $q=-1$, the term in the third and fourth line of \refb{eguess} will involve a sum of the form:
\be \label{e314}
\sum_{c\in\mathbb{Z}} (e^{\pi i(c\tau+z)}-e^{-\pi i (c\tau+z)})^{-2} 
e^{-2\pi i (c^2\tau+\sigma + 2cz)}\,e^{2\pi ip\tau}\, .
\ee
Since $\tau$ has positive imaginary part, the sum over $c$ will diverge. 
As we shall see, the correct result will not suffer from such divergences.

We shall now carry out a more detailed analysis on why \refb{eguess} might 
differ from the partition function of two centered black hole states.
First consider the coefficient
of the terms involving $f_p f_q$ for $p,q\ge 0$. By expanding 
$\left(e^{\pi i \{ac\tau + bd\sigma + (ad+bc)z\}} - e^{-\pi i \{ac\tau + bd\sigma + (ad+bc)z\}}\right)^{-2}$
in a power series expansion in $e^{\pm\pi i \{ac\tau + bd\sigma + (ad+bc)z\}}$ in the respective domains
of convergence, we can express the term subtracted from $1/\Phi_{10}$ in \eqref{eguess} as
\begin{eqnarray}\label{e216xy}
&& -{1\over 2} 
\sum_{\big{(}\begin{smallmatrix} a & b\cr c & d\end{smallmatrix}\big{)}\in \mr{PSL}(2,\mathbb{Z})} 
 \sum_{r>0} r \bigg\{
e^{2\pi i r \{ac\tau + bd\sigma + (ad+bc)z\}} H(ac\tau_2 + bd\sigma_2 + (ad+bc)z_2) 
\nonumber \\ && \hskip 1in +\
e^{-2\pi i r (ac\tau + bd\sigma + (ad+bc)z)} H(-(ac\tau_2 + bd\sigma_2 + (ad+bc)z_2))\bigg\} 
\nonumber \\ && \hskip 1in \times\sum_{p,q\ge 0}f_p f_q e^{2\pi i p(a^2\tau+b^2\sigma+2abz)} e^{2\pi i q(c^2\tau+d^2\sigma+2cdz)}
~. 
\end{eqnarray} 

This can be reorganized as
\begin{eqnarray} \label{ereorg}
&& -{1\over 2}
\sum_{\big{(}\begin{smallmatrix} a & b\cr c & d\end{smallmatrix}\big{)}\in \mr{PSL}(2,\mathbb{Z})} 
\sum_{r>0}\sum_{p,q\ge 0} r  H(ac\tau_2 + bd\sigma_2 + (ad+bc)z_2) \,
f_p f_q \nonumber \\ && \hskip 1in \times 
e^{2\pi i \{(pa^2+qc^2+r ac)\tau+(pb^2+qd^2+r bd)\sigma+
(2pab+2qcd + r(ad+bc) )z\}}  \nonumber \\ &&
-{1\over 2}
\sum_{\big{(}\begin{smallmatrix} a & b\cr c & d\end{smallmatrix}\big{)}\in \mr{PSL}(2,\mathbb{Z})}
\sum_{r>0} \sum_{p,q\ge 0}r  H(-ac\tau_2 - bd\sigma_2 - (ad+bc)z_2) \,
f_p f_q \nonumber \\ && \hskip 1in \times 
e^{2\pi i \{(pa^2+qc^2-r ac)\tau+(pb^2+qd^2-r bd)\sigma+
(2pab+2qcd - r(ad+bc) )z\}} \, .
\end{eqnarray}
The term in the first two lines of \refb{ereorg} may be identified as the subtraction of the
contribution from a bound state of the
half-BPS states carrying charges $(aM, bM)$ and $(cN,dN)$, with
\be
M^2=2p, \quad N^2=2q, \quad M\cdot N=r, \quad r>0\, ,
\ee
so that the total charge $(Q,P)=(aM+cN,bM+dN)$
satisfies
\be
Q^2 = 2 (pa^2+qc^2+r ac), \quad P^2 = 2(pb^2+qd^2+r bd), \quad Q\cdot P =2pab+2qcd + r(ad+bc)\, .
\ee
Such bound states are known to exist in the chamber $ac\tau_2 + bd\sigma_2 + (ad+bc)z_2>0$ for
$r>0$. Similarly the term in the last two lines of \refb{ereorg} 
may be identified as representing the subtraction of a bound state of the
half BPS states carrying charges $(aM, bM)$ and $(cN,dN)$, with
\be
M^2=2p, \quad N^2=2q, \quad M\cdot N=-r, \quad r>0\, ,
\ee
so that the total charge $(Q,P)=(aM+cN,bM+dN)$
satisfies
\be
Q^2 = 2 (pa^2+qc^2-r ac), \quad P^2 = 2(pb^2+qd^2-r bd), \quad Q\cdot P =(2pab+2qcd - r(ad+bc)\, .
\ee
Such bound states are known to exist in the chamber $ac\tau_2 + bd\sigma_2 + (ad+bc)z_2<0$ for
$r>0$. Thus these subtraction terms remove from the Fourier expansion of $1/\Phi_{10}$
the contributions from two-centered bound states carrying such charges in any chamber of the
moduli space. Hence for these terms, \eqref{e216xy} gives the correct subtraction.

The terms involving one or two powers of $f_{-1}$ require special attention. Let us first consider the term proportional to $f_p f_{-1}$ with $p\ge 0$ and $f_{-1}f_q$ term with $q\geq 0$. In this case, 
following the same steps that led to \refb{ereorg}, 
the subtraction terms in \eqref{eguess} can be organized as:
\begin{eqnarray} \label{ereorgtwo}
&& - 
\sum_{\big{(}\begin{smallmatrix} a & b\cr c & d\end{smallmatrix}\big{)}\in \mr{PSL}(2,\mathbb{Z})} 
\sum_{r>0} \sum_{p\ge 0} f_p f_{-1}r  H(ac\tau_2 + bd\sigma_2 + (ad+bc)z_2) \,
 \nonumber \\ && \hskip 1in \times 
e^{2\pi i \{(pa^2-c^2+r ac)\tau+(pb^2-d^2+r bd)\sigma+
(2pab-2cd + r(ad+bc) )z\}}  \nonumber \\ &&
- 
\sum_{\big{(}\begin{smallmatrix} a & b\cr c & d\end{smallmatrix}\big{)}\in \mr{PSL}(2,\mathbb{Z})}
\sum_{r>0} \sum_{p\ge 0}f_p f_{-1}r  H(-ac\tau_2 - bd\sigma_2 - (ad+bc)z_2) \,
 \nonumber \\ && \hskip 1in \times 
e^{2\pi i \{(pa^2-c^2-r ac)\tau+(pb^2-d^2-r bd)\sigma+
(2pab-2cd - r(ad+bc) )z\}}   \, .
\end{eqnarray}
Note that we have used the $(a,b,c,d)\to (c,d,-a,-b)$, $p\leftrightarrow q$ 
symmetry to remove the overall
factor of $1/2$ and drop the $f_{-1}f_q$ term.
The terms in \eqref{ereorgtwo} have the same interpretation as in the case of the
terms proportional to $f_p f_{q}$ with $q$ replaced by $-1$, 
namely they represent subtraction of the contribution
from the bound states of charges
$(aM, cM)$ and $(cN,dN)$, with
\be
M^2=2p, \quad N^2=-2, \quad M\cdot N=\pm r, \quad r>0\, ,
\ee
so that the total charge $(Q,P)=(aM+cN,bM+dN)$
satisfies
\be\label{ebcount}
Q^2 = 2 (pa^2-c^2\pm r ac), \quad P^2 = 2(pb^2-d^2\pm r bd), \quad Q\cdot P =(2pab-2cd \pm r(ad+bc)\, .
\ee
However, it was shown in \cite{1104.1498,1210.4385} that 
this overcounts the bound state contribution. To see this, let us
make a change of variable $(a,b,c,d)\to (a,b,c-ra,d-rb)$ to express 
the term given in the third 
and fourth line of \refb{ereorgtwo} as
\begin{eqnarray} \label{emetamor}
&& - 
\sum_{\big{(}\begin{smallmatrix} a & b\cr c & d\end{smallmatrix}\big{)}\in \mr{PSL}(2,\mathbb{Z})}
\sum_{r>0} \sum_{p\ge 0}f_p f_{-1} r  H\left(-ac\tau_2 - bd\sigma_2 - (ad+bc)z_2 + ra^2\tau_2 + rb^2\sigma_2+2rab z_2
\right) \,
\nonumber \\ && \hskip 1in \times 
e^{2\pi i \{(pa^2-c^2+r ac)\tau+(pb^2-d^2+r bd)\sigma+
(2pab-2cd + r(ad+bc)) z\}}\, .
\end{eqnarray}
We now note that the exponent in the summand has exactly the same form as the exponent in the
the second line of \refb{ereorgtwo} and hence represent the same charges.
Naively these two contributions will get added. However, it was shown in 
\cite{1104.1498,1210.4385} that these two contributions should be regarded as coming
from the same bound states and 
furthermore, for given $(\tau_2,\sigma_2,z_2)$, the bound state 
exists only when the Heaviside functions appearing in \eqref{emetamor} and the first line of \eqref{ereorgtwo} are both non-zero. 
Thus \refb{ereorgtwo} should be
replaced by,
\begin{eqnarray} \label{erewritetwo}
&& - 
\sum_{\big{(}\begin{smallmatrix} a & b\cr c & d\end{smallmatrix}\big{)}\in \mr{PSL}(2,\mathbb{Z})}
\sum_{r>0} \sum_{p\ge 0} f_p f_{-1} r  H(ac\tau_2 + bd\sigma_2 + (ad+bc)z_2) \nonumber \\ && \hskip 1in \times\
H\left(-ac\tau_2 - bd\sigma_2 - (ad+bc)z_2 + ra^2\tau_2 + rb^2\sigma_2+2rab z_2
\right) \,
\nonumber \\ && \hskip 1in \times 
e^{2\pi i \{(pa^2-c^2+r ac)\tau+(pb^2-d^2+r bd)\sigma+
(2pab-2cd + r(ad+bc) )z\}} 
 \, .
\end{eqnarray}

We can verify that in this form, the divergence of the type mentioned in  \refb{e314}
disappears. Indeed, if we set $(a,b,c,d)=(1,0,c,1)$, the sum in 
\refb{erewritetwo} takes the form
\be \label{erewritecheck}
- 
\sum_{c\in \IZ} 
\sum_{r>0} \sum_{p\ge 0} r   \,
f_p f_{-1} e^{2\pi i \{(p-c^2+r c)\tau-\sigma+
(-2c + r)z\}} 
H(c\tau_2  + z_2)  H\left(-c\tau_2 -z_2 +
r\tau_2 \right)
\, . 
\ee
We now see that for fixed $r$, the sum over $c$ is restricted, and hence the
sum over $c$ no longer diverges.
In Section \ref{stildeFconverge} we shall prove the convergence of the 
full sum over $a,b,c,d$, $r$ and $p$.

Finally, we turn to the terms proportional to $f_{-1}^2$. 
We start from 
\begin{eqnarray} \label{ef1analysis}
&& -{1\over 2} f_{-1}^2
{\sum_{\big{(}\begin{smallmatrix} a & b\cr c & d\end{smallmatrix}\big{)}\in \mr{PSL}(2,\mathbb{Z})}}
\sum_{r>0} r  H(ac\tau_2 + bd\sigma_2 + (ad+bc)z_2) \,
 \nonumber \\ && \hskip 1in \times 
e^{2\pi i \{(-a^2-c^2+r ac)\tau+(-b^2-d^2+r bd)\sigma+
(-2ab-2cd + r(ad+bc) )z\}}  \nonumber \\ &&
-{1\over 2} f_{-1}^2
{\sum_{\big{(}\begin{smallmatrix} a & b\cr c & d\end{smallmatrix}\big{)}\in \mr{PSL}(2,\mathbb{Z})}}
\sum_{r>0} r  H(-ac\tau_2 - bd\sigma_2 - (ad+bc)z_2) \,
\nonumber \\ && \hskip 1in \times 
e^{2\pi i \{(-a^2-c^2-r ac)\tau+(-b^2-d^2-r bd)\sigma+
(-2ab-2cd - r(ad+bc) )z\}} \, .
\end{eqnarray}
We now note that the trasformation $(a,b,c,d)\to (-c,-d,a,b)$ exchanges the two terms. 
Therefore, we can drop the factor of $1/2$ and drop the second term in the sum. This gives:
\begin{eqnarray} \label{ef1analysistwo}
&& - f_{-1}^2
{\sum_{\big{(}\begin{smallmatrix} a & b\cr c & d\end{smallmatrix}\big{)}\in \mr{PSL}(2,\mathbb{Z})}}
\sum_{r>0} r\,  H(ac\tau_2 + bd\sigma_2 + (ad+bc)z_2) \,
 \nonumber \\ && \hskip 1in \times 
e^{2\pi i \{(-a^2-c^2+r ac)\tau+(-b^2-d^2+r bd)\sigma+
(-2ab-2cd + r(ad+bc) )z\}}   \, .
\end{eqnarray}
In this new representation the transformation $(a,b,c,d)\to (a,b,c-ra,d-rb)$ described above \refb{emetamor} is no longer available since that related the terms we have dropped to the terms we have kept. However, we can combine the transformations $(a,b,c,d)\to (-c,-d,a,b)$ and  $(a,b,c,d)\to (a,b,c-ra,d-rb)$ as 
\be \label{ef1analysisthree}
\begin{pmatrix} a & b\cr c & d\end{pmatrix} \to  \begin{pmatrix} 0 & -1 \cr 1 & 0\end{pmatrix} \begin{pmatrix} 1 & -r \cr 0 & 1\end{pmatrix}\begin{pmatrix} a & b\cr c & d\end{pmatrix}= \begin{pmatrix} 0 & -1 \cr 1 & -r\end{pmatrix}\begin{pmatrix} a & b\cr c & d\end{pmatrix}~.
\ee
This keeps the summand in \refb{ef1analysistwo} unchanged.
Due to the symmetry between $\sigma$ and $\tau$, we also have another symmetry:
\be \label{ef1analysisfour}
\begin{pmatrix} a & b\cr c & d\end{pmatrix} \to \begin{pmatrix} 0 & -1 \cr 1 & 0\end{pmatrix} \begin{pmatrix} 1 & 0 \cr -r & 1\end{pmatrix}  \begin{pmatrix} a & b\cr c & d\end{pmatrix}= \begin{pmatrix} r & -1 \cr 1 & 0\end{pmatrix}\begin{pmatrix} a & b\cr c & d\end{pmatrix}\, .
\ee
By the same argument as before, the bound states corresponding to 
matrices $(\begin{smallmatrix} a & b\cr c & d\end{smallmatrix})$ related by 
these transformations represent the same physical state and
should be counted only once\cite{1104.1498,1210.4385}. Thus in the sum over $a,b,c,d$
in \refb{ef1analysistwo}, we must identify the matrices 
$(\begin{smallmatrix} a & b\cr c & d\end{smallmatrix})$ related by these transformations. 
Since the matrices given in \refb{ef1analysisthree} and \refb{ef1analysisfour} are inverses of each other up to an overall multiplication by $-1$, we need to sum over all $\mr{PSL}(2,\mathbb{Z})$ matrices up to left multiplication by the group $G_r$ generated by the matrix $(\begin{smallmatrix} 0 & -1\\1  & -r\end{smallmatrix})$. Furthermore, 
these bound states exist in only those chambers of the moduli space where the 
Heaviside function in \refb{ef1analysistwo}, and all the other Heaviside functions
related to the one in \refb{ef1analysistwo} by replacing the matrix 
$(\begin{smallmatrix} a & b\cr c & d\end{smallmatrix})$ by an element of $G_r$
multiplying this matrix from the left, are non-zero. Thus
the
the Heaviside function in \refb{ef1analysistwo} must be replaced by the product of infinite number of Heaviside functions where in  the argument we replace $(a,b,c,d)$ by $(a_n,b_n, c_n, d_n)$, where
\be\label{edefanbn}
\begin{pmatrix} a_n & b_n\cr c_n & d_n\end{pmatrix}:=\begin{pmatrix} 0 & -1\cr 1 & -r\end{pmatrix}^n
\begin{pmatrix} a & b\cr c & d\end{pmatrix}\,,\quad n\in\IZ .
\ee
Thus the net contribution may be expressed as
\begin{eqnarray} \label{ereorgthreen}
&-&  f_{-1}^2 \sum_{r>0} r 
\sum_{\big{(}\begin{smallmatrix} a & b\cr c & d\end{smallmatrix}\big{)}\in G_r\backslash \mr{PSL}(2,\mathbb{Z})}\hskip .1in 
 \bigg\{\prod_{n=-\infty}^\infty 
H(a_nc_n\tau_2 + b_nd_n\sigma_2 + (a_nd_n+b_nc_n)z_2)
\bigg\}
\nonumber \\ && \hskip 1in \times \
e^{2\pi i \{(-a^2-c^2+r ac)\tau+(-b^2-d^2+r bd)\sigma+
(-2ab-2cd + r(ad+bc)) z\}}  \, .
\end{eqnarray}
The sum is well defined since the summand is invariant under a change in the coset representative. 
Note that since $G_r$ depends on $r$, we are forced to exchange the order of sum over $r$ and sum over $a,b,c,d$. This is part of the prescription in the definition of $\wt{F}(\Omega)$.

Using \refb{e216xy}, \refb{erewritetwo} and \refb{ereorgthreen}, 
we obtain 
\begin{eqnarray}
\label{eguessfin}
\wt F(\Omega) &=& {1\over \Phi_{10}(\Omega)}
- {1\over 2} \sum_{\big{(}\begin{smallmatrix} a & b\cr c & d\end{smallmatrix}\big{)}\in 
\mr{PSL}(2,\mathbb{Z})}
\left(e^{\pi i \{ac\tau + bd\sigma + (ad+bc)z\}} - e^{-\pi i \{ac\tau + bd\sigma + (ad+bc)z\}}\right)^{-2} \nonumber \\ && 
\hskip 1in 
\times\ f_+(a^2\tau +b^2\sigma +2abz) \ f_+(c^2\tau+d^2\sigma+2cd z) \nonumber \\ 
&-&  
\sum_{\big{(}\begin{smallmatrix} a & b\cr c & d\end{smallmatrix}\big{)}\in  \mr{PSL}(2,\mathbb{Z})}\sum_{r>0}  r\sum_{p\ge 0} f_p f_{-1}
 H(ac\tau_2 + bd\sigma_2 + (ad+bc)z_2) \nonumber \\ && \hskip 1in \times\
H\left(-ac\tau_2 - bd\sigma_2 - (ad+bc)z_2 + ra^2\tau_2 + rb^2\sigma_2+2rab z_2
\right) \,
\nonumber \\ && \hskip 1in \times 
e^{2\pi i \{(pa^2-c^2+r ac)\tau+(pb^2-d^2+r bd)\sigma+
(2pab-2cd + r(ad+bc) )z\}} 
\nonumber \\ 
&-&  f_{-1}^2 \sum_{r>0} r\, 
\sum_{\big{(}\begin{smallmatrix} a & b\cr c & d\end{smallmatrix}\big{)}\in G_r\backslash \mr{PSL}(2,\mathbb{Z})}\hskip .1in 
 \bigg\{\prod_{n=-\infty}^\infty 
H(a_nc_n\tau_2 + b_nd_n\sigma_2 + (a_nd_n+b_nc_n)z_2)
\bigg\}
\nonumber \\ && \hskip 1in \times \
e^{2\pi i \{(-a^2-c^2+r ac)\tau+(-b^2-d^2+r bd)\sigma+
(-2ab-2cd + r(ad+bc)) z\}}  \, .
\end{eqnarray}
The sum over $a,b,c,d$ will be organized by first summing over $a,b,c,d$ subject to the condition $|a|,|b|,|c|,|d|\leq K$ for some positive integer $K$ so that the sum is finite and then summing over all positive integers $K$.  
In Section \ref{stildeFconverge} we shall prove the convergence of the sum given in
\refb{eguessfin}.
\section{Analysis of the generating function $F(\Omega)$} \label{sgenerating}

Our goal in this section is to examine some properties of the series $F(\Omega)$ defined in \refb{edefFO}.

First we shall show that $d^*(m,n,\ell)$ defined in \refb{edstarTint} gives an unambiguous result 
for zero discriminant states if we follow the procedure described below \refb{eq 3.3}. For this 
we recall that the source of the ambiguity are the poles of $1/\Phi_{10}$ in the region where 
$\det\rm Im\, \Omega$ is 
larger than $1/4$. As discussed below \eqref{eq:im_re_poles_new}, in this region the poles are known to occur on the walls \refb{ewall}.
Furthermore, all such poles are known to be related by a $\mr{PSL}(2,\mathbb{Z})$-transformation (see Lemma \ref{lemma:sl2z_linear_poles}); so we
can focus on the pole corresponding to $m_1=n_1=0$, $j=1$. This corresponds to the
subspace
\be 
z_2=0\, .
\ee
Our goal will be to check if the existence of poles on this subspace could cause possible
ambiguity in the determination of $d^*(m,n,\ell)$ via \refb{edstarTint}.  Since
\refb{edeformedcontour} tells
us that for charge $(m,n,\ell)$ we take $z_2=-(\ell+\varepsilon_3)/\varepsilon$, 
the wall at $z_2=0$ can affect the computation of $d^*(m,n,\ell)$ only for $\ell=0$. Otherwise
the sign of $z_2$ will be determined by the sign of $\ell$.
On the other hand, we see from \eqref{epolestructure} that $1/\Phi_{10}$ has a double pole at $z=0$. Hence 
the jump in the integral in \refb{edstarTint} as we cross the subspace
$z_2=0$ is proportional to the $z$-derivative of the exponential factor in
\refb{edstarTint} that brings down a factor of $\ell$.
In particular, we see from \eqref{epolestructure}, \eqref{edeff} that the residue is proportional to \cite{Sen:2007qy}
\be \label{ejumpx}
\ell\, f_m \, f_n~.
\ee
Since \refb{ejumpx} vanishes for $\ell=0$ we see that the ambiguity in defining the
integration contour does not affect the computation of $d^*(m,n,\ell)$ for zero
discriminant states.

Note that the same argument is valid even for $4mn>\ell^2$. If $\ell=0$, the contour
$\mathcal{C}_{m,n,\ell}$ has $z_2=0$ and hence the $z=0$ pole of $1/\Phi_{10}$ lies on
the integration contour. But the residue at the pole is proportional to $\ell$ and
hence vanishes for $\ell=0$. So there is no ambiguity in computing $d^*(m,n,\ell)$.
$\mr{PSL}(2,\mathbb{Z})$ invariance of the expression then shows that there are also no ambiguities
due to other poles of $1/\Phi_{10}$.

We shall now examine the convergence of the sum appearing in \refb{edefFO}.
We get from \refb{edefFO}:
\be \label{e25x}
|F(\Omega)| \leq  \sum_{m,n,\ell\in {\mathbb{Z}}} |d^*(m,n,\ell)|\, 
e^{-2 \pi \left(m\tau_2+n\sigma_2+\ell z_2\right)}
\le \sum_{m,n,\ell\in {\mathbb{Z}}\atop m,n\ge 0} |d^*(m,n,\ell)| \, e^{-2\pi \left(
m\tau_2+n\sigma_2-|\ell|  |z_2|\right)}\, ,
\ee
where we used the fact that $d^*(m,n,\ell)$ is non-zero only for $m,n\ge 0$.
Now when all components of $T$ grow together, we have \cite{Sen:2007qy}
\begin{equation} \label{eboundtotal}
|d^*(T)| < C\, e^{2\pi\sqrt{\mr{det}(T)}} = C\, e^{\pi \sqrt{4mn-\ell^2}} ~,   
\ee
for some positive constant $C$.
This gives
\begin{equation} \label{e27x}
  |F(\Omega)|\leq C\sum_{m,n,\ell\in\IZ\atop m,n\ge 0}e^{-2\pi \left(
m\tau_2+n\sigma_2-|\ell|  |z_2|\right) + \pi \sqrt{4mn-\ell^2}}~.  
\end{equation}
In order to put a bound on the summand, we shall maximize the exponent
subject to the condition
\be  \label{e28xx}
m\tau_2 + n\sigma_2 = A\, ,
\ee
where $A$ is a constant. 
This can be done using a
Lagrange multiplier. Extremization with respect to $m,n,\ell$ leads to the
equations
\ben
&&-2\pi \tau_2(1+\lambda) + \pi \, {2n \over \sqrt{4mn-\ell^2}} =0~, \nonumber \\ &&
-2\pi \sigma_2(1+\lambda) + \pi \, {2m \over \sqrt{4mn-\ell^2}} =0~, \nonumber \\ &&
2\pi |z_2| -\pi \, {|\ell|\over \sqrt{4mn-\ell^2}}=0\, ,
\een
where $\lambda$ is the Lagrange multiplier.
These equations, together with \refb{e28xx}, give
\be
m = {A\over 2\tau_2}, \quad n = {A\over 2\sigma_2}, \quad \ell^2 = {16mn z_2^2
\over 1 + 4z_2^2}= {4\, A^2\, z_2^2\over \sigma_2\tau_2(1 + 4z_2^2)}\, .
\ee
Substituting this into \refb{e27x} and using the inequality
\be
(\tau_2\sigma_2 - z_2^2) \ge {1\over 4} (1+\epsilon)\, , 
\ee
for some small but positive real number $\eps$, we get
\be
|F(\Omega)| \le C \sum_{m,n,\ell\in\IZ\atop m,n\ge 0}
e^{-2\pi A \left(1 - \sqrt{1 - {\epsilon\over 4\tau_2\sigma_2}}\right)}
= C \sum_{m,n,\ell\in\IZ\atop m,n\ge 0}
e^{-2\pi (m\tau_2+n\sigma_2) \left(1 - \sqrt{1 - {\epsilon\over 4\tau_2\sigma_2}}\right)}\, .
\ee
Now for given $m,n$, the number of possible values of $\ell$ subject to the condition 
$4mn-\ell^2\ge 0$ is bounded from above
by $(2\sqrt{4mn}+1)$. This gives
\be
|F(\Omega)| \le  C  \sum_{m,n\in \mathbb{Z}\atop m,n\ge 0}\, (2\sqrt{4mn}+1) \, 
e^{-2\pi (m\tau_2+n\sigma_2) 
\left(1 - \sqrt{1 - {\epsilon\over 4\tau_2\sigma_2}}\right)} <\infty\, .
\ee
This proves the convergence of $F(\Omega)$ from the region where $m,n,\ell$ grow together. 

The growth formula for $d^*(m,n,\ell)$ takes a different form in other directions when only a subset of the charges
$(m,n,\ell)$ become large. For example we can consider the Cardy-like 
limit where only $m$ becomes large
keeping fixed $n$ and $\ell$. In such cases the index grows as $ m^\alpha 
\exp[c\sqrt{m}]$ for constants $\alpha$ and $c$. For given $m\ge n$ and 
$\det T\ge 0$ we have $|\ell|\le 4mn$ and 
the number of matrices $T$
grows polynomially with $m$. Presence of the $e^{-2\pi m\tau_2}$ factor in
\refb{e25x} now shows that the sum over $m$ converges.

We have not carefully examined the convergence of the sum in all possible directions\footnote{This may be possible with the help of a more detailed
results given in \cite{LopesCardoso:2021aem}.}, but the
analysis described above 
proves that if the bound \refb{eboundtotal} on $d^*(T)$ holds for all charges, and in
particular $d^*(T)$ vanishes for $4mn-\ell^2<0$, then
the sum converges for 
$\det\mr{Im}\,\Omega>1/4$.
We can also relax the bound \refb{eboundtotal} by multiplying the right-hand side
by finite degree polynomials in $m,n,\ell$ without affecting the convergence
property.
In Section \ref{ssection5} 
we shall show the equality of $F(\Omega)$ and $\wt F(\Omega)$ and also show that
$\wt F(\Omega)$ is analytic in the domain $\det\mr{Im}\,\Omega>1/4$.
This would prove convergence of \refb{edefFO}.

\section{Convergence of $\wt F(\Omega)$} \label{stildeFconverge}

In this section we shall prove the convergence of the sum in \eqref{eguessfin}. 
Let us define 
\begin{equation} \label{ef1f2f3}
\begin{split}
    \CF_{1}&:=\sum_{\big{(}\begin{smallmatrix} a & b\cr c & d\end{smallmatrix}\big{)}\in \mr{PSL}(2,\mathbb{Z})} 
\left(e^{\pi i \{ac\tau + bd\sigma + (ad+bc)z\}} - e^{-\pi i \{ac\tau + bd\sigma + (ad+bc)z\}}\right)^{-2} \\ &\hskip 1in 
\times\ f_+(a^2\tau +b^2\sigma +2abz) \ f_+(c^2\tau+d^2\sigma+2cd z)~,
\\
\CF_2&:=
\sum_{\big{(}\begin{smallmatrix} a & b\cr c & d\end{smallmatrix}\big{)}\in \mr{PSL}(2,\mathbb{Z})}\sum_{r>0} r \sum_{p\ge 0} f_p f_{-1} 
 H(ac\tau_2 + bd\sigma_2 + (ad+bc)z_2)
 \\ & \hskip 1in \times\
H\left(-ac\tau_2 - bd\sigma_2 - (ad+bc)z_2 + ra^2\tau_2 + rb^2\sigma_2+2rab z_2
\right)\\ & \hskip 1in \times 
e^{2\pi i \{(pa^2-c^2+r ac)\tau+(pb^2-d^2+r bd)\sigma+
(2pab-2cd + r(ad+bc) )z\}}~, 
\\
\CF_3&:=  f_{-1}^2 \sum_{r>0} r 
\sum_{\big{(}\begin{smallmatrix} a & b\cr c & d\end{smallmatrix}\big{)}\in G_r\backslash \mr{PSL}(2,\mathbb{Z})}\hskip .1in 
 \bigg\{\prod_{n=-\infty}^\infty 
H(a_nc_n\tau_2 + b_nd_n\sigma_2 + (a_nd_n+b_nc_n)z_2)
\bigg\}
\\ & \hskip 1in \times \
e^{2\pi i \{(-a^2-c^2+r ac)\tau+(-b^2-d^2+r bd)\sigma+
(-2ab-2cd + r(ad+bc)) z\}}  \, ,
\end{split}    
\end{equation}
so that \eqref{eguessfin} can be written as
\begin{equation}\label{eq:tildeFCF123}
    \widetilde{F}(\Omega)=\frac{1}{\Phi_{10}}-\frac{1}{2}\CF_1-\CF_2-\CF_3~.
\end{equation}



\begin{figure}
    \centering
    \begin{tikzpicture}[scale=3]

  \draw[->] (-1.6,0) -- (1.6,0) node[right] {$x:=\frac{z_2}{\tau_2}$};
  \draw[->] (0,0) -- (0,2.2) node[above] {$y:=\frac{\sigma_2}{\tau_2}$};

  \draw[thick,domain=-1.5:1.5,samples=200] plot (\x,{\x*\x}) node[right]{} ;

  \draw[thick] (0,0) -- (-1,1);
  \draw[thick] (0,0) -- (1,1);

  \draw[thick] (-1,1) -- (-1,2);
  \draw[thick] (1,1) -- (1,2);
  \draw[thick] (0,0) -- (0.5,0.25);
  \draw[thick] (0.5,0.25) -- (1,1);
\draw[thick] (0,0) -- (0,2);
  \node at (-0.5,1.4) {$\mathcal{R}$};
  \node at (0.5,1.4) {$\mathcal{L}$};
  \node at (-1.1,1) {$-1$};
  \node at (1.1,1) {$1$};
\node at (0,-0.1) {$0$};
  \node at (0.65,0.25) {$\frac{1}{2}$};
\end{tikzpicture}
    \caption{Chambers in the $x$-$y$ plane separated by pole locations of $1/\Phi_{10}$.}
    \label{figfour}
\end{figure}
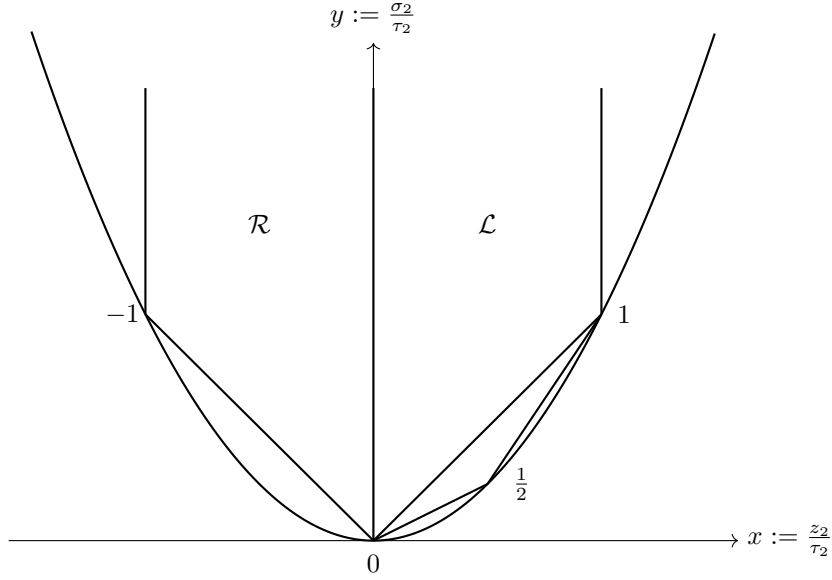

For this analysis, it will be useful to develop a geometric picture for the matrices
$\begin{pmatrix} a & b\cr c & d\end{pmatrix}\in \mr{SL}(2,\mathbb{Z})$. For this, let us define
\be
x := {z_2\over \tau_2}, \qquad y := {\sigma_2\over \tau_2}\, .
\ee
Then the imaginary parts of \refb{epoles} for $n_2=0$ take the form:
\be \label{epolesimaginary}
- m_1 +n_1\, y + j\, x
=0, \quad
m_1,n_1,j\in \IZ,
\quad m_1 n_1 + {j^2\over 4}={1\over 4}\, .
\ee
On the other hand, the condition $\sigma_2\tau_2>z_2^2$ takes the form
\be\label{eparabola}
y > x^2\, .
\ee
\refb{eparabola} gives a region in the $x$-$y$ plane bounded below by a parabola and the straight lines represented by 
\refb{epolesimaginary} divide this region into infinite number of chambers. One can show that the lines represented by \eqref{epolesimaginary} intersect only on the parabola and hence the vertices 
of these chambers either lie on the parabola or at $y=\infty$. The vertices
lying on the parabola will be labelled by their $x$ values and the vertex at $y=\infty$ will
be denoted by $\infty$. Using this notation we can specify a chamber by its vertices. For
example, the chamber labeled as $\CR$ in Figure \ref{figfour}
will be labeled as $(-1,0,\infty)$ and the chamber
labeled as $\CL$ in Figure \ref{figfour} is labeled as $(0,1,\infty)$. This figure contains
another chamber $(0,1/2,1)$, bounded by the lines $x=y$, $x=2y$ and $3x=2y+1$.

Consider now the $\mr{SL}(2,\mathbb{Z})$ transformation 
\begin{equation} \label{edefprime}
    \tau'_2:=a^2\tau_2+b^2\sigma_2+2abz_2,\quad \sigma'_2:=c^2\tau_2+d^2\sigma_2+2cdz_2,\quad z_2':=ac\tau_2+bd\sigma_2+(ad+bc)z_2~.
\end{equation}
Under this, $x$ and $y$
transform as
\be\label{esl2zonxy}
x\mapsto x'= {ac+bd y+(ad+bc)x\over a^2+b^2 y+2ab x}, \qquad
y\mapsto y'={c^2+d^2 y+2cd x\over a^2+b^2 y+2ab x}
\ee
This gives
\be
- m_1' +n_1'\, y' + j'\, x' = {-m_1+ n_1 y + j x\over a^2+b^2 y+2ab x}\, ,
\ee
with
\be
m_1 =m_1'a^2 - n_1' c^2  - j'ac , \qquad n_1 = - m_1' b^2 + n_1'd^2 + j' bd,
\qquad j = - 2 m_1' ab + 2 n_1' cd + (ad+bc) j'\, .
\ee
One can show that 
\begin{equation}
m_1 n_1+{j^2\over 4}=(ad-bc)^2\left(m_1'n_1'+\frac{j'^2}{4}\right)=m_1'n_1'+\frac{j'^2}{4}~.    
\end{equation}
Thus
\be
m_1' n_1'+{j^{\prime 2}\over 4} = \frac{1}{4} \quad \iff \quad m_1n_1+{j^2\over 4}=\frac{1}{4}\, ,
\ee
and as a result the chambers themselves are mapped
to each other under the transformation \refb{esl2zonxy}. The map can be read out from the map
between the vertices of the chamber. For example the chamber $\CR=(-1,0,\infty)$ is mapped to
\be \label{eCRmap}
\left({c-d\over a-b}, {c\over a}, {d\over b}\right)\, ,
\ee
under \refb{esl2zonxy}.

Since in our expression for $\CF_{i},~i=1,2,3$, replacing $\Omega$ by $\gamma\Omega
\gamma^t$  has the effect of multiplying the $\mr{PSL}(2,\mathbb{Z})$ matrix
$\begin{pmatrix} a&b\cr c&d\end{pmatrix}$ by $\gamma$ from the right, and we
are summing over all $\mr{PSL}(2,\mathbb{Z})$ matrices,
$\CF_{i},~i=1,2,3$ are formally invariant under the transformation
\refb{edefprime}:
\be \label{esl3zoffi}
\CF_i(\gamma\Omega\gamma^t)=\CF_i(\Omega), \qquad\gamma\in\mr{PSL}(2,\IZ)~,\quad i=1,2,3\, .
\ee
This equation is formal since we have not yet proved the convergence of the sum over
$a,b,c,d$. However,
once we prove convergence by restricting $(\tau_2,\sigma_2,z_2)$
in a particular chamber, we can use \refb{esl3zoffi} to find a finite result for
$\CF_i(\gamma\Omega\gamma^t)$ in any other chamber.
Hence we can restrict the original choice of $(x,y)$ to one particular
chamber for proving convergence of the sum. We shall take this to be the 
chamber $\CR$. 
We will prove the following theorem:
\begin{thm}\label{thm:S_conv}
The sum over $a,b,c,d$ and $r$ in \eqref{eguessfin} converges absolutely and uniformly on compact subsets of $\CR$. 
\end{thm}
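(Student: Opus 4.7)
The plan is to control each of $\CF_1,\CF_2,\CF_3$ on a compact $K\subset\CR$ by exploiting the positive-definiteness of the binary quadratic form $Q_\Omega(u,v):=u^2\tau_2+2uv\,z_2+v^2\sigma_2$. Because $K$ is interior to $\CR$, the determinant $\tau_2\sigma_2-z_2^2$ is bounded above on $K$ by some $\Delta_K$, and the smallest eigenvalue of $Q_\Omega$ has a uniform positive lower bound $\lambda_K>0$. Under $\gamma=\bigl(\begin{smallmatrix}a&b\\c&d\end{smallmatrix}\bigr)\in\mr{SL}(2,\IZ)$ the transformed imaginary parts of \refb{edefprime} therefore satisfy $\tau'_2\ge\lambda_K(a^2+b^2)$, $\sigma'_2\ge\lambda_K(c^2+d^2)$, and the determinant identity $\tau'_2\sigma'_2-(z'_2)^2=\tau_2\sigma_2-z_2^2$ yields $|z'_2|^2\ge\lambda_K^2(a^2+b^2)(c^2+d^2)-\Delta_K$ whenever the right-hand side is non-negative.

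First I would reorganise each $\CF_i$ into a common shape. Using
\be
\bigl(e^{\pi iu}-e^{-\pi iu}\bigr)^{-2}=\sum_{r\ge 1}r\,e^{2\pi iru\,\mr{sgn}(u_2)},\qquad u_2\neq 0,
\ee
together with the Fourier expansion \refb{e122} for $f_+$, the term $\CF_1$ acquires the same structural shape as $\CF_2$ but with $p,q\ge 0$ rather than one index being $-1$. After taking absolute values, every summand in each of the three series is bounded by $r\,|f_p f_q|\,e^{-2\pi(p\tau'_2+q\sigma'_2+r|z'_2|)}$, where the sign of the $z'_2$-contribution is chosen so that the exponent is non-positive (in $\CF_2,\CF_3$ the Heaviside functions enforce this choice automatically). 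Summing over $p,q\ge 0$ and $r\ge 1$ at fixed $\gamma$, the Hardy--Ramanujan-type growth $|f_p|=O(e^{4\pi\sqrt p})$ together with $\tau'_2,\sigma'_2\ge\lambda_K$ collapse these inner sums into a uniformly bounded multiplicative constant times $e^{-2\pi|z'_2|}/(1-e^{-2\pi|z'_2|})^2$.

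It then remains to sum this reduced expression over $\gamma\in\mr{SL}(2,\IZ)$ (or over $G_r$-cosets in the case of $\CF_3$). I would split the sum into a \emph{dangerous} piece with $|z'_2|<\tfrac{1}{2}$ and a \emph{regular} piece with $|z'_2|\ge\tfrac{1}{2}$. The inequality of the first paragraph forces the dangerous piece to satisfy $(a^2+b^2)(c^2+d^2)\le(\Delta_K+\tfrac{1}{4})/\lambda_K^2$, leaving only finitely many matrices; each contributes a uniformly bounded quantity since $K$ is strictly interior to $\CR$ and hence at positive distance from each wall $z'_2=0$ (the poles of $1/\Phi_{10}$ in this chamber). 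On the regular piece, for $(a^2+b^2)(c^2+d^2)\ge 2\Delta_K/\lambda_K^2$ one has $|z'_2|\ge\lambda_K\sqrt{(a^2+b^2)(c^2+d^2)/2}$; combined with the standard polynomial-growth estimate
\be
\#\bigl\{\gamma\in\mr{SL}(2,\IZ):\sqrt{(a^2+b^2)(c^2+d^2)}\le M\bigr\}=O(M^2),
\ee
this bounds the tail by $\sum_M M^2\,e^{-\pi\lambda_K M/\sqrt{2}}<\infty$. All constants depend only on $K$, so the convergence is uniform.

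The main obstacle is $\CF_3$. Only for $r=1$ is the cyclic group $G_r$ finite (order six); for $r\ge 2$ it is infinite---parabolic at $r=2$, hyperbolic for $r\ge 3$---so the summand carries the infinite product $\prod_{n\in\IZ}H\bigl(a_nc_n\tau_2+b_nd_n\sigma_2+(a_nd_n+b_nc_n)z_2\bigr)$. The plan here is to show that this product is non-zero only when every element of the $G_r$-orbit of $(a,b,c,d)$ places the corresponding $z'_2$ in the positive half-space, which cuts each $G_r$-coset down to a sparse ``attractor cone'' on which the $\CF_2$-style estimates above apply verbatim. One then checks that the subsequent sum over $r>0$ does not reintroduce divergence: the factor $r\,e^{-2\pi r|z'_2|}$ already extracted from the hyperbolic prefactor furnishes genuine geometric decay in $r$ on the regular piece (where $|z'_2|\ge\tfrac{1}{2}$), while the dangerous piece for each $r$ is finite by the same reasoning as in the $\CF_1,\CF_2$ cases.
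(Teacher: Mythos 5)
Your treatment of $\CF_1$ is sound and is essentially a repackaging of the paper's argument: your eigenvalue bound $\tau_2'\ge\lambda_K(a^2+b^2)$, $\sigma_2'\ge\lambda_K(c^2+d^2)$ together with the determinant identity plays the role of the paper's inequality $|z_2'|\ge C(\Omega)\,\mu(a,b,c,d)$, and for $\CF_1$ the Fourier indices satisfy $p,q\ge 0$, so the exponent $-2\pi(p\tau_2'+q\sigma_2'+r|z_2'|)$ really is non-positive and the inner sums collapse as you claim. The gap appears the moment one index equals $-1$. In $\CF_2$ the absolute value of the summand is $|f_p|\,e^{-2\pi p\tau_2'}\,e^{+2\pi\sigma_2'}\,e^{-2\pi r z_2'}$: the factor $e^{+2\pi\sigma_2'}$ \emph{grows} without bound over $\gamma$, so the inner sums do not reduce to ``a bounded constant times $e^{-2\pi|z_2'|}/(1-e^{-2\pi|z_2'|})^2$.'' The Heaviside functions only give $r> r_0:=z_2'/\tau_2'$, and using $\sigma_2'-r_0z_2'=(\tau_2\sigma_2-z_2^2)/\tau_2'$ one is left with $\lceil r_0\rceil\, e^{-2\pi(\lceil r_0\rceil-r_0)z_2'}$ for the leading term $r=\lceil r_0\rceil$. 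Since $\lceil r_0\rceil-r_0$ can be arbitrarily small while $\lceil r_0\rceil\sim z_2'/\tau_2'$ grows, this term need not decay at all, and summing the resulting polynomially growing quantities over the $O(M^2)$ matrices with $\mu\sim M$ diverges unless one shows that such near-integer values of $r_0$ occur sparsely. That is exactly the content of the paper's splitting of $\CF_2^{<}$ into $\CF_2^{<\eps}$ and $\CF_2^{>\eps}$ and of the reparametrization $c=an-k$, $d=bn-\ell$, none of which your proposal supplies or replaces.

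The situation for $\CF_3$ is worse and your sketch does not close it. There $p=q=-1$, so the summand has absolute value $e^{2\pi B}$ with $B=\tau_2'+\sigma_2'-rz_2'$ containing \emph{two} growing positive terms; the ``$\CF_2$-style estimates'' you propose to apply verbatim were never established, and even granting them they would not control $B$. Negativity of $B$ is not a consequence of a single Heaviside constraint but of the full infinite product over the $G_r$-orbit, which (as the paper shows) forces $c/a$, $d/b$, $(c-d)/(a-b)$ into $S_1\cup S_2\cup S_3$, i.e.\ into a neighbourhood of the interval $[u_c,u_c^{-1}]$ between the fixed points of $g_r$; only then does one get $B\le(2/r-r/2+1-\tfrac12\delta_{r,3})C(\Omega)\mu(a,b,c,d)<0$, which simultaneously provides the decay in $r$ and in the matrix size. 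Moreover, for $r\ge 2$ the group $G_r$ is infinite, so before any estimate one must exhibit an explicit finite-multiplicity set of coset representatives (the paper's $\tilde\CS_3$ with $2/r\le c/a<r/2$, plus the finitely many chambers touching $\infty$ for the $S_1\cup S_2$ part); summing a bound that is invariant under $G_r$ over all of $\mathrm{PSL}(2,\IZ)$ diverges trivially. Your phrase ``sparse attractor cone'' points in the right direction, but identifying that cone and proving the quantitative bound on $B$ on it is the actual theorem, and it is missing.
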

We will prove this by showing that the sums $\CF_1,\CF_2,\CF_3$ converges absolutely and uniformly on compact subsets of $\CR$. This will be done in a series of lemmas and propositions. 
\subsection{Convergence of $\CF_1$}\label{sec:conv_CF1}
As discussed above, we shall work in the chamber
\be\label{echamber}
\CR:\qquad z_2<0, \qquad z_2>-\tau_2, \qquad z_2>-\sigma_2\, .
\ee
Let us define the series $\CF_{00}$:
\begin{eqsp}\label{eq:CFij_def}
\CF_{00}&:=
\sum_{\big{(}\begin{smallmatrix} a & b\cr c & d\end{smallmatrix}\big{)}\in \mr{PSL}(2,\mathbb{Z})}
\left(e^{\pi i \{ac\tau + bd\sigma + (ad+bc)z\}} - e^{-\pi i \{ac\tau + bd\sigma + (ad+bc)z\}}\right)^{-2}\, .    
\end{eqsp}

\begin{prop}\label{prop:conv_F00}
The series $\CF_{00}$ converges absolutely and uniformly on compact subsets of the $\CR$-chamber.    
\end{prop}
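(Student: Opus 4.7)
The key is to dominate the summand by an exponentially decaying quantity controlled by a norm of $\gamma \in \mr{PSL}(2,\IZ)$. Setting $w(\gamma) := ac\tau + bd\sigma + (ad+bc)z$ and $w_2(\gamma) := \mr{Im}\,w(\gamma)$, the identity $|\sin(\pi w)|^2 = \sin^2(\pi w_1) + \sinh^2(\pi w_2)$ yields the pointwise bound
\be
\Bigl|\bigl(e^{\pi i w(\gamma)} - e^{-\pi i w(\gamma)}\bigr)^{-2}\Bigr| = \frac{1}{4|\sin(\pi w(\gamma))|^2} \;\leq\; \frac{1}{4\sinh^2(\pi w_2(\gamma))}\,,
\ee
which decays like $e^{-2\pi|w_2(\gamma)|}$ whenever $|w_2(\gamma)|$ is bounded away from zero. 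Everything therefore hinges on a quantitative lower bound for $|w_2(\gamma)|$.

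Writing $v := (a,b)^T$, $u := (c,d)^T$, and $M := \mr{Im}\,\Omega$, one has $w_2(\gamma) = v^T M u$, and since $ad-bc = 1$ implies $\det(\gamma M \gamma^T) = \det M$, we obtain the crucial identity
\be
w_2(\gamma)^2 \;=\; (v^T M v)(u^T M u) - \det M\, .
\ee
Positive-definiteness of $M$ gives $v^T M v \geq \lambda_{\min}(M)|v|^2$ and likewise for $u$, so
\be
|w_2(\gamma)| \;\geq\; \frac{\lambda_{\min}(M)}{\sqrt{2}}\,|v|\,|u|
\ee
once $|v|\,|u| \geq \sqrt{2\det M}/\lambda_{\min}(M)$, with constants uniform on any compact $K \subset \CR$ because $\lambda_{\min}(M)$ and $\det M$ are continuous in $\Omega$ and bounded below on $K$. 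For the finitely many remaining $\gamma$, the fact that $\CR$ is the \emph{open} chamber containing no wall $w_2(\gamma)=0$ implies $|w_2(\gamma)|>0$ on $K$, and compactness then produces a uniform positive lower bound $\delta_K>0$.

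Combining these estimates, the series is majorised by $\sum_\gamma C\, \exp\bigl(-2\pi c_K |v(\gamma)|\,|u(\gamma)|\bigr)$ up to finitely many bounded exceptional terms. Parametrising $\gamma \in \mr{PSL}(2,\IZ)$ by a primitive $v \in \IZ^2$ together with $u = u_0 + kv$ satisfying $\det(v,u_0) = 1$, the Lagrange identity $|u_0\wedge v|^2 = |u_0|^2|v|^2 - (u_0\cdot v)^2 = 1$ gives
\be
|u(k)|^2 \;=\; \frac{1}{|v|^2} + |v|^2(k - k^\ast)^2,\qquad k^\ast := -\frac{u_0\cdot v}{|v|^2}\, .
\ee
The sum over $k\in\IZ$ decays rapidly in $|v|^2|k-k^\ast|$, and the remaining sum over primitive $v$ converges by standard lattice-point bookkeeping, yielding both absolute and uniform convergence on $K$. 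The main obstacle is the lower bound for $|w_2(\gamma)|$: without the determinant identity one would be reduced to the geometric ``distance-to-wall'' picture of Figure~\ref{figfour}, where degenerate cases $b=0$ or $d=0$ make the wall's normal vector $(ad+bc,\,bd)$ have small norm and require separate direct treatment via the factorisation $w_2/\tau_2 = (a+bx)(c+dx) + bd(y-x^2)$.
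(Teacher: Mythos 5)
Your proof is correct, but it takes a genuinely different route from the paper's. The paper's argument rests on the decomposition $ac\tau_2+bd\sigma_2+(ad+bc)z_2 = ac(\tau_2+z_2)+bd(\sigma_2+z_2)+(a-b)(c-d)(-z_2)$, the observation that $ac$, $bd$ and $(a-b)(c-d)$ all carry the same sign with at least one nonzero, and the resulting lower bound $|w_2(\gamma)|\ge C(\Omega)\,\mu(a,b,c,d)$ of \refb{ecrucialinequality}, with $C(\Omega)=\min(\tau_2+z_2,\sigma_2+z_2,-z_2)$ and $\mu=\max(|a|,|b|,|c|,|d|)$; convergence then follows by counting the at most $4(2K+1)^2$ matrices with $\mu=K$. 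You instead extract the lower bound from the determinant identity $w_2(\gamma)^2=(v^TMv)(u^TMu)-\det M$ combined with $v^TMv\ge\lambda_{\min}(M)|v|^2$, and you organize the sum by fibering $\mr{PSL}(2,\IZ)$ over primitive first rows $v$ with second rows $u\in u_0+\IZ v$. The two lower bounds are of comparable strength (note $|v|\,|u|\ge\mu(a,b,c,d)$ since $u,v$ are nonzero integer vectors), and both counting schemes close the argument, so there is no gap; your handling of the finitely many exceptional $\gamma$ with $\lambda_{\min}(M)^2|v|^2|u|^2<2\det M$ is also legitimate, because every locus $w_2(\gamma)=0$ is a wall of the form \refb{ewall} (indeed $4(ac)(bd)-(ad+bc)^2=-1$) and hence does not meet the interior of $\CR$, so compactness supplies the uniform $\delta_K>0$. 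The trade-off is this: the paper's sign decomposition is tailored to $\CR$ but its byproduct \refb{ecrucialinequality} is reused verbatim in the convergence proofs for $\CF_2$ and $\CF_3$, and again on the boundary $z_2=0$ in Proposition \ref{prop:S_hol_conv} after replacing $C(\Omega)$ by $\wt C(\Omega)$; your determinant identity is chamber-independent and would yield uniform estimates on any compact subset of $\IH_2$ avoiding the walls, without first invoking the formal $\mr{PSL}(2,\IZ)$-invariance \refb{esl3zoffi} to reduce to $\CR$, at the cost of not directly exposing the sign structure that the later sections rely on.
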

\begin{proof}
We write,
\be\label{exx2}
ac\tau_2 + bd\sigma_2 + (ad+bc)z_2 = ac (\tau_2+z_2) + bd (\sigma_2 + z_2)
+ (a-b) (c-d)) (-z_2)\, .
\ee
Now for $\Omega\in \CR$, we have
\be\label{exx3}
\tau_2+z_2>0, \qquad \sigma_2+z_2> 0, \qquad -z_2>0\, .
\ee
Also using the condition $ad-bc=1$ and $a,b,c,d\in\mathbb{Z}$ one can show that
$ad$ and $bc$ have the same signs and hence
$ac$ and $bd$ have the same signs.  This includes the case where one of them vanishes
in which case we declare them to have the same sign. Writing the same condition as
$(a-b)d -  b(c-d)=1$ we see that $(a-b) (c-d)$ has the same sign as $bd$ and writing
the same condition as $(d-c)a-(b-a)c=1$ we conclude that $(d-c)(b-a)$ has the
same sign as $ac$. Therefore $ac$, $bd$ and $(a-b) (c-d)$ have the same signs.
Furthermore, while 
some of them may vanish, at
least one of the three combinations is non-zero. Let us now define
\be\label{exx4}
C(\Omega) := {\rm Min} (\tau_2+z_2, \sigma_2+z_2, -z_2), \qquad
\mu(a,b,c,d) := {\rm Max} (|a|, |b|, |c|, |d|)\, .
\ee
Then 
\be\label{e419xy}
{\rm Max} (|ac|, |bd|, |(a-b) (c-d)|)\ge \mu(a,b,c,d)\, ,
\ee
and
\refb{exx2} gives
\be \label{ecrucialinequality}
|ac\tau_2 + bd\sigma_2 + (ad+bc)z_2 | \ge C(\Omega)\, \mu(a,b,c,d)\, .
\ee
Now for any complex number $x=x_1+ix_2$ with real $x_1,x_2$, we
have
\begin{eqsp} \label{einequalityx}
 \left| (e^{ix}-e^{-ix})^{-2}\right|
  \le (e^{|x_2|}-e^{-|x_2|})^{-2} \le  (e^{x_0}-e^{-x_0})^{-2}\quad
  {\rm for} \ |x_2|\ge x_0 >0~.
\end{eqsp}
Using this we can put a bound on the summand in the expression for $\CF_{00}$:
\be \label{exx1}
\left|e^{\pi i \{ac\tau + bd\sigma + (ad+bc)z\}} - e^{-\pi i \{ac\tau + bd\sigma + (ad+bc)z\}}\right|^{-2}\le  \left(e^{\pi  C(\Omega) \mu(a,b,c,d)} - e^{-\pi C(\Omega) \mu(a,b,c,d)}\right)^{-2}\, .
\ee
Now, since $C(\Omega)$ is a continuous function of $\tau_2,\sigma_2,z_2$, it is bounded from below on any compact subset $\CO\subset\CR$. Let $C_{\text{lower}}(\CO)$ be the lower bound on the compact subset $\CO$:
\begin{eqsp}\label{eq:C_omega_lbound}
    C(\Omega)\geq C_{\text{lower}}(\CO)>0~,\quad \Omega\in\CO~.
\end{eqsp}
Finally, for a fixed positive integer $K$, the number of $\mr{PSL}(2,\IZ)$ matrices with $\mu(a,b,c,d)=K$ is bounded above by $4(2K+1)^2$. To see this, note that since the maximum element, which can be any of the four elements, is fixed at $K$, two of the other three elements can take atmost $2K+1$ values, and the fourth element is fixed by the $ad-bc=1$ condition.\footnote{This
analysis ignores the constraint that the fourth element must be an integer. Once we impose this
additional condition, the actual number of terms becomes less. However we shall proceed without
imposing this additional integrality condition, since the generous bound $4(2K+1)^2$ is
sufficient for our analysis.}
This gives
\begin{eqsp} \label{e423xxy}
|\CF_{00}| &\le \sum_{K=1}^\infty 4(2K+1)^2 \,  \left(e^{\pi  C(\Omega)K} - e^{-\pi 
C(\Omega) K}\right)^{-2}\\&\leq \sum_{K=1}^\infty 4(2K+1)^2 \,  \left(e^{\pi  C_{\text{lower}}(\CO)K} - e^{-\pi 
C_{\text{lower}}(\CO) K}\right)^{-2}
<\infty\, .
\end{eqsp}
This proves the absolute and uniform convergence of $\CF_{00}$ on compact subsets of $\CR$ and hence defines a holomorphic function on $\CR$.
\end{proof}

\begin{prop}
The series ${\cal F}_{1}$ converges absolutely and uniformly on compact subsets of the $\CR$-chamber.     
\end{prop}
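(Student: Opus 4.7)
The plan is to reduce the convergence of $\CF_1$ to that of $\CF_{00}$, which was already established in Proposition \ref{prop:conv_F00}, by showing that the additional factors $f_+(a^2\tau+b^2\sigma+2abz)$ and $f_+(c^2\tau+d^2\sigma+2cdz)$ are uniformly bounded in the summation variables $(a,b,c,d)$ on any compact subset of $\CR$.

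For this, I would first invoke the standard Hardy-Ramanujan bound $|f_p| \leq C\, p^{\alpha}\, e^{4\pi\sqrt{p}}$ on the Fourier coefficients of $\eta^{-24}$. The key geometric input is that the imaginary part of the argument of $f_+(a^2\tau+b^2\sigma+2abz)$ equals $v^T \Omega_2 v$, where $v = (a,b)^T$ and $\Omega_2 = \mathrm{Im}\,\Omega$ is positive definite on $\IH_2$. On a compact subset $\CO \subset \CR$ the smallest eigenvalue of $\Omega_2$ admits a uniform positive lower bound $\lambda_0(\CO) > 0$, so
$$a^2\tau_2 + b^2\sigma_2 + 2ab z_2 \;\geq\; \lambda_0(\CO)\,(a^2+b^2) \;\geq\; \lambda_0(\CO),$$
where the last inequality uses $(a,b) \neq (0,0)$, which is forced by $ad-bc=1$. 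This gives
$$|f_+(a^2\tau+b^2\sigma+2abz)| \;\leq\; \sum_{p\geq 0} |f_p|\, e^{-2\pi p\, \lambda_0(\CO)} \;=:\; M(\CO) \;<\; \infty,$$
where finiteness follows from the $e^{4\pi\sqrt p}$ growth of $|f_p|$. The same bound applies to $|f_+(c^2\tau+d^2\sigma+2cd z)|$, since $(c,d)\neq (0,0)$ as well.

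Combining this with the estimate \refb{exx1} from the proof of Proposition \ref{prop:conv_F00}, each summand of $\CF_1$ on $\CO$ is bounded by
$$M(\CO)^2 \left(e^{\pi C_{\mathrm{lower}}(\CO)\, \mu(a,b,c,d)} - e^{-\pi C_{\mathrm{lower}}(\CO)\, \mu(a,b,c,d)}\right)^{-2}.$$
Grouping matrices by $\mu(a,b,c,d) = K$ (at most $4(2K+1)^2$ such matrices) and summing over $K\geq 1$ exactly as in \refb{e423xxy} produces a convergent majorant. This establishes absolute and uniform convergence of $\CF_1$ on $\CO$, and hence its analyticity on $\CR$.

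The only mild obstacle is ensuring that the bound on the $f_+$-factors is uniform in the summation indices; this follows immediately from the positive-definiteness of $\Omega_2$ together with continuity of its smallest eigenvalue on $\CO$, and it is here that one uses that the rows $(a,b)$ and $(c,d)$ of an $\mathrm{SL}(2,\IZ)$ matrix are both nonzero. No cancellation between the two $f_+$ factors is needed, and no new analytic machinery beyond the ingredients already used for $\CF_{00}$ enters.
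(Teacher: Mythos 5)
Your proposal is correct and follows essentially the same route as the paper: bound the two $f_+$ factors uniformly in $(a,b,c,d)$ and $\Omega\in\CO$ by a constant, and reduce to the already-established convergence of $\CF_{00}$. The only (inessential) difference is that you obtain the positive lower bound on $a^2\tau_2+b^2\sigma_2+2abz_2$ from the smallest eigenvalue of $\mathrm{Im}\,\Omega$ and then invoke the Hardy--Ramanujan growth of $f_p$, whereas the paper uses the chamber-specific decomposition into $\tau_2+z_2$, $\sigma_2+z_2$, $-z_2$ (giving the bound $C(\Omega)$) together with the positivity of $f_p$ and monotonicity of $f_+(it)$; both yield the same uniform majorant.
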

\begin{proof}
First observe that in the $\CR$-chamber, we have
\begin{eqsp}\label{eq:bound_tau2}
   a^2\tau_2+b^2\sigma_2+2ab z_2=a^2(\tau_2+z_2)+b^2(\sigma_2+z_2)+(a-b)^2(-z_2)\geq C(\Omega)~,
   \\
   c^2\tau_2+d^2\sigma_2+2cd z_2=c^2(\tau_2+z_2)+d^2(\sigma_2+z_2)+(c-d)^2(-z_2)\geq C(\Omega)~,
\end{eqsp}
where $C(\Omega)$ is defined in \eqref{exx4}. This gives 
\begin{eqsp}\label{eq:f+_bound}
    |f_+(a^2\tau+b^2\sigma+2ab z)|\leq f_+(i(a^2\tau_2+b^2\sigma_2+2ab z_2))\leq f_+(i\, 
    C(\Omega))~,
    \\
    |f_+(c^2\tau+d^2\sigma+2cd z)|\leq f_+(i(c^2\tau_2+d^2\sigma_2+2cd z_2))
    \leq f_+(i\, C(\Omega))~,
\end{eqsp}
where we used $|f_+(\tau)| \le f_+(i\tau_2)$ and that $f_+(i\tau_2)$ is a monotonically
decreasing function of $\tau_2$ since $f_p>0$. Both these relations follow from \refb{e122}.
Now, since $f_+$ and $C(\Omega)$ are continuous functions, on any compact subset $\CO\subset\CR$, there exists $C_{\text{upper}}^{f_+}(\CO)>0$ such that 
\begin{eqsp}\label{eq:f+_bound_cpt_set}
    f_+(i\,C(\Omega))\leq C_{\text{upper}}^{f_+}(\CO)~,\quad \Omega\in\CO~.
\end{eqsp}
Using \refb{ef1f2f3}, \refb{eq:CFij_def},  \eqref{eq:f+_bound} and \eqref{eq:f+_bound_cpt_set} we have 
\begin{eqsp}
    |\CF_1(\Omega)|\leq \left(C_{\text{upper}}^{f_+}(\CO)\right)^2|\CF_{00}(\Omega)|~,\quad \Omega\in\CO~.
\end{eqsp}
Using Proposition \ref{prop:conv_F00}, we conclude that $\CF_1$ converges absolutely and uniformly on compact subsets of the $\CR$-chamber. 
\end{proof}
This proves the holomorphicity of $\CF_1$ in the $\CR$-chamber.

\subsection{Convergence of $\CF_2$}\label{sec:conv_CF2}
We now prove the convergence of $\CF_2$. The exponential term in $\CF_2$ can be rearranged to obtain 
\begin{equation}\label{eq:CF_2_def}
\begin{split}
    \CF_2&:=f_{-1}  
\sum_{\big{(}\begin{smallmatrix} a & b\cr c & d\end{smallmatrix}\big{)}
\in \mr{PSL}(2,\mathbb{Z})}\sum_{r>0} r\, 
 H(ac\tau_2 + bd\sigma_2 + (ad+bc)z_2)
 \\ & \hskip 1in \times\
H\left(-ac\tau_2 - bd\sigma_2 - (ad+bc)z_2 + ra^2\tau_2 + rb^2\sigma_2+2rab z_2
\right)\\ & \hskip 1in \times 
e^{2\pi i \{-(c^2\tau+d^2\sigma+2cdz)+r(ac\tau+bd\sigma+(ad+bc)z)\}}f_+(a^2\tau+b^2\sigma+2abz)~.
\end{split}    
\end{equation}
Note that the Heaviside functions do not jump in the interior of $\CR$. More precisely, the Heaviside functions are constant functions in the interior of the $\CR$-chamber taking value either 0 or 1. Hence, they are analytic function in the $\CR$-chamber. This implies that if we prove that the series converges absolutely and uniformly on compact subsets of $\CR$, then it defines a holomorphic function on $\CR$. We will now proceed to prove this.   \\\\ 
Let $\CO\subset\CR$ be a compact subset. Using $f_{-1}=1$ and the bound \eqref{eq:f+_bound} and \eqref{eq:f+_bound_cpt_set}, for $\Omega\in\CO$ we can write 
\begin{eqsp}\label{eq:bound_F20}
    |\CF_{2}|\leq C^{f_+}_{\mr{upper}}(\CO) \sum_{\big{(}\begin{smallmatrix} a & b\cr c & d
    \end{smallmatrix}\big{)}\in \mr{PSL}(2,\mathbb{Z})} 
\sum_{\substack{r>0}}& r\, H(ac\tau_2 + bd\sigma_2 + (ad+bc)z_2)
 \\ &\times\
H\left(-ac\tau_2 - bd\sigma_2 - (ad+bc)z_2 + ra^2\tau_2 + rb^2\sigma_2+2rab z_2
\right)\\ &\times 
e^{-2\pi  \{-(c^2\tau_2+d^2\sigma_2+2cdz_2)+r(ac\tau_2+bd\sigma_2+(ad+bc)z_2)\}}~. 
\end{eqsp}
We now note that $r$ in the sum is bounded from below because of the Heaviside functions. 
Put 
\begin{equation} \label{edefprimenew}
    \tau'_2:=a^2\tau_2+b^2\sigma_2+2abz_2,\quad \sigma'_2:=c^2\tau_2+d^2\sigma_2+2cdz_2,\quad z_2':=ac\tau_2+bd\sigma_2+(ad+bc)z_2~.
\end{equation}
The summand vanishes unless 
\begin{equation} \label{e361}
    z_2'>0,\quad r\tau_2'>z_2'\implies r>r_0:=z_2'/\tau_2'~.
\end{equation}
The sum then satisfies the bound 
\ben \label{esumbound}
   &&
    |\CF_{2}|\leq C^{f_+}_{\mr{upper}}(\CO)  \sum_{\big{(}\begin{smallmatrix} a & b\cr c & d\end{smallmatrix}
    \big{)}\in \mr{PSL}(2,\mathbb{Z})} 
\sum_{\substack{r\geq r_0\\r\in\IZ}} r \, H(z_2') \, e^{2\pi(\sigma_2'-rz_2')}~,\quad \Omega\in\CO~.
\een
Let $\r0$ be the lowest integer larger than or equal to $r_0$. 
Since $r_0=z_2'/\tau_2'$ and $z_2',\tau_2'>0$, $\r0$ must be a strictly positive integer. Then we can write the RHS of \refb{esumbound} as
\begin{equation} \label{esumboundnew}
C^{f_+}_{\mr{upper}}(\CO) \sum_{\big{(}\begin{smallmatrix} a & b\cr c & d\end{smallmatrix}
    \big{)}\in \mr{PSL}(2,\mathbb{Z})}
\sum_{\substack{r\geq \r0\\r\in\IZ}} r ~H(z_2') \, 
e^{2\pi (\tau_2\sigma_2 - z_2^2)/\tau_2'} \, 
e^{-2\pi \, (r-r_0) |z_2'|}
~,  
\end{equation}
where we used the fact that 
\begin{eqsp}
\tau_2\sigma_2 - z_2^2=\tau'_2\sigma'_2 - (z'_2)^2~.    
\end{eqsp}
Note that we have replaced $z_2'$ by $|z_2'|$ in the exponent due to the Heaviside function. 
Let us break the sum into $r=\r0$ and $r\geq \r0+1$ terms:
\begin{eqsp}
 |\CF_{2}|\leq     \CF_{2}^<+\CF_{2}^>~,
\end{eqsp}
where 
\begin{align}
\label{eq:CF20<_def}
    \CF_{2}^<&:= C^{f_+}_{\mr{upper}}(\CO)
    \sum_{\big{(}\begin{smallmatrix} a & b\cr c & d\end{smallmatrix}
    \big{)}\in \mr{PSL}(2,\mathbb{Z})}
\r0 ~H(z_2') \, 
e^{2\pi (\tau_2\sigma_2 - z_2^2)/\tau_2'} 
e^{-2\pi \, (\r0-r_0) |z_2'|}
\\
\CF_{2}^>&:= C^{f_+}_{\mr{upper}}(\CO)\sum_{\big{(}\begin{smallmatrix} a & b\cr c & d\end{smallmatrix}
    \big{)}\in \mr{PSL}(2,\mathbb{Z})}
\sum_{\substack{r\geq \r0+1\\r\in\IZ}} r ~H(z_2') \, 
e^{2\pi (\tau_2\sigma_2 - z_2^2)/\tau_2'} 
e^{-2\pi \, (r-r_0) |z_2'|}~.
\label{eq:CF20>0_def}
\end{align}
\begin{prop}\label{prop:conv_F20>}
The series $\CF_{2}^>$ converges absolutely and uniformly on compact subset of the $\CR$ chamber.    
\end{prop}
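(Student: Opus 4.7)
The plan is to dominate each summand of $\CF_{2}^{>}$ on a compact subset $\CO \subset \CR$ by a quantity that decays exponentially in $\mu := \mu(a,b,c,d)$ fast enough to beat the polynomial count $4(2K+1)^2$ of $\mr{PSL}(2,\IZ)$ matrices with $\mu = K$ (the same count already used in \eqref{e423xxy}), while the inner sum over $r$ is controlled by a geometric-type series once a sharp lower bound on $|z_2'|$ is in place.

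First I would extract uniform bounds for the prefactors. The decomposition in \eqref{eq:bound_tau2} applied to $\tau_2'$, combined with \eqref{eq:C_omega_lbound} and the fact that $(a,b) \neq (0,0)$ for any $\mr{SL}(2,\IZ)$ matrix, yields $\tau_2' \geq C_{\mr{lower}}(\CO) > 0$ uniformly, while $\tau_2\sigma_2 - z_2^2$ is continuous hence bounded above on $\CO$. Together these give a uniform bound $e^{2\pi(\tau_2\sigma_2-z_2^2)/\tau_2'} \leq E(\CO)$. Next I would invoke \eqref{ecrucialinequality} to obtain $|z_2'| \geq C_{\mr{lower}}(\CO)\,\mu$, together with the trivial upper bound $|z_2'| \leq K_1(\CO)\mu^2$ coming from $|ac|,|bd|,|ad+bc| \leq 2\mu^2$ and the boundedness of $\tau_2,\sigma_2,|z_2|$ on $\CO$. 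These together imply $\r0 = \lceil z_2'/\tau_2' \rceil \leq K_2(\CO)\mu^2 + 1$, which is at worst polynomial in $\mu$.

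With these estimates in hand, I would reindex the inner sum by $r = \r0 + 1 + k$, $k \geq 0$, so that $r - r_0 \geq k + 1$, giving
\begin{equation*}
\sum_{r \geq \r0 + 1} r\,e^{-2\pi(r-r_0)|z_2'|} \;\leq\; e^{-2\pi|z_2'|}\sum_{k \geq 0}(\r0 + 1 + k)\,e^{-2\pi k|z_2'|}.
\end{equation*}
Since $|z_2'| \geq C_{\mr{lower}}(\CO)$ for $\mu \geq 1$, the sum over $k$ is bounded by $\alpha(\r0 + 1) + \beta$ by elementary geometric-series identities with constants $\alpha,\beta$ depending only on $\CO$, hence by a polynomial $Q(\mu)$. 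The prefactor $e^{-2\pi|z_2'|} \leq e^{-2\pi C_{\mr{lower}}(\CO)\mu}$ supplies the exponential decay in $\mu$. Grouping $\mr{PSL}(2,\IZ)$ matrices by the value $\mu(a,b,c,d) = K$ then gives
\begin{equation*}
|\CF_{2}^{>}| \;\leq\; C_{\mr{upper}}^{f_+}(\CO)\,E(\CO)\sum_{K \geq 1} 4(2K+1)^2\,Q(K)\,e^{-2\pi C_{\mr{lower}}(\CO) K} \;<\; \infty,
\end{equation*}
uniformly in $\Omega \in \CO$, which establishes absolute and uniform convergence on compact subsets of $\CR$.

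The main obstacle I anticipate is securing the uniform lower bound on $\tau_2'$ and the polynomial upper bound on $\r0$ even when the maximum $\mu$ is attained on $|c|$ or $|d|$ rather than $|a|$ or $|b|$; this is precisely what the rewriting $\tau_2' = a^2(\tau_2+z_2) + b^2(\sigma_2+z_2) + (a-b)^2(-z_2) \geq C(\Omega)$ in \eqref{eq:bound_tau2} delivers, since $(a,b) = (0,0)$ is excluded by $ad - bc = 1$. Once that bound is in place, the remaining estimates are elementary.
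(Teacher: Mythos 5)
Your proposal is correct and follows essentially the same route as the paper's proof: both rest on the lower bound $|z_2'|\ge C(\Omega)\,\mu(a,b,c,d)$ from \eqref{ecrucialinequality}, the bound $\tau_2'\ge C(\Omega)$ from \eqref{eq:bound_tau2}, the polynomial estimate $r_0=O(\mu^2)$, the reindexing of $r$ relative to $\lceil r_0\rceil$ to produce a geometric series, and the count $4(2K+1)^2$ of matrices with $\mu=K$ beaten by exponential decay. The only difference is cosmetic bookkeeping in how the $r$-sum is packaged (you extract a single factor $e^{-2\pi|z_2'|}$ and bound the rest by a polynomial, while the paper keeps $e^{-2\pi sKC(\Omega)}$ for each offset $s$).
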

\begin{proof}
By \eqref{eq:bound_tau2} and \refb{ecrucialinequality}, we have 
\ben \label{ecomone}
   |z_2'| \ge \mu(a,b,c,d)\, C(\Omega)~,
\een
where $C(\Omega)$ and $\mu(a,b,c,d)$ have been defined in \refb{exx4}.
We also have, from \refb{edefprime} and \eqref{eq:bound_tau2}
\ben \label{ecomtwo}
&& \tau'_2=a^2(\tau_2+z_2)+b^2(\sigma_2+z_2)+ (a-b)^2 (-z_2) \ge C(\Omega), \\ &&   
\frac{2\pi (\tau_2\sigma_2 - z_2^2)}{\tau_2'}\leq\frac{2\pi (\tau_2\sigma_2 - z_2^2)}{C(\Omega)}, \nonumber\nonumber \\ &&
z_2'=ac( \tau_2+z_2) +bd(\sigma_2+z_2) +(a-b)(c-d)(-z_2) \le 6\, \mu(a,b,c,d)^2\, 
E(\Omega), \nonumber  \\ &&
E(\Omega) := {\rm max} (\tau_2+z_2, \sigma_2+z_2, -z_2)\, .
\een
This gives
\be
r_0=z_2'/\tau_2' \le 6 \,\mu(a,b,c,d)^2 E(\Omega) / C(\Omega)\, .
\ee
Furthermore, the number of $\mr{PSL}(2,\mathbb{Z})$ matrices with 
$\mu(a,b,c,d)=K$ is 
bounded from above by $4(2K+1)^2$. Wriring $s=r-\r0$, 
we get, from \refb{eq:CF20>0_def},
\be \label{e447x}
|\CF_{2}^>| \le C^{f_+}_{\mr{upper}}(\CO) \sum_{K=1}^\infty \sum_{s=1}^\infty 
 \, \left(s+1 + 6 K^2 \, {E(\Omega)\over C(\Omega)}\right)
4(2K+1)^2\, 
e^{2\pi (\tau_2\sigma_2 - z_2^2)/C(\Omega)} e^{-2\pi \, s\,  K\, C(\Omega)} 
<\infty\, ,
\ee
since $C(\Omega)>0$, $E(\Omega)>0$. 
This proves the absolute convergence of
$\CF_{2}^>$. The uniform and absolute convergence on compact subsets of $\CR$ follows from the continuity of $E(\Omega)$ and $C(\Omega)$.
\end{proof}

We now prove the convergence of $\CF_{2}^{<}$. 
Since $|z_2'|$ has a lower bound \eqref{ecomone} that grows when any of $|a|$, $|b|$, $|c|$, $|d|$ become large, we need to worry about the cases when $\r0-r_0$ becomes small in such limits to compensate for the growth of $|z_2'|$. If there are large number of possible $a,b,c,d$
satisfying this condition, then the sum over $a,b,c,d$ could cause divergence.
We shall now explore this possibility. 
To this end, choose $0<\eps<\frac{1}{2}$ and define 
\begin{eqsp}
    \CF_{2}^{<}=\CF_{2}^{<\eps}+\CF_{2}^{>\eps}~,
\end{eqsp}
where 
\begin{eqsp}\label{eq:F20<eps}
\CF_{2}^{<\eps}&:= C^{f_+}_{\mr{upper}}(\CO)\sum_{\big{(}\begin{smallmatrix} a & b\cr c & d\end{smallmatrix}
    \big{)}\in \mr{PSL}(2,\mathbb{Z})\atop r_0/\r0<\eps}
\r0 ~H(z_2') \, 
e^{2\pi (\tau_2\sigma_2 - z_2^2)/\tau_2'} 
e^{-2\pi \, (\r0-r_0) |z_2'|}~,
\end{eqsp}
\begin{eqsp}\label{eq:F20>eps}
\CF_{2}^{>\eps}&:= C^{f_+}_{\mr{upper}}(\CO) \sum_{\big{(}\begin{smallmatrix} a & b\cr c & d\end{smallmatrix}
    \big{)}\in \mr{PSL}(2,\mathbb{Z})\atop r_0/\r0\ge\eps} 
\r0 ~H(z_2') \, 
e^{2\pi (\tau_2\sigma_2 - z_2^2)/\tau_2'} 
e^{-2\pi \, (\r0-r_0) |z_2'|}~,    
\end{eqsp}
\begin{prop}
The series $\CF_{2}^{<\eps}$ converges absolutely and uniformly on compact subsets of the $\CR$-chamber. 
\end{prop}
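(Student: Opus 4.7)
The plan is to exploit the restriction imposed by the condition $r_0/\r0 < \eps$ with $0 < \eps < 1/2$, which I claim forces $\r0 = 1$. Indeed, if $r_0 \ge 1$, writing $n-1 < r_0 \le n$ for an integer $n \ge 2$ (so that $\r0 = n$) gives $r_0/\r0 > (n-1)/n \ge 1/2 > \eps$, a contradiction; the case $r_0 = 1$ is similarly excluded. Hence the summation is restricted to $0 < r_0 < 1$, in which case $\r0 = 1$, so that $r_0/\r0 = r_0 < \eps$ and $\r0 - r_0 = 1 - r_0 > 1 - \eps > 1/2$. Each summand of $\CF_2^{<\eps}$ therefore reduces to the product of $H(z_2')$ with the two factors $e^{2\pi(\tau_2\sigma_2 - z_2^2)/\tau_2'}$ and $e^{-2\pi(1 - r_0)\, z_2'}$, with no combinatorial contribution from $\r0$.

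On any compact $\CO \subset \CR$, the first exponential is bounded by a constant. From \refb{eq:bound_tau2} one has $\tau_2' \ge C(\Omega)$, and from \refb{eq:C_omega_lbound} one has $C(\Omega) \ge C_{\mr{lower}}(\CO) > 0$. Since $\tau_2 \sigma_2 - z_2^2$ is continuous and hence bounded above on $\CO$, the factor $e^{2\pi(\tau_2\sigma_2-z_2^2)/\tau_2'}$ is dominated by a constant $M(\CO) < \infty$ independent of $(a,b,c,d)$. For the second exponential I would use $1 - r_0 > 1 - \eps$ together with the Heaviside positivity $z_2' > 0$ and the crucial inequality \refb{ecrucialinequality}, yielding $z_2' \ge C(\Omega)\,\mu(a,b,c,d) \ge C_{\mr{lower}}(\CO)\,\mu(a,b,c,d)$.

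Combining these estimates and dropping the further restriction $r_0 < \eps$ on the index set (which only removes nonnegative terms), I obtain
\begin{equation*}
|\CF_2^{<\eps}| \le C^{f_+}_{\mr{upper}}(\CO)\, M(\CO) \sum_{\big{(}\begin{smallmatrix} a & b\cr c & d\end{smallmatrix}\big{)} \in \mr{PSL}(2,\IZ)} e^{-2\pi(1-\eps)\, C_{\mr{lower}}(\CO)\, \mu(a,b,c,d)}~.
\end{equation*}
Grouping matrices by $\mu(a,b,c,d) = K$, there are at most $4(2K+1)^2$ of them, so the right-hand side is dominated by $\sum_{K=1}^\infty 4(2K+1)^2\, e^{-2\pi(1-\eps)\, C_{\mr{lower}}(\CO)\, K}$, which converges, exactly as in \refb{e423xxy}. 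This yields absolute and uniform convergence of $\CF_2^{<\eps}$ on $\CO$, and hence the claim. The only genuinely new input is the combinatorial observation that $r_0/\r0 < \eps < 1/2$ forces $\r0 = 1$; once this is in hand, all remaining estimates are already assembled in the proofs of Proposition \ref{prop:conv_F00} and Proposition \ref{prop:conv_F20>}, and the argument is essentially mechanical.
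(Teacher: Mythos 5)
Your proof is correct and follows essentially the same route as the paper: the key observation that $r_0/\lceil r_0\rceil<\eps<1/2$ forces $\lceil r_0\rceil=1$ and hence $\lceil r_0\rceil-r_0>1-\eps$, after which the bound $z_2'\ge C(\Omega)\,\mu(a,b,c,d)$ and the count $4(2K+1)^2$ of matrices with $\mu=K$ give the convergent geometric-type sum, exactly as in \refb{e454xxz}. The only difference is that you spell out the elementary ceiling argument in more detail than the paper does.
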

\begin{proof}
Note that, if $\frac{r_0}{\r0}<\frac{1}{2}$, then $r_0<1$ and $\r0=1$. Then we have 
\begin{eqsp}
    \r0-r_0=1-r_0>1-\eps>0~.
\end{eqsp}
Thus, we have, following the argument leading to \refb{e447x}, 
\be \label{e454xxz}
\CF_{2}^{<\eps}\leq C^{f_+}_{\mr{upper}}(\CO)
\sum_{K=1}^\infty \, 
4(2K+1)^2\, 
e^{2\pi (\tau_2\sigma_2 - z_2^2)/C(\Omega)} e^{-2\pi \, (1-\eps) \,   K\, C(\Omega)} 
<\infty\, ,
\ee
This proves the convergence of $\CF_{2}^{<\eps}$. The uniform and absolute convergence on compact subsets of $\CR$ follows from the continuity of $C(\Omega)$.
\end{proof}

\begin{prop}
The series $\CF_{2}^{>\eps}$ converges absolutely and uniformly on compact subset of the $\CR$-chamber.     
\end{prop}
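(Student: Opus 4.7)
The plan is to parameterise the $\mathrm{PSL}(2,\mathbb{Z})$--matrices contributing to $\CF_2^{>\eps}$ by their first row $(a,b)$ (taken coprime) and an integer shift $k$: for a fixed base $(c_0,d_0)$ one writes $(c,d)=(c_0+ka,\,d_0+kb)$, choosing the base so that $z_2'|_{k=0}\in[0,\tau_2')$. Then $\tau_2'$ is independent of $k$ and, writing $\alpha:=\{r_0\}\in(0,1)$, every contributing matrix has $\r0=k+1$ with $\r0-r_0=1-\alpha$; the condition $r_0/\r0\ge\eps$ admits $k=0$ only when $\alpha\ge\eps$, while every $k\ge 1$ is automatically included (since $\eps<\frac{1}{2}$). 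The degenerate value $\alpha=0$ corresponds to $\Omega$ lying on a pole wall of $1/\Phi_{10}$ and is excluded for $\Omega$ in the interior of $\CR$.

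The geometric ingredient I would establish first is the lower bound $(1-\alpha)\tau_2'\ge C(\Omega)$. Expanding
\[\tau_2'-z_2'|_{k=0}=a(a-c_0)(\tau_2+z_2)+b(b-d_0)(\sigma_2+z_2)+(a-b)(a-b-c_0+d_0)(-z_2),\]
I would repeat the sign--consistency argument of Proposition~\ref{prop:conv_F00}: the identity $a(b-d_0)-b(a-c_0)=-1$ (and its analogues for the other pairs of coefficients) forces the three integers $a(a-c_0)$, $b(b-d_0)$, $(a-b)(a-b-c_0+d_0)$ to share a common sign. Since $\Omega\in\CR$ makes the three bracketed factors $(\tau_2+z_2)$, $(\sigma_2+z_2)$, $(-z_2)$ strictly positive and $z_2'|_{k=0}\in(0,\tau_2')$ makes the total strictly positive, all three coefficients must be non-negative integers with at least one $\ge 1$, giving the desired lower bound.

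With this in hand, the inner sum over $k$ is a geometric series and, setting $M:=(1-\alpha)\tau_2'\ge C(\Omega)$, a case analysis in $\alpha\ge\eps$ versus $\alpha<\eps$ shows it is bounded by a constant depending only on $C(\Omega)$ times $e^{-2\pi\eps M}$. In the case $\alpha\ge\eps$ one uses $(1-\alpha)\alpha\tau_2'\ge\eps M$ in the leading exponential $e^{-\beta\alpha}$ (with $\beta=2\pi M$); in the case $\alpha<\eps$ the forced restriction $k\ge 1$ supplies an extra factor $e^{-\beta}\le e^{-2\pi M}$. Combined with $e^{2\pi D/\tau_2'}\le e^{2\pi D/C(\Omega)}$, the contribution per first row $(a,b)$ is thus at most $K_\CO\,e^{-2\pi\eps M}$ for a constant $K_\CO$ depending only on the compact set $\CO$.

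It remains to count first rows by $M$. Writing $M=A(\tau_2+z_2)+B(\sigma_2+z_2)+C(-z_2)$ with $(A,B,C)\in\IZ_{\ge 0}^3$ determined by the matrix, the number of triples with $M\le M_0$ is $O(M_0^3)$, and the Diophantine system $a(a-c_0)=A$, $b(b-d_0)=B$, $(a-b)(a-b-c_0+d_0)=C$, $ad_0-bc_0=1$ has at most $O(M_0^{\eta})$ coprime solutions per triple for any $\eta>0$ by standard divisor bounds. Hence $N(M)=O(M^{3+\eta})$, yielding
\[\CF_2^{>\eps}\le K_\CO\sum_{M}M^{3+\eta}\,e^{-2\pi\eps M}<\infty,\]
and continuity of $C(\Omega)$ and $D$ on compacta $\CO\subset\CR$ provides uniform convergence. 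The main obstacle is precisely this polynomial matrix count, since it requires Diophantine bookkeeping beyond what was needed for $\CF_{00}$ and $\CF_2^{>}$; the $k$-summation itself and the lower bound $(1-\alpha)\tau_2'\ge C(\Omega)$ are routine adaptations of ingredients already developed in Section~\ref{sec:conv_CF1}.
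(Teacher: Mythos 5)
Your proof is correct, and up to relabeling it rests on the same parametrization as the paper's: your shift $k$ is the paper's $n-1$ with $n=\r0$, your $(a-c_0,\,b-d_0)$ is the paper's $(k,\ell)$ from $c=a\,n-k$, $d=b\,n-\ell$, and your key quantity $M=(1-\alpha)\tau_2'$ is exactly the combination $a\,k\tau_2+b\,\ell\sigma_2+(b\,k+a\,\ell)z_2$ that the paper extracts from $(\r0-r_0)z_2'\geq \eps\, n\,(\cdots)$. The two arguments diverge only at the final counting step. The paper notes that $\bigl(\begin{smallmatrix} a & b\cr -k & -\ell\end{smallmatrix}\bigr)$ is itself a $\mr{PSL}(2,\IZ)$ matrix, over-counts by summing over $n\geq 1$ and over this matrix independently, performs the $n$-sum to get $\bigl(e^{\pi\eps M}-e^{-\pi\eps M}\bigr)^{-2}$, and then simply invokes Proposition \ref{prop:conv_F00}, which already supplies both the stronger lower bound $M\geq C(\Omega)\,\mu(a,b,k,\ell)$ and the count $4(2K+1)^2$ of matrices with $\mu=K$; no new counting is required. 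You instead prove only the weaker pointwise bound $M\geq C(\Omega)$ (by the same sign-consistency trick, correctly adapted to the determinant-$(-1)$ matrix $\bigl(\begin{smallmatrix} a & a-c_0\cr b & b-d_0\end{smallmatrix}\bigr)$), and must then pay for it with a separate Diophantine count of first rows with $M\leq M_0$ via divisor bounds. That count is true, and any polynomial bound suffices against $e^{-2\pi\eps M}$, but it is the one step you leave at the level of a claim: it needs the degenerate cases $A=0$ or $B=0$ disposed of through $ad_0-bc_0=1$, as you partly indicate, and a short argument that for $A\neq 0$ the pair $(b,b-d_0)$ is pinned down to at most two choices per divisor pair of $A$ by the linear relation of determinant type. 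The paper's route buys you precisely the elimination of this bookkeeping: upgrading the lower bound on $M$ to be linear in $\mu(a,b,k,\ell)$ lets the trivial quadratic count of $\mr{PSL}(2,\IZ)$ matrices at fixed $\mu$ close the argument, so if you want a leaner writeup you can graft the paper's final step onto your (exact, non-over-counting) parametrization without change.
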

\begin{proof}
To find a lower bound on $(\r0-r_0)|z_2'|$, let us choose
the following parametrization of the $\mathrm{PSL}(2,\IZ)$-matrices: choose $k,\ell\in\IZ$ such that $c=a\,n-k,~d=b\,n-\ell$ where $n=\r0$. 
Then $ad-bc=1$ implies $-a\,\ell+b\,k=1$ and we have 
\begin{eqsp}
    r_0=\frac{ac\tau_2+bd\sigma_2+(ad+bc)z_2}{a^2\tau_2+b^2\sigma_2+2abz_2}=n-\frac{a\,k\tau_2+b\,\ell\sigma_2+(b\,k+a\,\ell)z_2}{a^2\tau_2+b^2\sigma_2+2abz_2}~.
\end{eqsp}
Thus $\r0=n$ if and only if 
\begin{eqsp}\label{eq:kl_cond_r0n}
0<\frac{a\,k\tau_2+b\,\ell\sigma_2+(b\,k+a\,\ell)z_2}{a^2\tau_2+b^2\sigma_2+2abz_2}<1~.    
\end{eqsp}
With this parametrization and $\frac{r_0}{\r0}\geq \eps$, we have 
\begin{eqsp}
(\r0-r_0) z_2' &
=(n-r_0) z_2' =r_0\tau_2'\left(n- r_0 
\right) \\&\geq  \eps\, n\,\tau_2'\left(\frac{a\,k\tau_2+b\,\ell\sigma_2+(b\,k
+a\,\ell)z_2}{a^2\tau_2+b^2\sigma_2+2abz_2}\right)\,
\\&=\eps\, n\, (a\,k\tau_2+b\,\ell\sigma_2+(b\,k+a\,\ell)z_2)~.    
\end{eqsp}
Thus we have
\begin{eqsp}\label{eq:ser_>eps}
\CF_{2}^{>\eps}&\leq C^{f_+}_{\mr{upper}}(\CO)\exp\left(\frac{2\pi (\tau_2\sigma_2 - z_2^2)}{C(\Omega)}\right)
\sum_{\big{(}\begin{smallmatrix} a & b\cr -k & -\ell\end{smallmatrix}
    \big{)}\in \mr{PSL}(2,\mathbb{Z})} \sum_{n=1}^{\infty}
n ~H(a\,k\tau_2+b\,\ell\sigma_2+(b\,k+a\,\ell)z_2) \, \\ & \hskip 2.5in
e^{-2\pi \, \eps\,n (a\,k\tau_2+b\,\ell\sigma_2+(b\,k+a\,\ell)z_2)}
\\&\leq C^{f_+}_{\mr{upper}}(\CO)\exp\left(\frac{2\pi (\tau_2\sigma_2 - z_2^2)}{C(\Omega)}\right)\\&\hspace{2cm}\times \sum_{\big{(}\begin{smallmatrix} a & b\cr -k & -\ell\end{smallmatrix}
    \big{)}\in \mr{PSL}(2,\mathbb{Z})}
\, \left(e^{\pi \, \eps\, (a\,k\tau_2+b\,\ell\sigma_2+(b\,k+a\,\ell)z_2)}-
e^{-\pi \, \eps\, (a\,k\tau_2+b\,\ell\sigma_2+(b\,k+a\,\ell)z_2)}\right)^{-2}~.
\end{eqsp}
The Heaviside function in the first line 
appears because of \eqref{eq:kl_cond_r0n}. Independent sums over $a,b,k,\ell,n$ clearly 
overestimates the original sum over $a,b,c,d$ but since we are trying to find an 
upper bound on $\CF_2^{>\eps}$, this is okay. 
The proof of convergence of last line in \eqref{eq:ser_>eps} is same as the proof of convergence of $\CF_{00}$ in Proposition \ref{prop:conv_F00}. 
\end{proof}

This proves the holomorphicity of $\CF_2$ in the $\CR$-chamber.

\subsection{Convergence of ${\cal F}_3$}\label{sec:conv_CF3}

We shall now analyze the convergence of 
${\cal F}_3$ given in \refb{ef1f2f3}:
\begin{equation} \label{ef3exp}
\begin{split}
\CF_3&:=  f_{-1}^2 \sum_{r>0} r 
\sum_{\big{(}\begin{smallmatrix} a & b\cr c & d\end{smallmatrix}\big{)}\in G_r\backslash \mathrm{PSL}(2,\mathbb{Z})}\hskip .1in 
 \bigg\{\prod_{n=-\infty}^\infty 
H(a_nc_n\tau_2 + b_nd_n\sigma_2 + (a_nd_n+b_nc_n)z_2)
\bigg\}
\\ & \hskip 1in \times \
e^{2\pi i \{(-a^2-c^2+r ac)\tau+(-b^2-d^2+r bd)\sigma+
(-2ab-2cd + r(ad+bc)) z\}}  \, .
\end{split}    
\end{equation}
The summand in this expression
has absolute value $e^{2\pi B}$ where
\be \label{eBdef}
B = \sigma_2'+\tau_2'-r\, z_2'\, ,
\ee
and $\tau_2',\sigma_2',z_2'$ have been defined in \refb{edefprimenew}. As discussed below \eqref{eq:CF_2_def}, the Heaviside functions are analytic in the interior of the $\CR$-chamber. Consequently, our goal is to show that the sum over $r$ and $a,b,c,d$ in the expression \refb{ef3exp} for ${\cal F}_3$ converges absolutely and uniformly on compact subsets of the $\CR$-chamber. 
We begin by analyzing the
constraints imposed by the Heaviside functions.

\begin{lemma}\label{lemma:Heaviside_matr_const}
Let $\big{(}\begin{smallmatrix} a & b\cr c & d\end{smallmatrix}\big{)}\in
\mathrm{PSL}(2,\IZ)$, $r\geq 1$ and $\big{(}\begin{smallmatrix} a_n & 
b_n\cr c_n & d_n\end{smallmatrix}\big{)}=\big{(}\begin{smallmatrix} 0 & -1\cr 1 & -r\end{smallmatrix}\big{)}^n\big{(}\begin{smallmatrix} a & b\cr c & d
\end{smallmatrix}\big{)}$. Let $a(n)$ be the sequence defined by the 
recursion relation\footnote{This sequence is an example of a Lucas sequence $U_n(r,1)$ \cite{Wikipedia:LucasSequence}. Pell numbers are another example of Lucas sequence which have featured in \cite{Chowdhury:2019mnb}.}:
\begin{eqsp} \label{e461x}
   a(0)=0, \quad a(1)=1,\quad a(n)=ra(n-1)-a(n-2)~,\quad n\geq 2~.
\end{eqsp}
Define the sets 
\begin{eqsp}\label{eq:def_S123}
    S_1&:=\left\{\frac{a(n)}{a(n+1)}:n\geq 0\right\}=\left\{0,{1\over r}, {r\over r^2-1}, {r^2-1\over r^3-2r}, \cdots \right\}~,
\\
    S_2&:=\left\{\frac{a(n+1)}{a(n)}:n\geq 0\right\}=\left\{\infty , r, {r^2-1\over r}, {r^3-2r\over r^2-1}, \cdots \right\}~,
    \\
    S_3&:= \begin{cases} \{x: u_c<x<u_c^{-1} \}, \quad {\rm for} \ r\ge 3\cr
    \varnothing, \quad{\rm for}  \ r=1,2\end{cases}\,~ ,
\end{eqsp}
where,
\begin{equation} \label{eucdef}
    u_c=\frac{r-\sqrt{r^2-4}}{2}~,\quad u_c^{-1}=\frac{r+\sqrt{r^2-4}}{2}~.
\end{equation}
Then for $\Omega\in\CR$, if 
\begin{eqsp}\label{eq:heaviside_const_n}
    H(a_nc_n\tau_2 + b_nd_n\sigma_2 + (a_nd_n+b_nc_n)z_2)=1\quad\text{for all}\quad n\in\IZ
\end{eqsp}
then
\ben\label{erequirement}
&&  {c\over a}, \ {d\over b}, \ {c-d\over a-b}\in S_1\cup S_2\cup S_3~.
\een
\end{lemma}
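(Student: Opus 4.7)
The plan is to translate the bi-infinite family of Heaviside conditions into a dynamical statement about a single Möbius transformation acting on slopes. Applying the identity \refb{exx2} with $(a,b,c,d)$ replaced by $(a_n,b_n,c_n,d_n)$, using the common-sign observation for $a_n c_n$, $b_n d_n$, $(a_n-b_n)(c_n-d_n)$ established in the proof of Proposition \ref{prop:conv_F00}, and the positivity \refb{exx3} for $\Omega\in\CR$, the condition \eqref{eq:heaviside_const_n} becomes equivalent to requiring that each of the three projective slopes
\be
t_n := c_n/a_n,\qquad s_n := d_n/b_n,\qquad u_n := (c_n-d_n)/(a_n-b_n)~,
\ee
lies in the positive arc $[0,\infty]\subset\IR\cup\{\infty\}$ for every $n\in\IZ$. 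A direct computation then shows that left multiplication by $M:=\bigl(\begin{smallmatrix}0 & -1\\1 & -r\end{smallmatrix}\bigr)$ carries each of the vectors $(a,c)^t$, $(b,d)^t$, $(a-b,c-d)^t$ to its $M$-image, so all three slopes evolve under the \emph{same} Möbius map $\varphi(t) = r - 1/t$. The lemma thus reduces to the purely dynamical claim: the bi-infinite $\varphi$-orbit of $t_0\in\IR\cup\{\infty\}$ is contained in $[0,\infty]$ if and only if $t_0\in S_1\cup S_2\cup S_3$.

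I would then analyse $\varphi$. Its fixed points satisfy $t^2 - rt + 1 = 0$ and are precisely $u_c$ and $u_c^{-1}$ of \refb{eucdef}, real and distinct for $r\ge 3$, coincident at $1$ for $r=2$, and complex for $r=1$. Since $\varphi'(t)=1/t^2>0$, the map is strictly increasing between its poles, and the recursion $a(n+2) = r\, a(n+1) - a(n)$ gives the key shifts
\be
\varphi\bigl(a(n)/a(n+1)\bigr) = a(n-1)/a(n),\qquad \varphi\bigl(a(n+1)/a(n)\bigr) = a(n+2)/a(n+1)~.
\ee
Hence $S_1$ is the backward $\varphi$-orbit of $0$, monotonically increasing to $u_c$; $S_2$ is the forward $\varphi$-orbit of $\infty$, monotonically decreasing to $u_c^{-1}$; and the half-open interval $[0,u_c)$ is tiled by the intervals $\bigl[a(n)/a(n+1),\, a(n+1)/a(n+2)\bigr)$ with $n\ge 0$, with $\varphi$ mapping the open interior of each tile bijectively onto the open interior of the preceding tile, and the interior $(0,1/r)$ of the $n=0$ tile into $(-\infty,0)$.

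The classification then follows by a direct case analysis on the location of $t_0$. If $t_0\in S_3$ (non-empty only for $r\ge 3$), strict monotonicity of $\varphi$ between its fixed points forces $\varphi^n(t_0)\in(u_c,u_c^{-1})$ for every $n\in\IZ$. If $t_0\in S_1\cup S_2$, the shifts above show the orbit coincides with the full bi-infinite orbit of $0$, which is contained in $[0,\infty]$. Conversely, if $t_0\in[0,u_c)\setminus S_1$, then $t_0$ lies strictly inside some tile $\bigl(a(n_0)/a(n_0+1),\, a(n_0+1)/a(n_0+2)\bigr)$, so $\varphi^{n_0}(t_0)\in(0,1/r)$ and $\varphi^{n_0+1}(t_0)\in(-\infty,0)$, violating the Heaviside condition. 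A symmetric argument using the inverse map $\varphi^{-1}(t) = 1/(r-t)$ disposes of $(u_c^{-1},\infty]\setminus S_2$. Applying this characterization independently to $t_0$, $s_0$, $u_0$ yields \refb{erequirement}. The main technical obstacle I anticipate is the clean verification of the tile structure on $[0,u_c)$ and $(u_c^{-1},\infty]$ and its compatibility with the shifts of $\varphi$; this is handled by a short induction based on the identity $r - a(n+1)/a(n) = a(n-1)/a(n)$ together with the strict monotonicity of $n\mapsto a(n)/a(n+1)$, both of which follow directly from the recursion.
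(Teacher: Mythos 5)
Your proposal is correct in substance and is, at bottom, the same argument as the paper's, repackaged from two\-/dimensional chamber geometry into one\-/dimensional M\"obius dynamics. The paper's proof tracks the images of the line $x'=0$ under powers of $g_r$ in the $(x',y')$ plane and reads off the allowed region from the chamber picture (Fig.~\ref{fig:Heaviside_r=3}); your recursion $\varphi(t)=r-1/t$ is exactly the paper's \refb{eurecurx}, your three slopes are exactly the $x'$-coordinates of the three chamber vertices in \refb{e478x}, and your sets $S_1$, $S_2$ as forward/backward orbits of $0$ and $\infty$ reproduce \refb{e474x}. What your version buys is self-containedness: the reduction of the Heaviside condition to ``all three slopes in $[0,\infty]$'' is made explicit via \refb{exx2} and the common-sign property, and the escape argument (a point of $[0,u_c)\setminus S_1$ sits in the interior of a tile, gets pushed down to tile $0$, and then exits through $(-\infty,0)$) replaces the paper's appeal to the figure with a short induction. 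This is a legitimate and arguably cleaner write-up of the same idea.

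Two caveats. First, your case analysis presupposes $r\ge 3$ (real, distinct fixed points and a genuine tiling of $[0,u_c)$); the cases $r=1$ (where $\varphi$ has period three and one checks directly that the orbit of $t_0$ stays in $[0,\infty]$ only for $t_0\in\{0,1,\infty\}=S_1\cup S_2$) and $r=2$ (parabolic fixed point at $1$) need separate sentences that your sketch does not supply. Second, your ``if and only if'' is false at the fixed points $u_c,u_c^{-1}$, whose orbits trivially remain in $[0,\infty]$ although they lie in none of $S_1,S_2,S_3$. For $r\ge 3$ this is harmless because $u_c$ is irrational and the slopes are rational, but for $r=2$ the fixed point $u_c=1$ \emph{is} attained: the matrix $\bigl(\begin{smallmatrix}1&0\\1&1\end{smallmatrix}\bigr)$ satisfies every Heaviside condition \eqref{eq:heaviside_const_n} for $r=2$ (one computes $g_2^n\bigl(\begin{smallmatrix}1&0\\1&1\end{smallmatrix}\bigr)=\pm\bigl(\begin{smallmatrix}-1&-n\\-1&-(n+1)\end{smallmatrix}\bigr)$, so $a_nc_n=1$, $b_nd_n=n(n+1)\ge 0$, $(a_n-b_n)(c_n-d_n)=n(n-1)\ge0$), yet $c/a=1\notin S_1\cup S_2\cup S_3$. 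So the lemma's conclusion should read $S_1\cup S_2\cup S_3\cup\{1\}$ when $r=2$; this does not affect the paper's downstream use, since Proposition \ref{prop:F_3>_finite} handles $r=2$ by directly enumerating the finitely many admissible chambers (which include the one with a vertex at $1$), but your dynamical framing makes the omission visible and you should state the $r=2$ conclusion accordingly rather than assert the unrestricted ``iff''.
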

\begin{proof}
We will give a geometric proof of this lemma. For $n=0$, \eqref{eq:heaviside_const_n} implies
\begin{eqnarray}\label{ez2zero}
    z_2'>0~,
\end{eqnarray}
where $\tau_2',\sigma_2',z_2'$ have been defined in \refb{edefprimenew}. 
Let us now define 
\be \label{expypdef}
x' := {z_2'\over \tau_2'}= {ac\tau_2+bd\sigma_2+(ad+bc)z_2\over
a^2\tau_2+b^2\sigma_2+2abz_2}, 
\qquad y' := {\sigma_2'\over \tau_2'}= {c^2\tau_2+d^2\sigma_2+2cdz_2 
\over a^2\tau_2+b^2\sigma_2+2abz_2}\, .
\ee
\refb{ez2zero} can then be rewritten as
\be
x'>0\, .
\ee
For all the Heaviside functions in \eqref{eq:heaviside_const_n} to be nonzero,  
the constraint $x'>0$ on $\tau_2',\sigma_2',z_2'$ must be satisfied if we replace 
$a,b,c,d$ in the expression for $x'$ by $a_n,b_n,c_n,d_n$ respectively. 
We can obtain these restrictions by mapping the line $x'=0$ in the $x'$-$y'$ plane by
the $\mathrm{PSL}(2,\mathbb{Z})$ transformations
\be \label{e555}
g_r :=\begin{pmatrix} 0 & -1\cr 1 & -r\end{pmatrix}, \qquad
g_r^{-1} := \begin{pmatrix} 0 & -1\cr 1 & -r\end{pmatrix}^{-1}~.
\ee
Under the $\mr{PSL}(2,\mathbb{Z})$ transformation
\be
\begin{pmatrix} \tau_2' & z_2'\cr z_2' & \sigma_2' \end{pmatrix}
\longmapsto  \tilde{g} \, \begin{pmatrix} \tau_2' & z_2'\cr z_2' & \sigma_2' \end{pmatrix}\, 
\tilde{g}^t, 
\qquad \tilde{g}:= \begin{pmatrix} \alpha & \beta\cr \gamma & \delta \end{pmatrix}\, ,
\ee
$(x',y')$ maps to
\be \label{e471y}
x ' \longmapsto  {\alpha\gamma + \beta \delta y' + (\alpha\delta + \beta\gamma) x'
\over \alpha^2 + \beta^2 y' + 2\alpha\beta x'}, 
\qquad  y' \longmapsto  {\gamma^2 + \delta^2 y' + 2\gamma \delta x'
\over \alpha^2 + \beta^2 y' + 2\alpha\beta x'}\, .
\ee
Using the condition $y'\ge (x')^2$ for points in $\IH_2$, we can see that the 
denominator of each of the two terms in \refb{e471y} is always positive. 
Therefore for any $\tilde{g}$ of the form $g_r^n$, $n\in\mathbb{Z}$, 
we must have
\be \label{e472x}
\alpha\gamma + \beta \delta y' + (\alpha\delta + \beta\gamma) x' > 0\, .
\ee
These boundaries are
a subset of the straight lines of the form \eqref{epolesimaginary} some of which are shown in Figure~\ref{figfour} except that now
we are drawing them in the $(x',y')$ plane instead of in the $(x,y)$ plane.

To solve the problem of determining the allowed region in the $(x',y')$ plane, we
label the lines obtained by setting the LHS of \refb{e472x} to zero 
by the points where they intersect the parabola 
$y'=(x')^2$. These can be specified by specifying the $x'$ value of the point, which
is the convention we shall follow. 
The only exception is the $x'=$~constant lines whose one end is at $y'=\infty$. We
shall denote this point by $\infty$. In this convention, the original $x'=0$ line,
connecting $\infty$ to 0,
will be represented as $[\infty,0]$ and the allowed region is to the left of this line.
The transformation \refb{e471y} maps this line to
$[\delta/\beta, \gamma/\alpha]$.

It is easy to prove recursively that 
\begin{eqsp} \label{e473x}
    g_r^n&=(-1)^{n+1}\begin{pmatrix}
        a(n-1)&-a(n)\\a(n)&-a(n+1)
    \end{pmatrix}\in\mathrm{PSL}(2,\IZ),\quad n\geq 1~,
    \\
    g_r^{-n}&=(-1)^{n+1}\begin{pmatrix}
        -a(n+1)&a(n)\\-a(n)&a(n-1)
    \end{pmatrix}\in\mathrm{PSL}(2,\IZ),\quad n\geq 1~,
\end{eqsp}
where the $a(n)$'s have been defined in \refb{e461x}.
Under \refb{e473x} the line $(\infty,0)$ maps to 
\be \label{e474x}
\left[{a(n+1)\over a(n)}, {a(n)\over a(n-1)}\right] \quad {\rm and} \quad 
\left[{a(n-1)\over a(n)}, {a(n)\over a(n+1)}\right]\, ,
\ee
respectively. These are lines connecting two successive points in the sets
$S_1$ and $S_2$ respectively. To understand the $n\to\infty$ limits of these
points 
we need
to find the fixed points of the transformation \refb{e471y} under $g_r$. 
Demanding that  $(x', y'=x^{\prime 2})$ maps to 
$(x', y'=x^{\prime 2})$ under the map \refb{e471y} for $\tilde{g}=g_r$, we get
\be
x' =  (r \, x^{\prime 2} - x') / x^{\prime 2}\, .
\ee
This gives a quadratic equation for $x'$ with two solutions
\be 
x' = u_c, u_c^{-1}\, , 
\ee
with $u_c$ given in \refb{eucdef}.

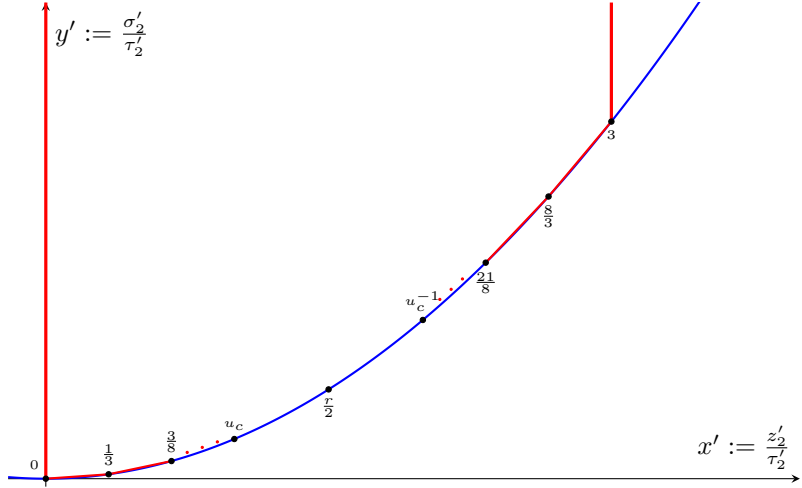
\begin{figure}
    \centering
\begin{tikzpicture}
  \begin{axis}[
      axis lines=middle,
      xlabel={$x':=\frac{z'_2}{\tau'_2}$},
      ylabel={$y':=\frac{\sigma'_2}{\tau'_2}$},
      xmin=-0.2, xmax=4,
      ymin=-0.2, ymax=12,
      domain=-0.2:3.5,
      samples=200,
      width=12cm, height=8cm,
      xtick=\empty,
      ytick=\empty,
      tick label style={font=\small},
    ]

    \addplot[red, very thick] coordinates {(0,0) (0,12)};
    \addplot[red, very thick] coordinates {(3,3^2) (3,12)};

    \addplot[blue, thick] {x^2};

    \addplot[red, thick] coordinates {(0,0) (1/3,{(1/3)^2})};
    \addplot[red, thick] coordinates {(1/3,{(1/3)^2}) (2/3,{(2/3)^2})};
    \addplot[red, thick] coordinates {(8/3,{(8/3)^2}) (3,{3^2})};
    \addplot[red, thick] coordinates {(7/3,{(7/3)^2}) (8/3,{(8/3)^2})};

    \addplot[red, only marks, mark=*, mark size=0.4pt] coordinates {
        (0.75,{0.75^2 + 0.1})
        (0.83,{0.83^2 + 0.1})
        (0.91,{0.91^2 + 0.1})
    };

    \addplot[red, only marks, mark=*, mark size=0.4pt] coordinates {
        (2.09,{2.09^2 + 0.15})
        (2.15,{2.15^2 + 0.15})
        (2.21,{2.21^2 + 0.15})
    };

    \addplot[only marks, mark=*, mark size=1pt] coordinates
      {
        (0,0)
        (1/3,{(1/3)^2})
        (2/3,{(2/3)^2})
        (1,1)
        (1.5,1.5^2)
        (2,4)
        (3,9)
        (7/3,{(7/3)^2})
        (8/3,{(8/3)^2})
      };

    \node[font=\tiny, anchor=south] at (axis cs:-0.06,0) {$0$};
    \node[font=\tiny, anchor=south] at (axis cs:1/3,{(1/3)^2}) {$\tfrac{1}{3}$};
    \node[font=\tiny, anchor=south] at (axis cs:2/3,{(2/3)^2}) {$\tfrac{3}{8}$};
    \node[font=\tiny, anchor=south] at (axis cs:1,1) {$u_c$};
    \node[font=\tiny, anchor=south] at (axis cs:2,4) {$u_c^{-1}$};
    \node[font=\tiny, anchor=north] at (axis cs:3,9) {$3$};
    \node[font=\tiny, anchor=north] at (axis cs:7/3,{(7/3)^2}) {$\tfrac{21}{8}$};
    \node[font=\tiny, anchor=north] at (axis cs:8/3,{(8/3)^2}) {$\tfrac{8}{3}$};
    \node[font=\tiny, anchor=north] at (axis cs:1.5,{(1.5)^2}) {$\frac{r}{2}$};


  \end{axis}
\end{tikzpicture}
    \caption{This figure shows, for $r=3$, the restriction 
    on $(\sigma_2',\tau_2',z_2')$ imposed by the Heaviside functions. 
    The Heaviside functions only allow the region bounded by the red lines. 
    There are infinite number of such red lines shown by the dots.
    The $x'$-axis has been rescaled appropriately to show close points on the 
    curve separately.}
    \label{fig:Heaviside_r=3}
\end{figure}

The lines described in \refb{e474x} 
have been shown in red for $r=3$ in Figure \ref{fig:Heaviside_r=3}. 
The image of the
condition $x'>0$ translates to the condition that $(x',y')$ should lie above the red lines
appearing in Fig.~\ref{fig:Heaviside_r=3}, to the right of the line $x'=0$ and to the 
left of the
line $x'=r$.
This shows that 
part of the parabola $y'=x^{\prime 2}$ 
between $x'=0$ and $x'=u_c$ and between
$x'=1/u_c$ and $x'=r$ are removed by the Heaviside functions except for the isolated
points that appear in the set $S_1\cup S_2$ in \eqref{eq:def_S123}.

Our next task is to translate the condition on $(x',y')$ given above 
to a condition on the
integers $a,b,c,d$ using \refb{edefprimenew}.
A given $(a,b,c,d)$ will map the $\CR$ chamber to one of the
chambers in the $(x',y')$ plane. If this chamber lies in the allowed region
then the corresponding $(a,b,c,d)$ is allowed. Otherwise
it is excluded from the sum in \refb{ef3exp}.
To examine this, we first note that the vertices of the $\CR$ chamber 
shown in Figure~\ref{figfour} are at:
\be
(x,y) := \left(z_2/\tau_2, \sigma_2/\tau_2\right) = (0,0), \ (0,\infty), \ (-1,1)\, .
\ee
Using \refb{expypdef} we can find their
images in the $(x',y')$ plane:
\be \label{e478x}
(x',y') = (c/a, c^2/a^2), \ (d/b, d^2 / b^2), \ ((d-c)/(b-a), (d-c)^2 / (b-a)^2)\, .
\ee
These are all points on the parabola $y'=x^{\prime 2}$. Allowed values of $(a,b,c,d)$
are those for which all the vertices lie in the allowed region in the $(x',y')$ plane.
From Fig.~\ref{fig:Heaviside_r=3} we see that this requires the $x'$ values 
of all the vertices in \refb{e478x} to
either coincide with the set of points in the sets $S_1$ or $S_2$ or lie in the range
$u_c<x<u_c^{-1}$ in which case it is in the set $S_3$. 
This concludes the proof of the lemma.  
\end{proof}

We now decompose the sum into $r=1,2$ and $r\geq 3$ terms:
\begin{eqsp}
    \CF_3=\CF_3^{<}+\CF_3^{>}~,
\end{eqsp}
where 
\begin{eqsp} \label{e460xy}
\CF_3^{<}&=f_{-1}^2 \sum_{r=1,2} r 
\sum_{\big{(}\begin{smallmatrix} a & b\cr c & d\end{smallmatrix}\big{)}\in G_r\backslash \mathrm{PSL}(2,\mathbb{Z})}\hskip .1in 
 \bigg\{\prod_{n=-\infty}^\infty 
H(a_nc_n\tau_2 + b_nd_n\sigma_2 + (a_nd_n+b_nc_n)z_2)
\bigg\}
\\ & \hskip 1in \times \
e^{2\pi i \{(-a^2-c^2+r ac)\tau+(-b^2-d^2+r bd)\sigma+
(-2ab-2cd + r(ad+bc)) z\}}  \,,
\\
\CF_3^{>}&=f_{-1}^2 \sum_{r=3}^{\infty} r 
\sum_{\big{(}\begin{smallmatrix} a & b\cr c & d\end{smallmatrix}\big{)}\in G_r\backslash \mathrm{PSL}(2,\mathbb{Z})}\hskip .1in 
 \bigg\{\prod_{n=-\infty}^\infty 
H(a_nc_n\tau_2 + b_nd_n\sigma_2 + (a_nd_n+b_nc_n)z_2)
\bigg\}
\\ & \hskip 1in \times \
e^{2\pi i \{(-a^2-c^2+r ac)\tau+(-b^2-d^2+r bd)\sigma+
(-2ab-2cd + r(ad+bc)) z\}}  \,.
\end{eqsp}

\begin{prop}\label{prop:F_3>_finite}
The sum $\CF_{3}^{<}$ is a finite sum. 
\begin{proof}
Our aim is to show that for each $r=1,2$, there are finitely many $\mathrm{SL}(2,\IZ)$ matrices which satisfy all the constraints \eqref{eq:heaviside_const_n}. Let us start with $r=1$. We observe that in this case $g_1^3=\mathds{1}$ and $S_1\cup S_2=\{0,1,\infty\}$. Following the proof of Lemma \ref{lemma:Heaviside_matr_const}, we see that if $\big{(}\begin{smallmatrix} a & b\cr c & d\end{smallmatrix}\big{)}\in\mathrm{PSL}(2,\IZ)$ satisfies constraints \eqref{eq:heaviside_const_n}, then 
\begin{eqsp}
{c\over a}, \ {d\over b}, \ {c-d\over a-b}\in\{0,1,\infty\}~.    
\end{eqsp}
The triangle with vertices $(0,1,\infty)$ is a fundamental domain denoted by $\CL$ in Figure \ref{figfour}. There are three $\mathrm{SL}(2,\IZ)$-matrices, 
\be\label{eq:permut_r=12}
\begin{pmatrix} 0 & -1\cr 1 & 0\end{pmatrix}, \qquad 
\begin{pmatrix} 1 & 1\cr 0 & 1\end{pmatrix}, \qquad 
\begin{pmatrix} 1 & 0\cr 1 & 1\end{pmatrix}\, ,
\ee
which map a point $\Omega\in\CR$ to the $\CL$-chamber. 
Indeed, using \refb{eCRmap} 
one can see that the action of these matrices
transform the vertices
$(-1,0,\infty)$ of $\CR$ to $(1,\infty, 0)$, $(\infty, 0,1)$ and 
$(0,1,\infty)$ respectively. 
These matrices are related to each other by multiplication by powers of $g_1$ from the
left.
Thus we conclude that with $r=1$, the sum over $G_r\backslash\mathrm{PSL}(2,\IZ)$ has only one nonvanishing term. Hence the contribution is finite.

\begin{figure}
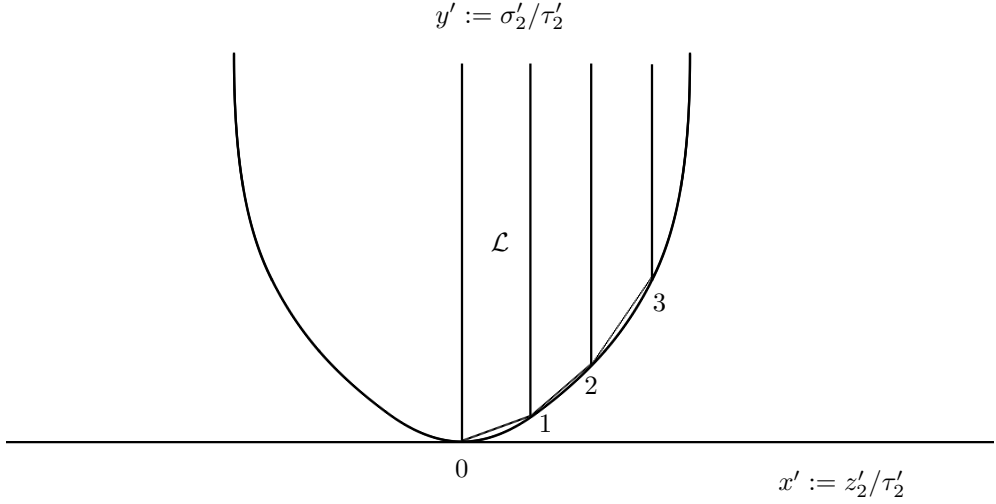

\begin{center}
\figfive
\end{center}

\vskip -.2in

\caption{This figure shows, 
for $r=3$, the three chambers with one vertex at $\infty$. There are other chambers,
not related to these by transformation \refb{e555}, which lie close to the
parabola, with vertices lying between $u_c$ and $u_c^{-1}$.} \label{figfive}
\end{figure}

We now discuss the $r=2$ case. In this case, using Lemma \ref{lemma:Heaviside_matr_const}, 
we see that $u_c=u_{c}^{-1}=1$. Thus $S_3$ is empty. Moreover, Lemma \ref{lemma:Heaviside_matr_const} 
implies that a $\mathrm{PSL}(2,\IZ)$-matrix $\big{(}\begin{smallmatrix} 
a & b\cr c & d\end{smallmatrix}\big{)}$ contributes to the sum only if 
\begin{eqsp}
{c\over a}, \ {d\over b}, \ {c-d\over a-b}\in S_1\cup S_2~.    
\end{eqsp}
In the sum over $G_2\backslash\mathrm{PSL}(2,\IZ)$, we have the freedom of left 
multiplying $\big{(}\begin{smallmatrix} a & b\cr c & d\end{smallmatrix}\big{)}$ by $g_2^n$. \eqref{e471y} and \eqref{e473x} shows that any point in the set $S_1\cap S_2$ can be obtained from $\infty$ by the action of $g_2^{n}$ for some $n\in\IZ$. Thus  
we can assume that $\big{(}\begin{smallmatrix} a & b\cr c & d\end{smallmatrix}\big{)}$ maps $\CR$ to a fundamental domain with one vertex at $\infty$. 
There are only two such fundamental domains in the $0<x'<2$ region, namely the triangles 
with vertices $(0,1,\infty)$ and $(1,2,\infty)$. Furthermore, the domain with vertices $(1,2,\infty)$ 
is obtained from $\CL$ by the action of $g_2=\big{(}\begin{smallmatrix} 0 & -1\cr 1 & -2\end{smallmatrix}\big{)}$. Thus,  we can restrict to the case where the image of $\CR$ under $\big{(}\begin{smallmatrix} a & b\cr c & d\end{smallmatrix}\big{)}$ is $\CL$. As in the $r=1$ case, there are three 
$\mathrm{PSL}(2,\IZ)$-matrices which transform the vertices
$(-1,0,\infty)$ of $\CR$ to $(1,\infty, 0)$, $(\infty, 0,1)$ and 
$(0,1,\infty)$ respectively, but unlike the $r=1$ case, they are not related by $G_2$ transformation. 
Thus there are at most 3 matrices in $G_2\backslash\mathrm{PSL}(2,\IZ)$ which give a 
nonvanishing contribution to the sum and the sum is finite. 
\end{proof}
\end{prop}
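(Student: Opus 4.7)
My strategy is to combine the geometric reduction from Lemma \ref{lemma:Heaviside_matr_const} with the explicit structure of the cyclic group $G_r$ for each of the two small values $r = 1, 2$. The two key inputs I will exploit are: (i) for $r = 1, 2$ the continuous component $S_3$ in Lemma \ref{lemma:Heaviside_matr_const} is empty (for $r = 1$, $u_c$ defined in \refb{eucdef} is non-real; for $r = 2$ one has $u_c = u_c^{-1} = 1$, collapsing the open interval to a point), so that admissible matrices must satisfy $c/a,\ d/b,\ (c-d)/(a-b)\in S_1\cup S_2$; and (ii) the $G_r$-action on $\mathrm{PSL}(2,\IZ)$ has enough orbits on $S_1\cup S_2$ to reduce the sum to finitely many coset representatives.

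For $r = 1$, I would first verify directly that $g_1^3 = I$, so $|G_1| = 3$. Then I would evaluate the Lucas-type sequence in \refb{e461x} with $r = 1$ to obtain the period-six pattern $(0, 1, 1, 0, -1, -1, \dots)$, which collapses $S_1\cup S_2$ to the three points $\{0, 1, \infty\}$. The constraint from Lemma \ref{lemma:Heaviside_matr_const} then forces the vertex triple $(c/a,\ d/b,\ (c-d)/(a-b))$ of the image of $\CR$ under $\bigl(\begin{smallmatrix}a&b\\c&d\end{smallmatrix}\bigr)$ to be a permutation of $(0,1,\infty)$; geometrically, the image chamber must be $\CL$. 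Enumeration gives the three matrices in \eqref{eq:permut_r=12} as the only candidates, and a direct check shows these are related by left multiplication by powers of $g_1$ (which cyclically permutes the vertex labels $\{0,1,\infty\}$), so they collapse to a single coset in $G_1\backslash \mathrm{PSL}(2,\IZ)$.

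For $r = 2$, I would first note that $a(n) = n$, so $S_1 = \{n/(n+1) : n \ge 0\}$ and $S_2 = \{\infty\}\cup\{(n+1)/n : n \ge 1\}$, both accumulating at $u_c = 1$. Next I would check, using the transformation law \refb{e471y} with $\tilde g = g_2^{\pm 1}$, that the $G_2$-orbit of $\infty$ along the parabola is exactly $S_2$ and the orbit of $0$ is exactly $S_1$. Consequently, modulo left multiplication by $G_2$ I may assume at least one of the three ratios $c/a,\ d/b,\ (c-d)/(a-b)$ equals $\infty$, i.e.\ the image chamber has a vertex at $\infty$. Within the admissible strip carved out by the Heaviside inequalities, only two chambers have a vertex at $\infty$, namely $(0,1,\infty)$ and its $g_2$-image $(1,2,\infty)$; so I may fix the image to be $\CL$. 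Three $\mathrm{PSL}(2,\IZ)$ matrices realize a map from the vertex set of $\CR$ to that of $\CL$, one per permutation of $(0,1,\infty)$, and unlike the $r=1$ case these three do not form a single $G_2$-orbit (since $g_2$ does not permute $\{0,1,\infty\}$ cyclically); hence at most three terms contribute and the sum is finite.

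The main obstacle I anticipate is the $r = 2$ step: verifying rigorously that every admissible $(a,b,c,d)$ can actually be brought into one of the three explicit representatives by some left multiplication by $g_2^n$. This reduces to a careful bookkeeping of how $g_2^n$ permutes the triple of vertices in $S_1\cup S_2$, plus an elementary sweep to confirm that no admissible chamber other than $\CL$ and $g_2\cdot\CL$ is consistent with the full infinite product of Heaviside functions in \refb{e460xy}.
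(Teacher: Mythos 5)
Your proposal is correct and follows essentially the same route as the paper: invoke Lemma \ref{lemma:Heaviside_matr_const} with $S_3=\varnothing$, use $g_1^3=\mathds{1}$ and $S_1\cup S_2=\{0,1,\infty\}$ to get a single coset for $r=1$, and reduce to chambers with a vertex at $\infty$ to get at most three cosets for $r=2$. One small imprecision: for $r=2$ the sets $S_1$ and $S_2$ form a \emph{single} $G_2$-orbit on the parabola (e.g.\ $g_2^{-1}$ sends $\infty$ to $0$), rather than two separate orbits of $\infty$ and $0$; your conclusion that some vertex can be moved to $\infty$ is unaffected, and indeed relies on this.
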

For the convergence of $\CF_3^{>}$, the idea is to generalize the analysis of the proof of Proposition \ref{prop:F_3>_finite} to $r\geq 3$ using Lemma \ref{lemma:Heaviside_matr_const}. Let us define 
\begin{eqsp} \label{e573}
    \CS_{12}&:=\left\{\begin{pmatrix}
        a&b\\c&d
    \end{pmatrix}\in\mathrm{PSL}(2,\IZ):\left\{{c\over a},{d\over b},{c-d\over a-b}\right\}\bigcap (S_1\cup S_2)\neq \varnothing\right\}~,
    \\
    \CS_{3}&:=\left\{\begin{pmatrix}
        a&b\\c&d
    \end{pmatrix}\in\mathrm{PSL}(2,\IZ):{c\over a}, {d\over b},{c-d\over a-b}\in S_3\right\}~.
\end{eqsp}
Then we can write 
\begin{eqsp}
    \CF_3^>=\CF_3^{>12}+\CF_3^{>3}~,
\end{eqsp}
where 
\begin{align}
\CF_3^{>12}&:=f_{-1}^2 \sum_{r=3}^{\infty} r 
\sum_{\big{(}\begin{smallmatrix} a & b\cr c & d\end{smallmatrix}\big{)}\in G_r\backslash \CS_{12}}\hskip .1in 
e^{2\pi i \{(-a^2-c^2+r ac)\tau+(-b^2-d^2+r bd)\sigma+
(-2ab-2cd + r(ad+bc)) z\}}  \,,\label{eq:CF3>12_def}
\\
\CF_3^{>3}&:=f_{-1}^2 \sum_{r=3}^{\infty} r 
\sum_{\big{(}\begin{smallmatrix} a & b\cr c & d\end{smallmatrix}\big{)}\in G_r\backslash \CS_{3}}\hskip .1in 
e^{2\pi i \{(-a^2-c^2+r ac)\tau+(-b^2-d^2+r bd)\sigma+
(-2ab-2cd + r(ad+bc)) z\}}  \,.
\end{align}
\begin{prop}
The series $\CF_3^{>12}$ converges absolutely and uniformly on compact subsets of the $\CR$-chamber.     
\end{prop}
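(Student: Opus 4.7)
The plan is to make the sum absolutely convergent by using the $G_r$-action to select a convenient coset representative in $\CS_{12}$ and then using the Heaviside-function constraints (inherited from \refb{ef3exp}) to confine the sum. The key structural input, already developed in the proof of Lemma~\ref{lemma:Heaviside_matr_const}, is that $S_1 \cup S_2$ is exactly the orbit of $\infty$ under the M\"obius action $y \mapsto r - 1/y$ that $g_r$ induces on the ratios $c/a$, $d/b$, $(c-d)/(a-b)$.

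First I would normalize a coset representative. Given any $\big{(}\begin{smallmatrix} a & b \cr c & d\end{smallmatrix}\big{)} \in \CS_{12}$, at least one of the three ratios lies in $S_1 \cup S_2$, and left-multiplying by a suitable power of $g_r$ moves that distinguished ratio to $\infty$. This produces three mutually exhaustive families of representatives: (i) $a = 0$, forcing $(b,c) = (\mp 1, \pm 1)$, with $d \in \IZ$ free; (ii) $b = 0$, forcing $(a,d) = (\pm 1, \pm 1)$, with $c \in \IZ$ free; (iii) $a = b = \pm 1$, forcing $d - c = \pm 1$, with $c \in \IZ$ free. A given coset is reached from at most three families, so this change of variable costs only a bounded multiplicity.

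Second I would bound the free parameter using only a couple of Heavisides. For family (i), the $n=0$ Heaviside at the representative $(0,-1,1,d)$ equals $H(-d\sigma_2 - z_2)$, and the computation $g_r^{-1}\big{(}\begin{smallmatrix} 0 & -1 \cr 1 & d\end{smallmatrix}\big{)} = \big{(}\begin{smallmatrix} 1 & r+d \cr 0 & 1\end{smallmatrix}\big{)}$ makes the $n = -1$ Heaviside equal to $H((r+d)\sigma_2 + z_2)$. Using $z_2 < 0$ and $\sigma_2 > 0$ on $\CR$, these two constraints alone confine $d$ to an interval of length $r + O(1)$ whose endpoints are continuous functions of $\Omega$; additional Heavisides only shrink this interval. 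Families (ii) and (iii) yield analogous $r$-wide intervals with the role of $\sigma_2$ played by $\tau_2$ and $-z_2$ respectively.

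Third I would extract exponential decay. On the representative $(0,-1,1,d)$, direct substitution into \refb{eBdef} gives $B = \sigma_2\, d^2 + (2z_2 + r\sigma_2)\, d + (\sigma_2 + \tau_2 + r z_2)$, a quadratic in $d$ with positive leading coefficient $\sigma_2$. Its vertex, which lies inside the admissible interval, produces the upper bound
\be
B \;\le\; \frac{\det\mr{Im}\,\Omega}{\sigma_2} + \sigma_2\Big(1 - \frac{r^2}{4}\Big)\, ,
\ee
so for $\Omega$ in any compact subset $\CO \subset \CR$ and $r \geq 3$ one has $B \leq -C(\CO)\, r^2 + K(\CO)$ with $C(\CO) > 0$. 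The identical argument in families (ii) and (iii) replaces $\sigma_2$ by $\tau_2$ and by $-z_2$. Summing over the $O(r)$ admissible values of the free integer and then over $r \geq 3$ gives the bound $|\CF_3^{>12}(\Omega)| \leq 3\, f_{-1}^2\, e^{2\pi K(\CO)} \sum_{r \geq 3} r\,(r+O(1))\, e^{-2\pi C(\CO)\, r^2}$, which is finite; continuity of the bounds in $\Omega$ promotes this to absolute and uniform convergence on $\CO$. The main obstacle is the case-by-case verification that in each of the three families the quadratic $B$ acquires a leading coefficient of the right sign and that the vertex actually falls inside the admissible interval; once that is established, the $r^2$-rate decay beats the linear growth in $r$ of the admissible interval and the series converges trivially.
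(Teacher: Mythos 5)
Your strategy coincides with the paper's: you normalize coset representatives so that one of the ratios $c/a$, $d/b$, $(c-d)/(a-b)$ sits at $\infty$ (your three families are exactly the paper's three cyclic permutations left-multiplied by $\big{(}\begin{smallmatrix} 1 & 0\cr n & 1\end{smallmatrix}\big{)}$), you use the $n=0$ and $n=-1$ Heaviside constraints to confine the remaining free integer to an interval of length $r$, and you then sum an exponential bound over $r$. The enumeration and the confinement step are correct. The error is in the decay estimate: on $\CR$ the quantity $B(d)=\sigma_2 d^2+(2z_2+r\sigma_2)d+(\sigma_2+\tau_2+rz_2)$ is a quadratic with \emph{positive} leading coefficient $\sigma_2$, hence convex in $d$; its vertex is its \emph{minimum}, and its maximum over the admissible interval is attained at the endpoints. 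The inequality $B\le \det\mr{Im}\,\Omega/\sigma_2+\sigma_2(1-r^2/4)$ is therefore false for admissible terms. Concretely, $d=0$ (the matrix $\big{(}\begin{smallmatrix} 0 & -1\cr 1 & 0\end{smallmatrix}\big{)}$, which survives all the Heavisides) gives $B=\tau_2+\sigma_2+rz_2$; for $\tau_2=\sigma_2=2$, $z_2=-1/10$, $r=3$ this equals $3.7$, while your claimed bound evaluates to roughly $-0.5$. The same convexity problem occurs in families (ii) and (iii), where the leading coefficients are $\tau_2$ and $-z_2$, both positive on $\CR$.

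The gap is repairable, and the repair is what the paper does: evaluating $B$ at the endpoints of the admissible interval (e.g. $d=0$ and $d=-(r-1)$ in family (i)) gives only \emph{linear} decay, $B\le \tilde\sigma_2'+\tilde\tau_2'-\tilde C\, r$ with $\tilde C=\min\{\tilde\tau_2'-\tilde z_2',\,\tilde z_2'\}>0$ on $\CR$, which the paper extracts uniformly in $n$ from the rewriting $B=\tilde\sigma_2'+\tilde\tau_2'-n(r-n)(\tilde\tau_2'-\tilde z_2')-((n+1)(r-n-1)+1)\tilde z_2'$ in \refb{e496yx}. Linear decay still suffices, since $\sum_{r\ge 3}r\,(r+O(1))\,e^{-2\pi\tilde C r}<\infty$, and continuity of $\tilde C$ in $\Omega$ gives uniformity on compact subsets. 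So your conclusion stands, but only after replacing the vertex evaluation by an endpoint evaluation and downgrading the claimed decay from $e^{-cr^2}$ to $e^{-cr}$.
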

\begin{proof}
We first show that, for a fixed $r$, the inner sum is a finite sum with at most $3r$ terms. 
For a fixed $r$, consider a typical term corresponding to $\big{(}\begin{smallmatrix} a & b\cr c & d\end{smallmatrix}\big{)}\in G_r\backslash \CS_{12}$. Let $x',y'$ be as in \eqref{expypdef}. Then 
since one of ${c\over a}, {d\over b},{c-d\over a-b}$ is an element of $\CS_{12}$, we can left 
multiply $\big{(}\begin{smallmatrix} a & b\cr c & d\end{smallmatrix}\big{)}$ 
by $g_r^n$ for some $n$ such that $(x',y')$ belongs to a chamber with one vertex at $\infty$. 
There are exactly
$r$ chambers with one vertex at infinity, the other two vertices being at $x'=\{0,1\}$,
$x'=\{1,2\}$, $\cdots$ $x'=\{r-1,r\}$. 
This has been illustrated in Fig.~\ref{figfive} for the case $r=3$.
Taking into account the cyclic permutation of the three vertices, we get at most 
$3r$ independent matrices $\big{(}\begin{smallmatrix} a & b\cr c & d\end{smallmatrix}\big{)}$
that are not related by left multiplication by the matrices \refb{e555}.  The actual number is less, since for example the chamber with vertices $(r-1,r,\infty)$ is obtained from he one with vertices at $(0,1,\infty)$ by the action of $g_r=\big{(}\begin{smallmatrix} 0 & -1\cr 1 & -r\end{smallmatrix}\big{)}$.

We now show the convergence of the sum over $r$. For this we shall first identify the $3r$ independent matrices described above.
Let $(x',y')$
be the image of $(x,y)$ in the chamber $\CL=(0,1,\infty)$. 
As noted in \eqref{eq:permut_r=12}, there are actually three images due to
the cyclic permutation of the vertices. These are generated from the point $(x,y)$ in the
chamber $\CR$ by the action of the matrices
\be
\begin{pmatrix} 0 & -1\cr 1 & 0\end{pmatrix}, \qquad 
\begin{pmatrix} 1 & 1\cr 0 & 1\end{pmatrix}, \qquad 
\begin{pmatrix} 1 & 0\cr 1 & 1\end{pmatrix}\, ,
\ee
which map the vertices
$(-1,0,\infty)$ of $\CR$ to $(1,\infty, 0)$, $(\infty, 0,1)$ and 
$(0,1,\infty)$ respectively. We denote the images of $(\tau,\sigma,z)$ under these three
maps by 
\begin{eqsp}\label{eq:tsz_primes_def}
   \begin{pmatrix} 0 & -1\cr 1 & 0\end{pmatrix}:  (\tau,\sigma,z)&\longmapsto (\tilde{\tau}',\tilde{\sigma}',\tilde{z}')=(\sigma,\tau,-z)~,
\\
\begin{pmatrix} 1 & 1\cr 0 & 1\end{pmatrix}:(\tau,\sigma,z)&\longmapsto (\tilde{\tau}'',\tilde{\sigma}'',\tilde{z}'')=(\tau+\sigma+2z,\sigma,\sigma+z)~,
\\
\begin{pmatrix} 1 & 0\cr 1 & 1\end{pmatrix}:(\tau,\sigma,z)&\longmapsto (\tilde{\tau}''',\tilde{\sigma}''',\tilde{z}''')=(\tau,\tau+\sigma+2z,\tau+z)~.
\end{eqsp}
The images in the other chambers with a vertex at
$\infty$ can be generated from $\CL$ by the action of the matrices
\be
\begin{pmatrix} 1 & 0\cr n & 1\end{pmatrix}\, , \qquad 1\le n\le r-1\, .
\ee
Indeed, using \refb{esl2zonxy} we see that under this map,
\be
(0,1,\infty) \mapsto (n, n+1,\infty)\, .
\ee
Then we have 
\begin{eqsp}\label{eq:CSr_def}
    G_r\backslash\CS_{12}\subset \left\{\begin{pmatrix} 1 & 0\cr n & 1\end{pmatrix}\begin{pmatrix} 0 & -1\cr 1 & 0\end{pmatrix}, 
\begin{pmatrix} 1 & 0\cr n & 1\end{pmatrix}\begin{pmatrix} 1 & 1\cr 0 & 1\end{pmatrix}, 
\begin{pmatrix} 1 & 0\cr n & 1\end{pmatrix}\begin{pmatrix} 1 & 0\cr 1 & 1\end{pmatrix}:0\leq n\leq r-1\right\}~,
\end{eqsp}
where the containment is because of the fact that there may be further
identification between these matrices by left multiplication by powers of 
\refb{e555}. Using \refb{edefprime}, we see that left multiplication of
$\begin{pmatrix} a & b\cr c & d\end{pmatrix}$ by $\begin{pmatrix} 1 & 0\cr n & 1\end{pmatrix}$
maps 
\be
(\tau_2',\sigma_2',z_2') \to (\tau_2', \sigma'_2+n^2\tau'_2+ 2n z'_2, z'_2+n\tau'_2)\, .
\ee
Using this and \eqref{eq:CSr_def}, we thus have 
\begin{eqsp}\label{e487x}
\left|\CF_3^{>12}\right|&\leq f_{-1}^2 \sum_{r=3}^{\infty} r 
\sum_{n=0}^{r-1}\left[e^{2\pi \{\tilde{\tau}_2'+(\tilde{\sigma}_2'+n^2\tilde{\tau}_2'+2n\tilde{z}_2')-r(\tilde{z}_2'+n\tilde{\tau}_2')\}}+e^{2\pi \{\tilde{\tau}_2''+(\tilde{\sigma}_2''+n^2\tilde{\tau}_2''+2n\tilde{z}_2'')-r(\tilde{z}_2''+n\tilde{\tau}_2'')\}}\right.\\&\left.\hspace{7cm}+e^{2\pi \{\tilde{\tau}_2'''+(\tilde{\sigma}_2'''+n^2\tilde{\tau}_2'''+2n\tilde{z}_2''')-r(\tilde{z}_2'''+n\tilde{\tau}_2''')\}}\right]
\\
&= f_{-1}^2 \sum_{r=3}^{\infty} r 
\sum_{n=0}^{r-1} 
\left[e^{2\pi\{\tilde{\sigma}_2'+\tilde{\tau}_2'-n(r-n)\tilde{\tau}_2'
+ (2n-r)\, \tilde{z}_2'\}}+e^{2\pi\{\tilde{\sigma}_2''+\tilde{\tau}_2''-n(r-n)\tilde{\tau}_2''
+ (2n-r)\, \tilde{z}_2''\}}\right.
\\&\hspace{7cm}+\left.e^{2\pi\{\tilde{\sigma}_2'''+\tilde{\tau}_2'''-n(r-n)\tilde{\tau}_2'''
+ (2n-r)\, \tilde{z}_2'''\}}\right]~.
\end{eqsp}
We now use the following bound for $r\geq 3$:
\begin{eqsp} \label{e496yx}
e^{2\pi\{\tilde{\sigma}_2'+\tilde{\tau}_2'-n(r-n)\tilde{\tau}_2'
+ (2n-r)\, \tilde{z}_2'\}} &= e^{2\pi \{\tilde{\sigma}_2'+\tilde{\tau}_2'- n (r-n) (\tilde{\tau}_2'- \tilde{z}_2')
- ((n+1)(r-n-1)+1) \tilde{z}_2'\}}  
\\
&\leq 
 e^{2\pi(\tilde{\sigma}_2'+\tilde{\tau}_2'- \tilde{C}'\, r)}\ \hbox{for $0\le n\le r-1$}, \quad
\tilde{C}' = \min\{\tilde{\tau}'_2-\tilde{z}_2', \tilde{z}'_2\}\, .
\end{eqsp}
Similar bounds hold if $(\tilde{\tau}'_2,\tilde{\sigma}_2', \tilde{z}'_2)$ is replaced by $(\tilde{\tau}''_2,\tilde{\sigma}_2'', \tilde{z}''_2),(\tilde{\tau}''_2,\tilde{\sigma}_2'', \tilde{z}''_2)$ with constants $\tilde{C}'',\tilde{C}'''$ respectively. Moreover one can easily check that $\tilde{C}',\tilde{C}'',\tilde{C}'''>0$. Thus we get 
\begin{eqsp} \label{e497yx}
\left|\CF_3^{>12}\right|&\leq f_{-1}^2 \sum_{r=3}^{\infty} r 
\sum_{n=0}^{r-1} 
\left[e^{2\pi(\tilde{\sigma}_2'+\tilde{\tau}_2'- \tilde{C}'\, r)}+e^{2\pi(\tilde{\sigma}_2''+\tilde{\tau}_2''- \tilde{C}''\, r)}+e^{2\pi(\tilde{\sigma}_2'''+\tilde{\tau}_2'''- \tilde{C}'''\, r)}\right]
\\&=\sum_{r=3}^{\infty} r^2
\left[e^{2\pi(\tilde{\sigma}_2'+\tilde{\tau}_2'- \tilde{C}'\, r)}+e^{2\pi(\tilde{\sigma}_2''+\tilde{\tau}_2''- \tilde{C}''\, r)}+e^{2\pi(\tilde{\sigma}_2'''+\tilde{\tau}_2'''- \tilde{C}'''\, r)}\right]
\\&<\infty~.    
\end{eqsp}
The absolute and uniform convergence on compact subsets of the $\CR$-chamber follows from the continuity of $\tilde{\sigma}_2'+\tilde{\tau}_2',~  \tilde{C}',~\tilde{\sigma}_2''+\tilde{\tau}_2'',~ \tilde{C}'',\,\tilde{\sigma}_2'''+\tilde{\tau}_2''',$ and $ \tilde{C}'''$ in the $\CR$-chamber.
\end{proof}
We are left to prove the convergence of $\CF_3^{>3}$. For this, we note that it will be futile to
try to prove convergence
before we
pick a representative of the cosets $G_r\backslash \CS_3$ in the sum over
$a,b,c,d$, since the sum over the infinite number of
representatives of the coset will lead to divergence. 
With this in mind, we prove the following lemma.
\begin{lemma}\label{lemma:coset_reps}
We have 
\begin{eqsp}
    G_r\backslash\CS_3= \tilde{\CS}_3:=\left\{\begin{pmatrix}
        a&b\\c&d
    \end{pmatrix}\in\CS_3: {2\over r} \leq  {c\over a} < {r\over 2}\right\}~. 
\end{eqsp}
\end{lemma}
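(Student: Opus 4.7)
The plan is to exhibit $[2/r,r/2)$ as a fundamental domain for the left action of $G_r=\langle g_r\rangle$ on $\CS_3$, using the Möbius map $T(x)=r-1/x$ induced on the ratio $c/a$. First, I would verify by direct multiplication that $g_r\begin{pmatrix}a&b\\c&d\end{pmatrix}=\begin{pmatrix}-c&-d\\a-rc&b-rd\end{pmatrix}$, so each of the three ratios $c/a,\ d/b,\ (c-d)/(a-b)$ appearing in the definition of $\CS_3$ transforms under $g_r$ by the same rule $x\mapsto T(x)=r-1/x$. Since $u_c,u_c^{-1}$ are precisely the fixed points of $T$ (using $u_c+u_c^{-1}=r$ and $u_cu_c^{-1}=1$, we get $T(u_c)=r-u_c^{-1}=u_c$ and similarly for $u_c^{-1}$), the open interval $S_3=(u_c,u_c^{-1})$ is preserved bijectively by $T$, and therefore $\CS_3$ is stable under left $G_r$-multiplication.

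Next, I would establish that $[2/r,r/2)$ is a fundamental domain for $\langle T\rangle$ acting on $(u_c,u_c^{-1})$. The crucial identity is $T(2/r)=r-r/2=r/2$, which says that $r/2$ is the image of the left endpoint $2/r$ under one step of $T$. To promote this to the fundamental-domain statement, I would conjugate $T$ to a genuine dilation via
\begin{equation}
\psi(x):=\frac{x-u_c^{-1}}{x-u_c},
\end{equation}
which maps $(u_c,u_c^{-1})$ monotonically onto $(-\infty,0)$. A short calculation using $r-u_c=u_c^{-1}$ and $r-u_c^{-1}=u_c$ gives $\psi(T(x))=u_c^2\,\psi(x)$. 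Thus in the $\psi$-coordinate, $\langle T\rangle$ acts on $(-\infty,0)$ as multiplication by powers of $u_c^2\in(0,1)$, for which any half-open interval $[\alpha,u_c^2\alpha)$ with $\alpha<0$ is a fundamental domain. Since $\psi(r/2)=u_c^2\psi(2/r)$, the image $[\psi(2/r),u_c^2\psi(2/r))$ of $[2/r,r/2)$ is exactly of this form, so every $\langle T\rangle$-orbit in $(u_c,u_c^{-1})$ meets $[2/r,r/2)$ in exactly one point.

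Finally, I would assemble the statement of the lemma. The containment $\tilde\CS_3\subset\CS_3$ is tautological. Given any $M=\begin{pmatrix}a&b\\c&d\end{pmatrix}\in\CS_3$, the orbit of $c/a\in S_3$ under $\langle T\rangle$ contains a unique point in $[2/r,r/2)$; choosing the corresponding power $n$ and replacing $M$ by $g_r^n M$ yields a representative in $\tilde\CS_3$ (the remaining ratios $d/b,(c-d)/(a-b)$ stay in $S_3$ because $T$ preserves it). Uniqueness within the coset is immediate: two elements of $\tilde\CS_3$ related by a nontrivial power of $g_r$ would produce two distinct points of $[2/r,r/2)$ on the same $T$-orbit, contradicting the fundamental-domain property. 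The step I expect to be most delicate is the conjugation computation $\psi\circ T=u_c^2\psi$ and the resulting identification of $[2/r,r/2)$ with the standard fundamental domain for the $u_c^2$-dilation; everything else reduces to bookkeeping with the Möbius transformation $T$.
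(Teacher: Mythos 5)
Your proof is correct, and it follows the same overall strategy as the paper: reduce the coset question to the action of the induced M\"obius map $x\mapsto r-1/x$ on the ratio $c/a\in(u_c,u_c^{-1})$ (after checking that left multiplication by $g_r$ acts on all three ratios $c/a$, $d/b$, $(c-d)/(a-b)$ by this same map, hence preserves $\CS_3$), and then show that $[2/r,r/2)$ is a fundamental domain using the key identity $f_{g_r}(2/r)=r/2$. Where you differ is in how the fundamental-domain property is verified. The paper argues dynamically: $f_{g_r}$ is increasing with fixed points $u_c,u_c^{-1}$, the displacement inequalities \refb{einequalx} show that iterates move monotonically away from $u_c$ toward $u_c^{-1}$, and $[2/r,r/2)$ is mapped off itself by $g_r^{\pm 1}$. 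You instead linearize the map by conjugating with $\psi(x)=(x-u_c^{-1})/(x-u_c)$, turning the action into multiplication by $u_c^{2}$ on $(-\infty,0)$; your identity $\psi\circ f_{g_r}=u_c^{2}\,\psi$ is correct, since $f_{g_r}(x)-u_c^{-1}=u_c(x-u_c^{-1})/x$ and $f_{g_r}(x)-u_c=u_c^{-1}(x-u_c)/x$, and $\psi(r/2)=u_c^{2}\psi(2/r)$ then identifies $\psi([2/r,r/2))$ with a standard fundamental interval for the dilation. The linearization makes the existence half of the claim --- that every orbit actually enters $[2/r,r/2)$ --- completely explicit, which the paper's monotonicity argument treats more briefly, at the cost of introducing the auxiliary coordinate; otherwise the two arguments exploit exactly the same structure (the fixed points $u_c,u_c^{\pm1}$ and the special value $f_{g_r}(2/r)=r/2$).
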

\begin{proof}
It follows from \refb{expypdef} and \refb{e555} that on the parabola $y'=x^{\prime 2}$,
if we denote by $(x_n,y_n=x_n^2)$ the image of $(x',y'=x^{\prime 2})$ under the action of
$g_r^n,n>0$, then we have
\be\label{eurecurx}
x_{n+1}= r - x_n^{-1}, \qquad x_{n-1} = (r-x_n)^{-1} \, .
\ee
This gives,
\be\label{einequalx}
x_{n+1}-u_c = {x_n-u_c\over  x_n u_c} >  x_n-u_c\, ,\qquad
u_c^{-1}-x_{n+1} = {u_c^{-1}-x_n\over x_nu_c^{-1}}<u_c^{-1}-x_n 
\ee
as long as $u_c<x_n<u_c^{-1}$. 
This shows that 
at every step, $x$ is driven away from $u_c$ towards $u_c^{-1}$.
Conversely, the action of $g_r^{-1}$ will drive $x$ towards $u_c$. 
Furthermore, considering the first equation in \refb{e471y} with $y'=x'{}^2$ as a function $f_{g_r}:\IR\to\IR$, we have
\begin{eqsp}
    f_{g_r}(x)=r-\frac{1}{x}~.
\end{eqsp}  
We see that $f_{g_r}$ is a continuous function on $(u_c,u_c^{-1})$. Moreover, 
\begin{eqsp} \label{egderivative}
    f_{g_r}'(x)=\frac{1}{x^2}>0~,\quad x\neq 0~,
\end{eqsp}
showing that $f_{g_r}(x)$ is a monotonically increasing function of $x$. Also,
\begin{eqsp} \label{egspecial}
   f_{g_r}(u_c)=u_c~,\quad  f_{g_r}(2/r)=r/2~,\quad f_{g_r}(u_c^{-1})=u_c^{-1}~.
\end{eqsp}
This shows that the region $u_c<x\le 2/r$ is mapped to the region
$u_c<x\le r/2$ and the region $r/2<x<u_c^{-1}$ is mapped to the region
$2/r<x<u_c^{-1}$ under $g_r$.  Furthermore, using \refb{egderivative} and \refb{egspecial}
it is easy to see that any point in the interval $2/r\leq x<r/2$ is mapped outside this range by the action of $g_r$
and $g_r^{-1}$. 
This shows that any $x$ in the
range $u_c<x<u_c^{-1}$ can be mapped to a unique point in the range 
$2/r\leq x<r/2$ by successive action of $g_r$ or $g_r^{-1}$.
This completes the proof of the lemma.     
\end{proof}
We now want to find a bound for the absolute value of the summand of $\CF_3^{>3}$. Let us write 
\begin{eqsp}
    \left|e^{2\pi i \{(-a^2-c^2+r ac)\tau+(-b^2-d^2+r bd)\sigma+
(-2ab-2cd + r(ad+bc)) z\}}\right|=e^{2\pi B}~,
\end{eqsp}
where 
\begin{eqsp}
B:= \sigma_2'+\tau_2'-r z_2' 
= (a^2 + c^2 - r ac)\tau_2 + (b^2 + d^2 - rbd)\sigma_2 - 
\{  r (ad+bc) - 2ab - 2cd\} z_2\, .
\end{eqsp}
We have the following upper bound for $B$.
\begin{lemma}\label{lemma:bound_B}
For $\big{(}\begin{smallmatrix} a & b\cr c & d\end{smallmatrix}\big{)}\in \tilde{\CS}_3$,
\begin{eqsp} \label{eBbound}
    B\leq \left( {2\over r}-{r\over 2} +1 -{1\over 2} \delta_{r,3}\right)\, C(\Omega)\, \mu(a,b,c,d)<0~,
\end{eqsp}
where $\delta_{r,3}$ is the Kronecker delta defined as 
\begin{eqsp}
    \delta_{r,3}=\begin{cases}
        1&\text{if}~~r=3~,
        \\
        0&\text{if}~~r\neq 3~,
    \end{cases}
\end{eqsp}
and
$C(\Omega)$ and $\mu(a,b,c,d)$ have been defined in \eqref{exx4}.
\end{lemma}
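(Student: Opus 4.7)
The plan is to first decompose $B$ along the three positive ``chamber directions'' $\tau_2+z_2$, $\sigma_2+z_2$ and $-z_2$, and then bound each coefficient using the constraints from Lemmas~\ref{lemma:Heaviside_matr_const} and~\ref{lemma:coset_reps}. Substituting $\tau_2=(\tau_2+z_2)-z_2$ and $\sigma_2=(\sigma_2+z_2)-z_2$ into the definition of $B$ and collecting like terms gives the algebraic identity
\begin{equation*}
B \;=\; A_1\,(\tau_2+z_2) + A_2\,(\sigma_2+z_2) + A_3\,(-z_2),
\end{equation*}
with $A_1=a^2+c^2-rac$, $A_2=b^2+d^2-rbd$ and $A_3=(a-b)^2+(c-d)^2-r(a-b)(c-d)$.

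Setting $g(u):=u^2-ru+1=(u-u_c)(u-u_c^{-1})$, one recognizes $A_1=a^2\,g(c/a)$, $A_2=b^2\,g(d/b)$, and $A_3=(a-b)^2\,g((c-d)/(a-b))$. Since $\tilde{\CS}_3\subset \CS_3$, the three ratios all lie in $S_3=(u_c,u_c^{-1})$, where $g<0$; hence $A_1,A_2,A_3\le 0$. Combined with the chamber inequalities $\tau_2+z_2,\,\sigma_2+z_2,\,-z_2\ge C(\Omega)>0$ valid in $\CR$, this yields at once
\begin{equation*}
B \;\le\; C(\Omega)\,(A_1+A_2+A_3).
\end{equation*}

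The remaining task is the purely arithmetic estimate $A_1+A_2+A_3\le \kappa(r)\,\mu(a,b,c,d)$ with $\kappa(r)=\tfrac{2}{r}-\tfrac{r}{2}+1-\tfrac{1}{2}\delta_{r,3}$. I would proceed by case analysis on which of $|a|,|b|,|c|,|d|$ realizes $\mu$. If $\mu\in\{|a|,|c|\}$, the refined constraint $c/a\in[2/r,r/2)$ from the coset-representative choice of Lemma~\ref{lemma:coset_reps} yields $g(c/a)\le g(2/r)=(4-r^2)/r^2$ on the branch $c/a\le 1$, and an analogous bound on $c/a\ge 1$ via the symmetry $g(u)=u^2\,g(1/u)$. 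Using $\mu^2\ge \mu$ together with the fact that $A_1$ is a negative integer then gives the desired bound from $A_1$ alone. If instead $\mu\in\{|b|,|d|\}$, the bound from $A_2$ alone can fail because $d/b$ may be close to $u_c^{\pm 1}$; in this case I would bring in $A_3$, using the SL$(2,\IZ)$ identity $ad-bc=1$ together with the Liouville-type lower bound $|d-b u_c|\ge \text{const}/|b|$ (valid since $u_c$ is a quadratic irrational for $r\ge 3$) to show that whenever $|A_2|$ is small the integer $|a-b|$ is forced to be large enough that $A_3$ restores the estimate.

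The main obstacle is precisely this last subcase. The case $r=3$ is separately delicate because $S_3$ is narrowest there and the crude estimates from the generic argument lose an additive $\mu/2$, which is exactly the role of the $-\tfrac{1}{2}\delta_{r,3}$ correction in $\kappa(r)$. Once the arithmetic bound is in hand, combining with $B\le C(\Omega)(A_1+A_2+A_3)$ gives $B\le \kappa(r)\,C(\Omega)\,\mu(a,b,c,d)$, and a direct check shows $\kappa(r)<0$ for every $r\ge 3$, which completes the proof.
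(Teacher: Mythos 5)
Your opening moves coincide with the paper's: the decomposition $B=A_1(\tau_2+z_2)+A_2(\sigma_2+z_2)+A_3(-z_2)$ is exactly \refb{eBexpx} (with $A_1=ac(\tfrac{a}{c}+\tfrac{c}{a}-r)$ etc.), the observation that all three ratios lie in $S_3$ where $g(u)=u^2-ru+1<0$ correctly gives $A_i<0$, and the reduction to the purely arithmetic claim $A_1+A_2+A_3\le \kappa(r)\,\mu(a,b,c,d)$ with $\kappa(r)=\tfrac{2}{r}-\tfrac{r}{2}+1-\tfrac12\delta_{r,3}$ is a valid reformulation of the target. The gap is that you do not prove this arithmetic claim, and the strategy you sketch for it does not work as stated. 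In the case $\mu\in\{|a|,|c|\}$ your bound ``from $A_1$ alone'' gives $A_1\le \mu^2\bigl(\tfrac{4}{r^2}-1\bigr)$, but $\mu^2\bigl(\tfrac{4}{r^2}-1\bigr)\le\kappa(r)\mu$ forces $\mu\ge \kappa(r)/\bigl(\tfrac{4}{r^2}-1\bigr)\sim r/2$, which is false for small $\mu$: e.g.\ $r=7$, $\begin{pmatrix}1&-2\\1&-1\end{pmatrix}\in\tilde\CS_3$ has $\mu=2$ and your intermediate bound reads $A_1\le -180/49\approx-3.67$ while $\kappa(7)\mu\approx-4.43$, so the chain breaks (the lemma still holds there, but not by this route). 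In the case $\mu\in\{|b|,|d|\}$ you yourself flag the obstacle and only gesture at a Liouville-type bound $|d-bu_c|\ge\mathrm{const}/|b|$; the implied constant degrades with $r$ and you never connect it to the specific coefficient $\kappa(r)$, so this subcase is not closed. Also, the role of $\delta_{r,3}$ is not that $r=3$ ``loses an additive $\mu/2$'': it is that $\tfrac{2}{3}-\tfrac{3}{2}+1=\tfrac16>0$, so without the extra $-\tfrac12$ the bound would not even be negative at $r=3$.

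The ingredient you are missing is the determinant identity the paper uses in \refb{e4103xy}: since $ad-bc=1$,
\begin{equation*}
\frac{d}{b}+\frac{b}{d}-r=\frac{c}{a}+\frac{a}{c}-r+\frac{1}{ab}-\frac{1}{cd},
\qquad
\frac{c-d}{a-b}+\frac{a-b}{c-d}-r=\frac{c}{a}+\frac{a}{c}-r-\frac{1}{a(a-b)}+\frac{1}{c(c-d)}.
\end{equation*}
This transfers the one constraint you actually control, $c/a\in[\tfrac{2}{r},\tfrac{r}{2})$ (hence $\tfrac{c}{a}+\tfrac{a}{c}-r\le\tfrac{2}{r}-\tfrac{r}{2}$), to the other two ratio factors up to corrections $|\tfrac{1}{ab}-\tfrac{1}{cd}|$, $|\tfrac{1}{a(a-b)}-\tfrac{1}{c(c-d)}|$, each bounded by $1$ (and by $\tfrac12$ when $r=3$, by a short integrality argument). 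All three factors are then $\le\kappa(r)<0$ simultaneously, so
\begin{equation*}
B\le\kappa(r)\,\bigl\{ac(\tau_2+z_2)+bd(\sigma_2+z_2)+(a-b)(c-d)(-z_2)\bigr\}\le\kappa(r)\,C(\Omega)\,\mu(a,b,c,d),
\end{equation*}
where the last step uses \refb{exx4} and \refb{e419xy} and the sign flip from $\kappa(r)<0$. If you want to salvage your case-analysis route, you would in effect have to reprove this identity in disguise; as written, the proposal does not establish the lemma.
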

\begin{proof}
We first express $B$ as
\be\label{eBexpx}
B = ac \left( {a\over c} + {c\over a} - r\right)\, (\tau_2 + z_2)
+ bd \left( {b\over d} +{d\over b}
- r\right)\, (\sigma_2+z_2) + z_2 \,(a-b)\, (c-d) \left( r \,  - {a-b\over c-d} 
- {c-d\over a-b}\right)
\, .   
\ee
For $\big{(}\begin{smallmatrix} a & b\cr c & d\end{smallmatrix}\big{)}
\in \tilde{\CS}_3$,
we get
\be\label{e4102xy}
\frac{c}{a}+\frac{a}{c}-r \le {2\over r}+{r\over 2} - r = {2\over r} - {r\over 2}\, ,
\ee
and
\ben\label{e4103xy}
&& \frac{d}{b}+\frac{b}{d}-r 
= \frac{c}{a}+\frac{a}{c}-r + {1\over ab} - {1\over cd} \le {2\over r} - {r\over 2} + 
{1\over ab} - {1\over cd}, 
\nonumber \\
&&  \frac{c-d}{a-b}+\frac{a-b}{c-d}-r 
=  \frac{c}{a}+\frac{a}{c}-r - {1\over a(a-b)} + {1\over c(c-d)} \le {2\over r} - {r\over 2} - 
{1\over a(a-b)} + {1\over c(c-d)} \, . \nonumber \\
\een
Since $c/a$, $d/b$ and $(c-d)/(a-b)$ lie in the range $(u_c, u_c^{-1})$, none of $a,b,c,d,(a-b),(c-d)$
vanish. Furthermore, $a$ and $c$ have the same sign, $b$ and $d$ have the same sign and
$(a-b)$ and $(c-d)$ have the same sign. Thus $ab$ and $cd$ have the same sign and $a(a-b)$ and $c(c-d)$ have the same sign. Thus we have 
\begin{eqsp}\label{eq:bound_ab-cd}
\left| {1\over ab} - {1\over cd}\right|\le 1~, \qquad
\left| {1\over a(a-b)} - {1\over c(c-d)}\right|\le 1~.    
\end{eqsp}
For $r=3$ we can prove stricter bound. Let us first note that the $|1/ab-1/cd|$ attains 
maximum when $ab=\pm 1$ and $cd$ is maximized or 
$cd=\pm 1$ and $ab$ is maximized. But for $r=3$, $ab=1$ along with
\begin{eqsp} \label{ebcinequality}
    2/3<c/a<3/2\implies 2/3<bc<3/2\implies bc=1~.
\end{eqsp}
This then implies that $ad=1+bc=2$, which implies that $cd=2$. 
Thus $|1/ab-1/cd|=1/2$. $ab=-1$ is not allowed since in this case the analog of \refb{ebcinequality} gives
$bc=-1$ which implies $ad=0$. Similar arguments show that 
$cd=1$ is not allowed and $cd=- 1$ implies $ab=- 2$. 
Thus $|1/ab-1/cd|=1/2$.
Similarly, we can show that 
\begin{eqsp}
\left| {1\over a(a-b)} - {1\over c(c-d)}\right|\le \frac{1}{2}~,\quad r=3~.    
\end{eqsp}
Thus we have the bound
\be
\left| {1\over ab} - {1\over cd}\right|\le 1 - {1\over 2} \delta_{r,3}, \qquad
\left| {1\over a(a-b)} - {1\over c(c-d)}\right|\le 1 - {1\over 2} \delta_{r,3},
\ee
Substituting these into \refb{e4103xy}, we get
\begin{eqsp} \label{e108xx}
    \frac{d}{b}+\frac{b}{d}-r\leq\frac{2}{r}-\frac{r}{2}+1-\frac{1}{2}\delta_{r,3}~,\quad \frac{d-c}{b-a}+\frac{b-a}{d-c}-r\leq\frac{2}{r}-\frac{r}{2}+1-\frac{1}{2}\delta_{r,3}~.
\end{eqsp}
Using the fact that $ac$, $bd$,
$(a-b)(c-d)$ are positive for $\big{(}\begin{smallmatrix} a & b\cr c & d\end{smallmatrix}\big{)}
\in \tilde{\CS}_3$ and 
$\tau_2+z_2$, $\sigma_2+z_2$ and $-z_2$ are positive and that
\begin{eqsp}
    \frac{2}{r}-\frac{r}{2}+1-\frac{1}{2}\delta_{r,3}<0~,\quad\text{for}~~r\geq3~,
\end{eqsp}
we get, from \eqref{eBexpx}, \eqref{e4102xy}, \refb{e108xx}
\ben \label{eBless}
B  &\le& \left({2\over r} -{r\over 2} + 1 - {1\over 2} \delta_{r,3} \right) \{ ac (\tau_2+z_2) + bd (\sigma_2+z_2)
+ (a-b) (c-d) (-z_2)\} \nonumber \\
&\le & \left({2\over r} -{r\over 2} + 1 - {1\over 2} \delta_{r,3} \right)  \, C(\Omega) \, \mu(a,b,c,d)\, ,
\een
where we used \refb{exx4}, \eqref{e419xy}. This proves the lemma.
\end{proof}
\begin{prop}
The series $\CF_3^{>3}$ converges absolutely and uniformly on compact subsets of the $\CR$-chamber.    
\end{prop}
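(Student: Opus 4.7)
The plan is to combine the two preceding lemmas with the matrix-counting bound already used in the proof of Proposition \ref{prop:conv_F00}. First, by Lemma \ref{lemma:coset_reps}, each coset in $G_r\backslash\CS_3$ has a unique representative in $\tilde{\CS}_3$, so
\begin{equation*}
|\CF_3^{>3}| \;\le\; f_{-1}^2 \sum_{r=3}^\infty r \sum_{\big(\begin{smallmatrix}a&b\\c&d\end{smallmatrix}\big)\in\tilde{\CS}_3} e^{2\pi B}.
\end{equation*}
Next I apply Lemma \ref{lemma:bound_B}. Setting
\begin{equation*}
\alpha_r \;:=\; \tfrac{r}{2}-\tfrac{2}{r}-1+\tfrac12 \delta_{r,3},
\end{equation*}
the lemma gives $\alpha_r>0$ for every $r\ge 3$ (explicitly $\alpha_3=\tfrac13$ and $\alpha_r\ge\tfrac12$ for $r\ge 4$, with $\alpha_r\sim r/2$ as $r\to\infty$) and
\begin{equation*}
e^{2\pi B} \;\le\; e^{-2\pi\,\alpha_r\, C(\Omega)\,\mu(a,b,c,d)}.
\end{equation*}

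I then organize the inner sum by the value of $\mu(a,b,c,d)=K$. As recorded in the proof of Proposition \ref{prop:conv_F00}, the number of $\mathrm{PSL}(2,\IZ)$ matrices with $\mu(a,b,c,d)=K$ is at most $4(2K+1)^2$, and restricting to $\tilde{\CS}_3$ can only decrease this count. Hence
\begin{equation*}
|\CF_3^{>3}| \;\le\; 4f_{-1}^2 \sum_{r=3}^\infty r \sum_{K=1}^\infty (2K+1)^2 \, e^{-2\pi\alpha_r C(\Omega) K}.
\end{equation*}
For each fixed $r\ge 3$ the inner sum in $K$ converges (geometric-type series with ratio $e^{-2\pi\alpha_r C(\Omega)}<1$), and in fact is bounded by a constant multiple of $(1-e^{-2\pi\alpha_r C(\Omega)})^{-3}$. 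The outer sum in $r$ then converges because $\alpha_r$ grows linearly: for $r\ge 4$ one has $e^{-2\pi\alpha_r C(\Omega)K}\le e^{-\pi r C(\Omega)}$, which makes both the single-variable sum $\sum_r r\,(1-e^{-2\pi\alpha_r C(\Omega)})^{-3}$ and the double sum manifestly finite.

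Finally, for uniform convergence on a compact subset $\CO\subset\CR$, I invoke the fact (established in the proof of Proposition \ref{prop:conv_F00}) that $C(\Omega)$ is continuous and strictly positive on $\CR$, hence bounded below by some constant $C_{\text{lower}}(\CO)>0$ on $\CO$. Replacing $C(\Omega)$ by $C_{\text{lower}}(\CO)$ in the displayed majorant produces a $\Omega$-independent convergent upper bound that dominates the partial sums throughout $\CO$, giving uniform and absolute convergence. The only delicate point in the whole argument is that the bound $\alpha_r>0$ actually holds at the boundary value $r=3$; this is exactly what the $\tfrac12\delta_{r,3}$ refinement in Lemma \ref{lemma:bound_B} was designed to guarantee, so the hard work is already done and what remains here is routine estimation.
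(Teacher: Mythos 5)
Your proof is correct and follows essentially the same route as the paper: reduce to $\tilde{\CS}_3$ via Lemma \ref{lemma:coset_reps}, apply the exponential bound of Lemma \ref{lemma:bound_B}, count matrices with $\mu(a,b,c,d)=K$ by $4(2K+1)^2$ as in Proposition \ref{prop:conv_F00}, and pass to $C_{\text{lower}}(\CO)$ for uniformity. The only blemish is the auxiliary inequality $e^{-2\pi\alpha_r C(\Omega)K}\le e^{-\pi r C(\Omega)}$, which fails at $K=1$ since $2\alpha_r=r-4/r-2<r$ for every $r$; this is harmless because $\alpha_r\ge \tfrac13$ for all $r\ge 3$ and, say, $2\alpha_r\ge r/2$ for $r\ge 8$, which still yields the convergence of the double sum exactly as you intend.
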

\begin{proof}
Using Lemma \ref{lemma:coset_reps} and Lemma \ref{lemma:bound_B}, we have 
\ben
\left|\CF_3^{>3}\right|&\leq& f_{-1}^2 \sum_{r=3}^{\infty} r 
\sum_{\big{(}\begin{smallmatrix} a & b\cr c & d\end{smallmatrix}\big{)}\in G_r\backslash \CS_{3}}\hskip .1in 
e^{2\pi B}     \nonumber \\
&\leq& f_{-1}^2 \sum_{r=3}^{\infty} r 
\sum_{\big{(}\begin{smallmatrix} a & b\cr c & d\end{smallmatrix}\big{)}\in \tilde{\CS}_{3}}\hskip .1in 
\exp\left[2\pi \left({2\over r} -{r\over 2} + 1 - {1\over 2} \delta_{r,3} \right)  \, C(\Omega) \, \mu(a,b,c,d)\right]\, .
\een 
Following the same argument that led to \refb{e423xxy}, we get,
\be
\left|\CF_3^{>3}\right|\leq f_{-1}^2 \sum_{r=3}^{\infty} r 
\sum_{K=1}^\infty 4(2K+1)^2 \, 
\exp\left[2\pi \left({2\over r} -{r\over 2} + 1 - {1\over 2} \delta_{r,3} \right)  \, C(\Omega) \, K\right] <\infty\, .
\ee
The absolute and uniform convergence on compact subsets of $\CR$ follows from the continuity of $C(\Omega)$.
\end{proof}
This proves the convergence of the sum appearing in \refb{eguessfin}.

\section{Properties of $\wt F(\Omega)$} \label{ssection5}

In this section we shall study some properties of $\wt F(\Omega)$ given
in \refb{eguessfin}. In particular, we shall prove the following theorems.

\begin{thm}\label{thm:S_mero_poles}
The function 
\ben \label{eSdef}
S(\Omega) &:=& {1\over 2} \sum_{\big{(}\begin{smallmatrix} a & b\cr c & d\end{smallmatrix}\big{)}\in 
\mr{PSL}(2,\mathbb{Z})}
\left(e^{\pi i \{ac\tau + bd\sigma + (ad+bc)z\}} - e^{-\pi i \{ac\tau + bd\sigma + (ad+bc)z\}}\right)^{-2} \nonumber \\ && 
\hskip 1in 
\times\ f_+(a^2\tau +b^2\sigma +2abz) \ f_+(c^2\tau+d^2\sigma+2cd z) \nonumber \\ 
&+&  
\sum_{\big{(}\begin{smallmatrix} a & b\cr c & d\end{smallmatrix}\big{)}\in  \mr{PSL}(2,\mathbb{Z})}\sum_{r>0} r \sum_{p\ge 0} f_p f_{-1} 
 H(ac\tau_2 + bd\sigma_2 + (ad+bc)z_2) \nonumber \\ && \hskip 1in \times\
H\left(-ac\tau_2 - bd\sigma_2 - (ad+bc)z_2 + ra^2\tau_2 + rb^2\sigma_2+2rab z_2
\right) \,
\nonumber \\ && \hskip 1in \times 
e^{2\pi i \{(pa^2-c^2+r ac)\tau+(pb^2-d^2+r bd)\sigma+
(2pab-2cd + r(ad+bc) )z\}} 
\nonumber \\ 
&+&  f_{-1}^2 \sum_{r>0} r 
\sum_{\big{(}\begin{smallmatrix} a & b\cr c & d\end{smallmatrix}\big{)}\in G_r\backslash \mr{PSL}(2,\mathbb{Z})}\hskip .1in 
 \bigg\{\prod_{n=-\infty}^\infty 
H(a_nc_n\tau_2 + b_nd_n\sigma_2 + (a_nd_n+b_nc_n)z_2)
\bigg\}
\nonumber \\ && \hskip 1in \times \
e^{2\pi i \{(-a^2-c^2+r ac)\tau+(-b^2-d^2+r bd)\sigma+
(-2ab-2cd + r(ad+bc)) z\}}  \, ,
\een
that appear in 
\eqref{eguessfin}, 
is a meromorphic function on $\IH_2$ with double poles at 
\be\label{epoleint_thm}
m_1\tau - n_1\sigma + m_2 + j\, z = 0, \qquad
m_1,n_1,m_2\in \mathbb{Z}, \qquad m_1n_1  +
{j^2\over 4} = {1\over 4}\, .
\ee
    
\end{thm}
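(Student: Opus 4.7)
The strategy is to isolate the analytic source of each candidate pole inside $\CF_1$ and show the complementary sum extends holomorphically. By Theorem \ref{thm:S_conv}, $S = \tfrac{1}{2}\CF_1 + \CF_2 + \CF_3$ is holomorphic on the tube domain over $\CR$. Within that tube the Heaviside factors in $\CF_2, \CF_3$ are constant and the remaining $\Omega$-dependence enters through entire exponentials and the convergent power series $f_+$, so neither $\CF_2$ nor $\CF_3$ contributes complex singularities on the tube. All singular behavior of $S$ must therefore come from the factor
\begin{equation*}
\bigl(e^{\pi i w_{abcd}} - e^{-\pi i w_{abcd}}\bigr)^{-2}, \qquad w_{abcd} := ac\tau + bd\sigma + (ad+bc)z,
\end{equation*}
appearing in each summand of $\CF_1$, which has a double pole on the complex hyperplane $w_{abcd} = m_2$ for every $m_2 \in \IZ$. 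Setting $m_1 := -ac$, $n_1 := bd$, $j := ad + bc$, the identity $(ad-bc)^2 = 1$ forces $m_1 n_1 + j^2/4 = 1/4$; conversely, Lemma \ref{lemma:sl2z_linear_poles} says every integer triple $(m_1, n_1, j)$ satisfying this constraint arises from a unique (up to sign) PSL$(2,\IZ)$-matrix. Hence the candidate pole set of $S$ is precisely the union \refb{epoleint_thm}.

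Fix a pole locus $L: m_1\tau - n_1\sigma + m_2 + jz = 0$, let $(a_*, b_*, c_*, d_*)$ be the associated matrix, pick a generic $\Omega_0 \in L$ (lying on no other locus of form \refb{epoleint_thm}), and let $U \subset \IH_2$ be a small complex polydisk at $\Omega_0$. Extract from $\CF_1$ the single summand $T_*$ labelled by $(a_*, b_*, c_*, d_*)$: shrinking $U$ if needed, its $f_+$-arguments retain positive imaginary parts, so $T_*$ is meromorphic on $U$ with a genuine double pole along $L \cap U$. It remains to show that $R := S - \tfrac{1}{2} T_*$ extends holomorphically to $U$. For the truncated $\CF_1$-piece this is a line-by-line rerun of the bound in the proof of Proposition \ref{prop:conv_F00}: by continuity and genericity of $\Omega_0$, the imaginary part $|\mathrm{Im}\, w_{abcd}|$ for $(a,b,c,d)\neq \pm(a_*, b_*, c_*, d_*)$ admits a strictly positive uniform lower bound on $U$, which plays the role of $C(\Omega)$ in \refb{e423xxy} and yields uniform absolute convergence, hence holomorphy. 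The analogous Section \ref{sec:conv_CF2} and Section \ref{sec:conv_CF3} bounds apply to $\CF_2, \CF_3$ with the same substitution. Once $S$ is proved meromorphic on $U$, the formal PSL$(2,\IZ)$-invariance of the defining series, rigorous on the tube over $\CR$ by absolute convergence and propagating by analytic continuation, transfers the local meromorphic extension to a global meromorphic extension on all of $\IH_2$ with double poles on the full set \refb{epoleint_thm}.

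\textbf{Main obstacle.} The delicate step is the holomorphic extension of $R$ across the wall $\mathrm{Im}\, w_{a_*b_*c_*d_*} = 0$ passing through $\Omega_0$. Inside each chamber the Heaviside factors in $\CF_2, \CF_3$ are constant, but their values jump when one crosses this wall, so a priori the series for $S$ yields two distinct holomorphic branches on the two sides of the wall rather than a single function. The heart of the proof is to verify that the finite bookkeeping change produced by the flipping Heaviside functions exactly cancels the spurious discontinuity that would otherwise appear in $R$, so that $R$ extends as a single-valued holomorphic function on $U$. Concretely, this reduces to a finite check at $\Omega_0$ which mirrors the physical identifications of Section \ref{section3}: the same reasoning that dictated the precise Heaviside combinations in \refb{eguessfin} is what guarantees that the mathematical continuation across each wall is seamless.
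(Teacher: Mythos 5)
Your overall architecture (reduce to one pole locus by $\mathrm{PSL}(2,\IZ)$-invariance via Lemma \ref{lemma:sl2z_linear_poles}, split off the singular part, show the remainder is holomorphic across the wall) matches the paper's, but there is a genuine gap in your identification of where the pole comes from: it is \emph{not} true that all singular behavior lives in $\CF_1$, and consequently your claim that $R:=S-\tfrac12 T_*$ extends holomorphically to $U$ is false. Take the locus $z=0$. The $\CF_2$ and $\CF_3$ sums contain terms labelled by exactly those matrices whose $w_{abcd}$ vanishes on this locus (for $\CF_2$ the matrices $\big(\begin{smallmatrix}1&0\\0&1\end{smallmatrix}\big)$, $\big(\begin{smallmatrix}0&-1\\1&0\end{smallmatrix}\big)$, $\big(\begin{smallmatrix}1&0\\r&1\end{smallmatrix}\big)$, $\big(\begin{smallmatrix}0&1\\-1&r\end{smallmatrix}\big)$, for $\CF_3$ the identity and $\big(\begin{smallmatrix}0&-1\\1&0\end{smallmatrix}\big)$), and for these the sum over $r$ produces
\begin{equation*}
\sum_{r>0} r\left\{e^{2\pi i r z}H(z_2)+e^{-2\pi i r z}H(-z_2)\right\}=\left(e^{\pi i z}-e^{-\pi i z}\right)^{-2},
\end{equation*}
a single-valued meromorphic function with a \emph{double pole} at $z=0$. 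These contributions, weighted by $f_{-1}f_p$ and $f_{-1}^2$, are precisely what upgrade the pole part from $(e^{\pi iz}-e^{-\pi iz})^{-2}f_+(\tau)f_+(\sigma)$ (your $T_*$) to the full $(e^{\pi iz}-e^{-\pi iz})^{-2}f(\tau)f(\sigma)$ with $f=\eta^{-24}$ — which is what is needed later to cancel the pole of $1/\Phi_{10}$. Your "line-by-line rerun" of the convergence bounds with a uniform lower bound on $|\mathrm{Im}\,w_{abcd}|$ cannot apply to these terms, since their $\mathrm{Im}\,w_{abcd}$ tends to zero on the wall and the geometric series in $r$ diverges there.

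The second, related gap is that the "finite bookkeeping check" you defer in your closing paragraph is the actual heart of the proof rather than a routine verification: one must explicitly resum the Heaviside-weighted $r$- and $p$-series for the offending matrices and confirm that they assemble into a single meromorphic function valid on both sides of the wall (the paper's $S_{\mathrm{pole}}$ computation), and then reprove convergence of the remainder \emph{on the closed set} $\CR\cup\{z_2=0\}\cup\CL$. The latter does not follow from the interior bounds of Section \ref{stildeFconverge} by continuity and genericity alone: those bounds rest on $C(\Omega)=\min(\tau_2+z_2,\sigma_2+z_2,-z_2)>0$, which degenerates to zero on the wall $z_2=0$, so every step using $C(\Omega)$ must be re-examined with a replacement such as $\min\{\tau_2,\sigma_2\}$ and one must check that the only matrices for which the replacement fails are exactly the ones already removed into the pole part. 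Without both of these ingredients the proof is incomplete.
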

It turns out that the double poles of $S(\Omega)$ exactly cancel the double poles \eqref{epoles} of $1/\Phi_{10}$ with $n_2=0$.  
Since $\widetilde{F}(\Omega)=1/\Phi_{10}-S(\Omega)$, we get the following theorem. 
\begin{thm}\label{thm:F_rel_Igusa}
$\wt{F}(\Omega)$ does not have any singularity in the region $\det \rm Im \, \Omega>1/4$.
\end{thm}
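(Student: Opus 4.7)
The plan is to show $\widetilde{F}$ is holomorphic in $\det\mr{Im}\,\Omega>1/4$ by matching the polar parts of $1/\Phi_{10}$ and $S(\Omega)$ on each hypersurface \eqref{epoleint_thm}. From the analysis following \eqref{eq:im_re_poles_new}, in this region $1/\Phi_{10}$ has poles only at the $n_2=0$ hypersurfaces, which coincide with the pole locations of $S(\Omega)$ given by Theorem \ref{thm:S_mero_poles}. Hence $\widetilde{F}=1/\Phi_{10}-S$ is meromorphic in the region with at worst double poles on \eqref{epoleint_thm}, and the proof reduces to showing these polar parts cancel.

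To reduce to a single representative pole, note that $1/\Phi_{10}$ (via its $\mr{Sp}(2,\IZ)$-modularity) and $S(\Omega)$ (via the formal invariance \eqref{esl3zoffi}, made rigorous by Theorem \ref{thm:S_conv}) are both invariant under $\Omega\mapsto\gamma\Omega\gamma^t$ for $\gamma\in\mr{PSL}(2,\IZ)$, and are periodic under integer translations $\Omega\mapsto\Omega+B$ for symmetric integer $B$. By Lemma \ref{lemma:sl2z_linear_poles} every $n_2=0$ pole lies in the orbit of $z=0$ under this combined symmetry, so it suffices to verify cancellation at the single pole $z=0$.

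The key computation is the matching at $z=0$, approached from within the chamber $\CR$. By \eqref{epolestructure}, the polar part of $1/\Phi_{10}$ is $(e^{\pi i z}-e^{-\pi i z})^{-2}f(\tau)f(\sigma)$, which expands into four pieces using $f=f_{-1}e^{-2\pi i(\cdot)}+f_+$. I would identify these piece by piece with contributions to $S(\Omega)$: the $f_+(\tau)f_+(\sigma)$ piece arises from the two $\mr{PSL}(2,\IZ)$ classes $(1,0,0,1)$ and $(0,-1,1,0)$ in $\CF_1$ (the only ones with $ac=bd=0$) together with the $1/2$ factor; the cross pieces $f_{-1}e^{-2\pi i\tau}f_+(\sigma)$ and $f_+(\tau)f_{-1}e^{-2\pi i\sigma}$ arise from $\CF_2$ via the identity $\sum_{r\geq 1}re^{\mp 2\pi irz}=(e^{\pi i z}-e^{-\pi i z})^{-2}$ applied to matrices with $ad+bc=\pm 1$; and the $f_{-1}^2e^{-2\pi i(\tau+\sigma)}$ piece arises from $\CF_3$ through the same $r$-sum identity applied to $(0,-1,1,0)$. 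Their sum reconstructs exactly $(e^{\pi i z}-e^{-\pi i z})^{-2}f(\tau)f(\sigma)$, so the pole cancels in $\widetilde{F}$, and by the symmetry reduction above the cancellation extends to every hypersurface in \eqref{epoleint_thm}.

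The main obstacle is the precise identification of the $\CF_2$ polar pieces. Approaching $z=0$ from within $\CR$, the matrix $(0,-1,1,0)$ has both Heaviside factors active and directly produces $f_{-1}e^{-2\pi i\tau}f_+(\sigma)$; recovering the symmetric counterpart $f_+(\tau)f_{-1}e^{-2\pi i\sigma}$ requires carefully tracking the $(a,b,c,d)\to(c,d,-a,-b)$ symmetry invoked between \eqref{ereorgtwo} and \eqref{erewritetwo}, which collapsed the $f_{-1}f_q$ contributions into the $f_pf_{-1}$ sum while removing the factor $1/2$, and verifying its consistency with the chamber-dependent Heaviside structure so that both asymmetric pieces emerge from the $\mr{PSL}(2,\IZ)$ sum.
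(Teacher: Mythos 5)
Your proposal is correct and follows essentially the same route as the paper: reduce to the $n_2=0$ walls using $\det\mathrm{Im}\,\Omega>1/4$, reduce to $z=0$ via Lemma \ref{lemma:sl2z_linear_poles} and the $\mathrm{PSL}(2,\IZ)$/translation invariance, and identify the finitely many matrices in $\CF_1$, $\CF_2$, $\CF_3$ whose $r$-sums reassemble into $(e^{\pi i z}-e^{-\pi i z})^{-2}f(\tau)f(\sigma)$, exactly reproducing the paper's $S_{\rm pole}$ in \eqref{eq:S_pole_def}. The one point you leave implicit, regularity of the remaining (infinite) sum as $z_2\to 0$, is supplied by Theorem \ref{thm:S_mero_poles} (via Proposition \ref{prop:S_hol_conv}), which is also how the paper's own proof is organized.
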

Using the relation $\widetilde{F}(\Omega)=1/\Phi_{10}-S(\Omega)$, Theorem \ref{thm:S_mero_poles} and the structure of poles \eqref{epoles} of $1/\Phi_{10}$, we obtain the following theorem.  
\begin{thm}\label{thm:sing_deg_gen}
The generating function $\wt F$ admits a meromorphic continuation to all of 
$\IH_2$ with double poles at 
\begin{eqnarray}
&& n_2 (\tau\sigma - z^2) + m_1\tau - n_1\sigma + m_2 + j\, z = 0, \nonumber \\
&&
m_1,n_1,m_2,n_2\in \mathbb{Z}, \qquad n_2\ge 1, \qquad m_1n_1 + m_2 n_2 +
{j^2\over 4} = {1\over 4}\, .
\end{eqnarray}
\end{thm}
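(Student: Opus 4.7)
The plan is essentially a bookkeeping argument starting from the identity $\wt F(\Omega) = 1/\Phi_{10}(\Omega) - S(\Omega)$ recorded in \eqref{eq:tildeFCF123}, combined with the two inputs Theorem \ref{thm:S_mero_poles} (meromorphic continuation of $S$ to $\IH_2$ with an explicit pole locus) and Theorem \ref{thm:F_rel_Igusa} (exact cancellation of the $n_2=0$ poles of $1/\Phi_{10}$ against those of $S$). First I would note that $1/\Phi_{10}$ is meromorphic on all of $\IH_2$ with double poles on the hypersurfaces \eqref{epoles}, while by Theorem \ref{thm:S_mero_poles} the function $S(\Omega)$ extends from its domain of convergence in the $\CR$-chamber to a meromorphic function on $\IH_2$ whose only singularities are double poles on the $n_2=0$ subfamily of \eqref{epoles}. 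Consequently the right-hand side $1/\Phi_{10}-S$ is a well-defined meromorphic function on all of $\IH_2$, and since it agrees with the absolutely convergent series \eqref{eguessfin} defining $\wt F(\Omega)$ throughout the common domain of definition guaranteed by Theorem \ref{thm:S_conv} and Section \ref{stildeFconverge}, it provides the desired meromorphic continuation of $\wt F$.

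To identify the poles of the continuation, I would partition the list \eqref{epoles} into the subfamily with $n_2=0$ and the subfamily with $n_2\neq 0$. On every $n_2=0$ hypersurface, Theorem \ref{thm:F_rel_Igusa} guarantees that the double pole of $1/\Phi_{10}$ is cancelled by the double pole of $S$, so $\wt F$ is regular there. On each $n_2\neq 0$ hypersurface $S(\Omega)$ is holomorphic by Theorem \ref{thm:S_mero_poles}, so $\wt F$ inherits exactly the double pole of $1/\Phi_{10}$. The restriction to $n_2\ge 1$ in the statement follows because the simultaneous sign flip $(m_1,n_1,m_2,n_2,j)\mapsto(-m_1,-n_1,-m_2,-n_2,-j)$ leaves the hypersurface equation invariant and preserves the discriminant condition $m_1n_1+m_2n_2+j^2/4=1/4$; hence every $n_2\leq -1$ pole coincides with an $n_2\ge 1$ pole and each hypersurface need only be listed once.

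The substantive content of the argument lies in the two cited theorems, not in the present statement. Granting those, the theorem reduces to the short combinatorial identification of which entries of \eqref{epoles} survive the subtraction. The main obstacle is therefore not in this proof itself but in Theorem \ref{thm:S_mero_poles}, where the chamber-dependent functions $\CF_1,\CF_2,\CF_3$ must be meromorphically continued across the walls \eqref{epoleint} using the $\mathrm{PSL}(2,\IZ)$-invariance \eqref{esl3zoffi}; the mild point to check here is only that the continuation $1/\Phi_{10}-S$ does not depend on the chamber $\CR$ used to define $S$, which is immediate from this invariance together with the uniqueness of meromorphic continuation.
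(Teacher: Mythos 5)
Your proposal is correct and follows essentially the same route as the paper: the paper's proof likewise writes $\wt F = 1/\Phi_{10} - S$, invokes Theorem \ref{thm:S_mero_poles} for the meromorphic continuation with poles confined to the $n_2=0$ hypersurfaces, cites the cancellation from Theorem \ref{thm:F_rel_Igusa}, and uses the sign flip $(m_1,n_1,m_2,n_2,j)\to(-m_1,-n_1,-m_2,-n_2,-j)$ to restrict to $n_2\ge 1$. Your additional remark on chamber independence of the continuation via $\mathrm{PSL}(2,\IZ)$-invariance is a harmless and correct elaboration of what the paper leaves implicit.
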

The following theorem establishes the equality of $\wt d^*(T)$ and $d^*(T)$ defined in 
\refb{edstarTint} and hence of $F(\Omega)$ and $\wt F(\Omega)$ for $\det \rm Im \, \Omega>1/4$.
We can then conclude that $\widetilde{F}$ is the analytic continuation of the generating function $F$ 
defined in \eqref{edefFO} even outside the region $\det \rm Im \, \Omega>1/4$. 
\begin{thm}\label{thm:sing_cent_att_cont}
The index $\wt d^*(T)$ for single-centered configurations defined via \eqref{ed_star_tildeF}
is given by a contour integral of $\Phi^{-1}_{10}$ over the attractor contour: for $T=\begin{pmatrix}
        m &\ell/2
        \\
        \ell/2 & n
    \end{pmatrix}$, we have
\begin{equation}\label{edstarT}
\wt d^{*}(T) =\left\{ \begin{split}
&(-1)^{\ell+1} \int_{\mathcal{C}_{m,n,\ell}} d\tau d\sigma dz\, e^{-2 \pi i\left(m\tau+n\sigma+\ell z\right)} \frac{1}{\Phi_{10}(\Omega)}~, \quad \hbox{\emph{for} $m\ge 0$, $n\ge 0$, $4mn-\ell^2\ge 0$~,} \\
& 0~, \qquad \hbox{\rm otherwise}~, \,
    \end{split}\right. 
\end{equation}
where the contour $\mathcal{C}_{m,n,\ell}$ is given as 
\begin{equation}
\begin{split}
\mathcal{C}_{m,n,\ell}:\quad \mr{Im}(\tau)=\frac{2n}{\varepsilon}, \quad \mr{Im}(\sigma)=\frac{2m}{\varepsilon}, \quad \mr{Im}(z)=-\frac{\ell}{\varepsilon},\\
0 \leq \mr{Re}(\tau),\mr{Re}(\sigma), \mr{Re}(z)<1~,
\end{split}
\label{eq:att_cont}
\end{equation}
where $\varepsilon>0$ is a small positive number.     
\end{thm}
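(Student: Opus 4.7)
The plan is to invert the Fourier series \eqref{edeftildeF} of $\widetilde{F}$ on the attractor contour, decompose $\widetilde{F}=1/\Phi_{10}-S$ as in \eqref{eSdef}, and show that the contribution of $S$ vanishes on $\mathcal{C}_{m,n,\ell}$. Theorem \ref{thm:F_rel_Igusa} ensures that $\widetilde{F}$ is holomorphic on $\{\det\mr{Im}\,\Omega>1/4\}$, so Fourier inversion gives
\begin{equation*}
\wt d^*(T)=(-1)^{\ell+1}\!\int_{\mr{Im}\,\Omega=Y}\!d^3(\mr{Re}\,\Omega)\,e^{-2\pi i\mr{Tr}(\Omega T)}\,\widetilde{F}(\Omega),
\end{equation*}
with the right-hand side independent of the choice of $Y$ throughout the holomorphy domain, by Cauchy's theorem and the periodicity of the integrand in $\mr{Re}\,\Omega$. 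For $T\ge 0$ one selects $Y=2\gamma_0^tT\gamma_0/\varepsilon$, the attractor locus, using the deformation \eqref{edeformedcontour} to handle zero-discriminant states and to sidestep any pole of $1/\Phi_{10}$ lying on the contour. The $1/\Phi_{10}$ piece then contributes exactly $d^*(T)$ via \eqref{edstarTintOmega}, reducing the theorem for $T\ge 0$ to the vanishing of the attractor integral of $S$.

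The heart of the argument is a uniform computation across the three families in \eqref{eSdef}. Integration over $\mr{Re}\,\Omega$ selects, from each summand, only those Fourier monomials whose charge equals $T$. Substituting the corresponding charge-matching relations into $ac\tau_2+bd\sigma_2+(ad+bc)z_2$ evaluated on $\mathcal{C}_{m,n,\ell}$ collapses the expression, using $ad-bc=1$, to
\begin{equation*}
ac\tau_2+bd\sigma_2+(ad+bc)z_2\,\Big|_{\mathcal{C}_{m,n,\ell}}=\frac{r}{\varepsilon}\bigl[4abcd-(ad+bc)^2\bigr]=-\frac{r}{\varepsilon}<0,
\end{equation*}
so that $H(ac\tau_2+bd\sigma_2+(ad+bc)z_2)=0$ for every $r\ge 1$. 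For the first family this Heaviside appears only after expanding $(e^{i\pi\cdot}-e^{-i\pi\cdot})^{-2}$ in the manner of \eqref{e216xy}; for the second family it is already explicit; for the third it is the $n=0$ factor of the infinite Heaviside product. Absolute and uniform convergence of the series comprising $S$, established in Theorem \ref{thm:S_conv}, justifies swapping sum and integral, so every matched contribution is killed. This yields $\wt d^*(T)=d^*(T)$ for $T\ge 0$.

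For $T$ with a negative eigenvalue the attractor contour exits $\IH_2$ and the above deformation is unavailable, so one must reach $\wt d^*(T)=0$ by a different route. The plan is to move $Y$ within $\{\det Y>1/4\}$ along a direction on which $|e^{-2\pi i\mr{Tr}(\Omega T)}|$ decays exponentially faster than the worst-case growth of $|\widetilde{F}|$ permitted by the estimates of Section \ref{stildeFconverge}; Fourier inversion then forces the coefficient to vanish. The main obstacle throughout is the termwise interchange of integration with the infinite sums comprising $S$ — especially the $\CF_3$ family with its infinite Heaviside product — which requires that the convergence estimates of Theorem \ref{thm:S_conv} persist uniformly on a neighbourhood of the attractor contour; once this interchange is justified, the algebraic collapse to $-r/\varepsilon$ is clean and the Heaviside vanishing is immediate.
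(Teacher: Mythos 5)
Your treatment of the case $T\ge 0$ is correct, and the central computation --- that charge matching forces
$ac\tau_2+bd\sigma_2+(ad+bc)z_2\big|_{\mathcal{C}_{m,n,\ell}}=\frac{r}{\varepsilon}\bigl[4abcd-(ad+bc)^2\bigr]=-r/\varepsilon<0$,
killing the relevant Heaviside in all three families --- does check out: after substituting the matched charges, the $p$-, $q$- and constant pieces cancel in pairs and only the $r$-piece survives, equal to $-r(ad-bc)^2$. This is a genuinely different, and more unified, route than the paper's. The paper disposes of the $f_pf_{-1}$ and $f_{-1}^2$ families by noting that their charge matrices $\bigl(\begin{smallmatrix}p&\pm r/2\\\pm r/2&-1\end{smallmatrix}\bigr)$ and $\bigl(\begin{smallmatrix}-1&r/2\\r/2&-1\end{smallmatrix}\bigr)$ have a negative eigenvalue and signature is preserved under $T\mapsto\gamma^tT\gamma$, so they can never match a $T\ge0$ on any contour; only for the $f_+f_+$ family does it argue via the geometric expansion of the double pole, which is precisely the $(a,b,c,d)=(1,0,0,1)$ instance of your Heaviside computation. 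Your version buys uniformity across the three families; the paper's eigenvalue argument buys contour-independence for families two and three. One point you should make explicit: for matrices where the attractor contour lands exactly on a wall $z_2'=0$ (e.g.\ $\ell=0$ with $(a,b,c,d)=(1,0,0,1)$) the expansion \eqref{e216xy} is invalid and you would be evaluating $H(0)$; the paper resolves this by the deformation \eqref{edeformedcontour} together with the observation that the residue of the double pole is proportional to $\ell$ and hence vanishes, so the answer is deformation-independent. This needs to be said for $S$ and not only for $1/\Phi_{10}$.

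The genuine gap is in the case where $T$ has a negative eigenvalue. Your plan --- pick a ray $\lambda Y$ with $\mr{Tr}(YT)<0$ and argue that $|e^{-2\pi i\mr{Tr}(\Omega T)}|$ decays faster than ``the worst-case growth of $|\wt F|$ permitted by the estimates of Section \ref{stildeFconverge}'' --- does not go through as stated. Those estimates control convergence of the series at a fixed $\Omega$; they do not bound $|\wt F|$ as $\lambda\to\infty$ (the bounds there contain factors such as $e^{2\pi(\tau_2\sigma_2-z_2^2)/C(\Omega)}$ which blow up under scaling, and $1/\Phi_{10}$ itself grows exponentially at the cusp). Worse, bounding $\wt F$ along such rays is essentially equivalent to what you are trying to prove, since the growth of $\wt F$ along $\lambda Y$ is governed exactly by the coefficients $\wt d^*(T)$ with $\mr{Tr}(YT)<0$. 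The paper closes this case by a different idea: using the manifest invariance $\wt d^*(\gamma^tT\gamma)=\wt d^*(T)$ it reduces to $T^{(0)}_{11}<0$ and observes that a single nonzero coefficient would force the absolutely convergent Fourier expansion of $\wt F$ at $\Omega=i\,c\,\mathds{1}$ to contain the divergent orbit subsum $\sum_{k\ge1}e^{-2\pi ck^2T^{(0)}_{11}}e^{-4\pi ckT^{(0)}_{12}}$ over $\Gamma^+_\infty$, contradicting the analyticity established in Theorem \ref{thm:F_rel_Igusa}. You need this invariance-plus-divergent-orbit ingredient (or an equivalent); the ray-deformation argument alone cannot close the case.
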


\subsection{Proof of Theorem \ref{thm:S_mero_poles}}
As described below \eqref{esl3zoffi}, using the manifest invariance of $S(\Omega)$ under $\mr{SL}(2,\IZ)$, the convergence of $S(\Omega)$ on $\IH_2$ except on the hypersurfaces 
\be\label{epoleint_sec}
m_1\tau_2 - n_1\sigma_2 + j\, z_2 = 0, \qquad
m_1,n_1,j\in \mathbb{Z}, \qquad m_1n_1  +
{j^2\over 4} = {1\over 4}\, ,
\ee
follows from the convergence in the $\CR$-chamber. We now show that $S(\Omega)$ is an analytic
function except for double poles on the hypersurface \eqref{epoleint_thm}. We will need the following lemma.
\begin{lemma}\label{lemma:sl2z_linear_poles}
Every hypersurface of the form \eqref{epoleint_sec} is related to the hypersurface $z=0$ by the $\mr{SL}(2,\IZ)$-action $\Omega\to\gamma\Omega\gamma^t$ of some matrix $\gamma\in\mr{SL}(2,\IZ)$ and translation $\Omega\to\Omega+A$ with $A$ a symmetric $2\times 2$ integer-valued matrix.    
\end{lemma}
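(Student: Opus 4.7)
The strategy is to split the lemma into a linear part (handled by the $\mathrm{SL}(2,\IZ)$ action) and a constant-shift part (handled by the integer translation). Under $\Omega\mapsto\gamma\Omega\gamma^t$ with $\gamma=\bigl(\begin{smallmatrix}a&b\\c&d\end{smallmatrix}\bigr)\in\mathrm{SL}(2,\IZ)$, a direct computation (already recorded in \eqref{edefprime}) shows that the hyperplane $z'=0$ pulls back to $ac\,\tau+bd\,\sigma+(ad+bc)\,z=0$, and the algebraic identity $(ad+bc)^2-4(ac)(bd)=(ad-bc)^2=1$ automatically yields $4m_1n_1+j^2=1$ for the triple $(m_1,-n_1,j)=(ac,bd,ad+bc)$. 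A symmetric integer translation $\Omega\mapsto\Omega+A$ with $A=\bigl(\begin{smallmatrix}A_{11}&A_{12}\\A_{12}&A_{22}\end{smallmatrix}\bigr)$ then shifts the pulled-back hyperplane by the constant $m_1A_{11}-n_1A_{22}+jA_{12}$. Consequently the lemma reduces to two claims: (a) the map $\gamma\mapsto(ac,-bd,ad+bc)$ surjects $\mathrm{SL}(2,\IZ)$ onto $\{(m_1,n_1,j)\in\IZ^3:4m_1n_1+j^2=1\}$, and (b) for every such triple the linear form $(A_{11},A_{22},A_{12})\mapsto m_1A_{11}-n_1A_{22}+jA_{12}$ surjects $\IZ^3$ onto $\IZ$.

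For (a) I would proceed by explicit construction. The constraint $4m_1n_1+j^2=1$ forces $j$ to be odd, so $p:=(j+1)/2$ and $q:=(j-1)/2$ are integers with $p-q=1$ and $pq=(j^2-1)/4=-m_1n_1$. Combining $ad-bc=1$ with $ad+bc=j$ is equivalent to $ad=p$ and $bc=q$, so together with $ac=m_1$ and $bd=-n_1$ the task becomes finding an integer factorization $(a,b,c,d)$ realizing these four products simultaneously. In the generic case $m_1\neq 0$, set $a:=\gcd(|m_1|,|p|)>0$, $c:=m_1/a$, and $d:=p/a$; a prime-by-prime valuation check using the identity $v_\pi(p)+v_\pi(q)=v_\pi(m_1)+v_\pi(n_1)$ (which follows from $pq=-m_1n_1$) shows $c\mid q$, so $b:=q/c$ is an integer. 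The required relations $ad-bc=p-q=1$ and $bd=pq/(ac)=-n_1$ then hold automatically, and $ac=m_1$ by construction. The boundary cases $m_1=0$ (where necessarily $j=\pm 1$, handled by $\gamma=\bigl(\begin{smallmatrix}1&-n_1\\0&1\end{smallmatrix}\bigr)$ for $j=1$ and $\gamma=\bigl(\begin{smallmatrix}0&1\\-1&-n_1\end{smallmatrix}\bigr)$ for $j=-1$) and the symmetric case $n_1=0$ are verified by writing down the matrices directly.

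For (b), the hypothesis $4m_1n_1+j^2=1$ forces any common divisor of $(m_1,n_1,j)$ to divide $1$, so $\gcd(m_1,n_1,j)=1$ and Bezout's identity provides integers $A_{11},A_{22},A_{12}$ realizing any prescribed integer shift; symmetrising yields the required $A$. The main technical obstacle is the prime-by-prime divisibility argument in (a) that ensures $c\mid q$ so that $b$ is actually an integer; all remaining steps are routine sign bookkeeping and elementary Bezout.
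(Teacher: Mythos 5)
Your proof is correct and follows the same strategy as the paper: pull back the hypersurface $z'=0$ under $\Omega\mapsto\gamma\Omega\gamma^t$ to obtain the linear forms $ac\,\tau+bd\,\sigma+(ad+bc)z=0$ with the quadratic constraint enforced by $(ad+bc)^2-4(ac)(bd)=(ad-bc)^2=1$, and then use an integer translation $\Omega\mapsto\Omega+A$ to produce arbitrary $m_2$. The only difference is that you explicitly establish the two surjectivity claims that the paper merely asserts — the $\gcd$-based factorization showing every triple $(m_1,n_1,j)$ with $4m_1n_1+j^2=1$ arises as $(ac,-bd,ad+bc)$ for some $\gamma\in\mathrm{SL}(2,\mathbb{Z})$ (including the valuation check that $c\mid q$ and the degenerate cases $m_1=0$ or $j=\pm1$), and the Bezout step using $\gcd(m_1,n_1,j)=1$ to realize every integer shift $m_2$ — so your write-up is the more complete of the two.
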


\begin{proof}
The proof proceeds along the line of the analysis following \refb{edefprime}.
Under the
$\mr{PSL}(2,\mathbb{Z})$ transformation 
\begin{equation}
    \tau':=a^2\tau+b^2\sigma+2abz,\quad \sigma':=c^2\tau+d^2\sigma+2cdz,\quad z':=ac\tau+bd\sigma+(ad+bc)z~,
\end{equation}
we get
\be
- m_1' \tau'+n_1'\, \sigma' + j'\, z' = -m_1 \tau+ n_1 \sigma + j z  \, ,
\ee
with
\be
m_1 =m_1'a^2 - n_1' c^2  + j'ac , \qquad n_1 = - m_1' b^2 + n_1'd^2 - j' bd,
\qquad j = 2 m_1' ab - 2 n_1' cd + (ad+bc) j'\, .
\ee
Furthermore, we have
\begin{equation}
m_1 n_1+{j^2\over 4}={1\over 4} \quad \Leftrightarrow \quad m_1'n_1'+\frac{j'^2}{4}=
{1\over 4}~.    
\end{equation}
This shows that all the  hypersurfaces \refb{epoleint_thm} for $m_2=0$ are
mapped to each other by $\mr{PSL}(2,\IZ)$ transformation. Moreover, the hypersurface \refb{epoleint_thm} with 
\begin{eqsp}
    (m_1,n_1,j,m_2)=(ac,-bd,ad+bc,0)~,\quad \begin{pmatrix}
        a&b\\c&d
    \end{pmatrix}\in\mr{SL}(2,\IZ)~,
\end{eqsp}
is mapped to the hypersurface $z'=0$. Thus, varying over all $\mr{SL}(2,\IZ)$ matrices, we can map an arbitrary  hypersurface \refb{epoleint_thm} for $m_2=0$ to the $z=0$ hypersurface. 
To get $m_2\neq 0$, we can translate $\Omega$ by an arbitrary integer valued 
symmetric matrix $A$. 
\end{proof}

Since according to \refb{esl3zoffi}, $S(\Omega)$ is invariant under $\mathrm{SL}(2,\IZ)$-action 
$\Omega\to\gamma\Omega\gamma^t$
and translations $\Omega\to\Omega+A$ for $A$ an integer-valued symmetric $2\times 2$ matrix,
by Lemma \ref{lemma:sl2z_linear_poles}, it suffices to prove that $S(\Omega)$ has a 
double pole at $z=0$ and then use the $\mr{SL}(2,\IZ)$-invariance and translation 
invariance of $S(\Omega)$. To this end, we notice that as we approach the $z_2=0$ line from the $z_2<0$ or the $z_2>0$ side, the terms corresponding to 
$\big{(}\begin{smallmatrix}
    1&0\\0&1
\end{smallmatrix}\big{)},\big{(}\begin{smallmatrix}
    0&-1\\1&0
\end{smallmatrix}\big{)}$ in the first sum, $\big{(}\begin{smallmatrix}
    0&-1\\1&0
\end{smallmatrix}\big{)},\big{(}\begin{smallmatrix}
    1&0\\r&1
\end{smallmatrix}\big{)}$,
$\big{(}\begin{smallmatrix}
    1&0\\0&1
\end{smallmatrix}\big{)},\big{(}\begin{smallmatrix}
    0&1\\-1&r
\end{smallmatrix}\big{)}$ in the second sum and $\big{(}\begin{smallmatrix}
    0&-1\\1&0
\end{smallmatrix}\big{)}$, 
$\big{(}\begin{smallmatrix}
    1&0\\0&1
\end{smallmatrix}\big{)}$ in the third sum in \eqref{eSdef}
gives  
\begin{eqsp}\label{eq:S_pole_def}
    S_{\mr{pole}}(\Omega)&:=\left(e^{i\pi z}-e^{-i\pi z}\right)^{-2}f_+(\tau)f_+(\sigma) 
    \\ &
    +f_{-1}e^{-2\pi i\tau}\sum_{p\geq 0}f_pe^{2\pi ip\sigma}\sum_{r>0}r 
    \left\{e^{-2\pi irz} H(-z_2)+ e^{2\pi irz} H(z_2)\right\}
    \\
    &+f_{-1}e^{-2\pi i\sigma}\sum_{p\geq 0}f_pe^{2\pi ip\tau}\sum_{r>0}r 
    \left\{e^{-2\pi irz} H(-z_2)+ e^{2\pi irz} H(z_2)\right\} \\ &
    + f_{-1}^2 \, e^{-2\pi i\tau} \, e^{-2\pi i\sigma} \, 
    \sum_{r>0}r \, \left\{e^{-2\pi irz} H(-z_2)+ e^{2\pi irz} H(z_2)\right\}
    \\&=(e^{\pi i z}- e^{-\pi i z})^{-2} f(\sigma) f(\tau)\, .
\end{eqsp}
Note that $S_{\mr{pole}}(\Omega)$ has a double pole at $z=0$ that exactly cancels
the double pole of $1/\Phi_{10}$ at $z=0$ given in \refb{epolestructure}. 
Then we can write\footnote{Despite the use of subscript ``hol'', $\CF_{i,\mr{hol}}$ continue to have poles on the subspaces \eqref{epoleint_thm} except at $z=0$.}
\be \label{espoleresults}
S(\Omega) = S_{\rm pole}(\Omega) + \CF_{1,\rm hol}(\Omega) +  \CF_{2,\rm hol}(\Omega) +  \CF_{3,\rm hol}(\Omega)\, ,
\ee
where
\begin{eqsp}\label{eq:S_hol}
\CF_{1,\rm hol}(\Omega)&:=    {1\over 2} \sum_{\big{(}\begin{smallmatrix} 1 & 0\cr 0 & 1\end{smallmatrix}\big{)},\big{(}\begin{smallmatrix} 0 & -1\cr 1 & 0\end{smallmatrix}\big{)}\neq \big{(}\begin{smallmatrix} a & b\cr c & d\end{smallmatrix}\big{)}\in \mr{PSL}(2,\mathbb{Z})}\hskip .1in 
\left(e^{\pi i \{ac\tau + bd\sigma + (ad+bc)z\}} - e^{-\pi i \{ac\tau + bd\sigma + (ad+bc)z\}}\right)^{-2}  \\ & 
\hskip 1in 
\times\ f_+(a^2\tau +b^2\sigma +2abz) \ f_+(c^2\tau+d^2\sigma+2cd z)~,  \\ 
\CF_{2,\rm hol}(\Omega)&:= \sum_{p\ge 0} f_p f_{-1} \sum_{r>0} r 
\sum_{\big{(}\begin{smallmatrix}
    0&-1\\1&0
\end{smallmatrix}\big{)},\big{(}\begin{smallmatrix}
    1&0\\r&1
\end{smallmatrix}\big{)},
\big{(}\begin{smallmatrix}
    1&0\\0&1
\end{smallmatrix}\big{)},\big{(}\begin{smallmatrix}
    0&1\\-1&r
\end{smallmatrix}\big{)}\neq \big{(}\begin{smallmatrix} a & b\cr c & d\end{smallmatrix}\big{)}\in \mr{PSL}(2,\mathbb{Z})} 
 H(ac\tau_2 + bd\sigma_2 + (ad+bc)z_2)  \\ & \hskip 1in \times\
H\left(-ac\tau_2 - bd\sigma_2 - (ad+bc)z_2 + ra^2\tau_2 + rb^2\sigma_2+2rab z_2
\right) \,
\\ & \hskip 1in \times 
e^{2\pi i \{(pa^2-c^2+r ac)\tau+(pb^2-d^2+r bd)\sigma+
(2pab-2cd + r(ad+bc) )z\}}~, 
 \\ 
\CF_{3,\rm hol}(\Omega)&:=  f_{-1}^2 \sum_{r>0} r 
\sum_{\big{(}\begin{smallmatrix} 0 & -1\cr 1 & 0\end{smallmatrix}\big{)},
\big{(}\begin{smallmatrix} 1 & 0\cr 0 & 1\end{smallmatrix}\big{)}\neq \big{(}\begin{smallmatrix} a & b\cr c & d\end{smallmatrix}\big{)}\in G_r\backslash \mr{PSL}(2,\mathbb{Z})} 
\prod_{n=-\infty}^\infty  
H(a_nc_n\tau_2 + b_nd_n\sigma_2 + (a_nd_n+b_nc_n)z_2)
\\ & \hskip 1in \times \
e^{2\pi i \{(-a^2-c^2+r ac)\tau+(-b^2-d^2+r bd)\sigma+
(-2ab-2cd + r(ad+bc)) z\}}  \, .
\end{eqsp}
We shall first show that $\CF_{1,\rm hol}$,   $\CF_{2,\rm hol}$ and  
$\CF_{3,\rm hol}$ converge in the limit  $z_2\to 0^-$.
\begin{prop}\label{prop:S_hol_conv}
The series $\CF_{1,\rm hol}$,   $\CF_{2,\rm hol}$ and  
$\CF_{3,\rm hol}$ converge absolutely in the limit $z_2\to 0^-$.    
\end{prop}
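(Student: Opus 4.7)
My plan is to adapt the convergence arguments of Sections~\ref{sec:conv_CF1}--\ref{sec:conv_CF3} to show that, after removing the specific pole-contributing matrices, the summands of $\CF_{1,\rm hol}$, $\CF_{2,\rm hol}$, $\CF_{3,\rm hol}$ admit upper bounds uniform in $z_2$ on a one-sided neighborhood $\{-\varepsilon_0 < z_2 \leq 0\}$ of the wall $z_2=0$. The obstruction in the original proofs was that $C(\Omega)=\min(\tau_2+z_2,\sigma_2+z_2,-z_2)$ collapses to $0$ as $z_2\to 0^-$. The excluded matrices are precisely those whose summand involves the factor $-z_2$ through the combination $ac\tau_2+bd\sigma_2+(ad+bc)z_2=\pm z_2$, and after removing them this combination is bounded below by a positive quantity depending only on $\tau_2,\sigma_2$.

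The main technical ingredient will be the following observation: for any $\big{(}\begin{smallmatrix} a & b\cr c & d\end{smallmatrix}\big{)}\in\mr{PSL}(2,\IZ)\setminus\{\pm I,\pm\big{(}\begin{smallmatrix} 0 & -1\cr 1 & 0\end{smallmatrix}\big{)}\}$ one has $\max(|ac|,|bd|)\geq \mu(a,b,c,d)$. I will prove this by a short case analysis: for $\mu\geq 2$, if $|a|=\mu$ then $c\neq 0$ (else $ad=1$ forces $|a|\leq 1$), hence $|ac|\geq\mu$, and analogously for the other three cases where $|b|$, $|c|$, or $|d|$ realizes the maximum; the $\mu=1$ matrices are finite in number and the only ones with $|ac|=|bd|=0$ are precisely the four excluded ones. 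Combining this with the sign-consistency of $ac$, $bd$, $(a-b)(c-d)$ established in the proof of Proposition~\ref{prop:conv_F00}, I obtain, for $|z_2|\leq\min(\tau_2,\sigma_2)/2$,
\begin{equation*}
|ac\tau_2+bd\sigma_2+(ad+bc)z_2|=|ac|(\tau_2+z_2)+|bd|(\sigma_2+z_2)+|(a-b)(c-d)|(-z_2)\geq \tilde{C}(\tau_2,\sigma_2)\,\mu(a,b,c,d),
\end{equation*}
with $\tilde{C}(\tau_2,\sigma_2):=\min(\tau_2,\sigma_2)/2>0$ independent of $z_2$. An analogous argument gives $a^2\tau_2+b^2\sigma_2+2abz_2\geq \tilde{C}(\tau_2,\sigma_2)$ whenever $(a,b)\neq(0,0)$, providing a uniform lower bound on the argument of $f_+$ as required in \eqref{eq:f+_bound}.

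For $\CF_{1,\rm hol}$, substituting the $\tilde{C}$-bound in place of $C(\Omega)$ in the proof of Proposition~\ref{prop:conv_F00} gives absolute convergence uniformly on $\{-\varepsilon_0<z_2\leq 0\}$, and hence at the limit $z_2\to 0^-$. For $\CF_{3,\rm hol}$, the excluded matrices $\pm I,\pm\big{(}\begin{smallmatrix} 0 & -1\cr 1 & 0\end{smallmatrix}\big{)}$ are exactly the $\mu=1$ matrices with $|ac|=|bd|=0$, and the estimate of Lemma~\ref{lemma:bound_B} extends to every remaining matrix in $\tilde{\CS}_3$ with $C(\Omega)$ replaced by $\tilde{C}(\tau_2,\sigma_2)$; the additional finite analyses of the $r=1,2$ contributions and of $\CF_3^{>12}$ in Section~\ref{sec:conv_CF3} survive passage to the limit since only finitely many terms appear per $r$, and the $r$-sum decays like $r^2 e^{-\mr{const}\cdot r}$ uniformly. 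For $\CF_{2,\rm hol}$, I repeat the splitting $\CF_2^{<}+\CF_2^{>}$ of Section~\ref{sec:conv_CF2} with $C(\Omega)$-based estimates replaced by $\tilde{C}$-based ones, and I verify that the four excluded $r$-dependent matrix families are precisely those for which the gap $\lceil r_0\rceil-r_0\geq\mr{const}>0$ fails as $z_2\to 0^-$.

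The principal obstacle will be the $\CF_{2,\rm hol}$ case, because the excluded matrix families depend on $r$ and because the manipulation $(a,b,c,d)\to(a,b,c-ra,d-rb)$ that led to \refb{erewritetwo} couples $r$ to the matrix entries. One has to verify, using the constraints from both Heaviside functions together with the sign consistency of $ac,bd,(a-b)(c-d)$, that no ``hidden'' $r$-dependent family beyond the four excluded ones produces an unbounded $r$-sum as $z_2\to 0^-$. I expect this to be handled by a careful enumeration carried out on the symmetrized form of $\CF_2$ prior to the substitution in Section~\ref{section3}, but the combinatorial bookkeeping will be delicate.
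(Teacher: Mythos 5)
Your overall strategy coincides with the paper's: replace $C(\Omega)=\min(\tau_2+z_2,\sigma_2+z_2,-z_2)$ by a $z_2$-independent positive constant $\wt C=\min(\tau_2,\sigma_2)$ (up to a factor of $2$) after checking that the only matrices for which $|ac\tau_2+bd\sigma_2+(ad+bc)z_2|$ fails to dominate $\wt C\,\mu(a,b,c,d)$ at $z_2=0$ are the excluded ones. Your key observation, $\max(|ac|,|bd|)\ge\mu(a,b,c,d)$ for $\gamma\neq\pm I,\pm\big{(}\begin{smallmatrix}0&-1\\1&0\end{smallmatrix}\big{)}$, is correct and handles $\CF_{1,\rm hol}$ and the $\CF_3^{>3}$ piece completely, since every matrix in $\tilde\CS_3$ has all entries nonzero.

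There are, however, two genuine gaps. First, in $\CF_{3,\rm hol}$ your claim that the $\CF_3^{>12}$ contribution has an $r$-sum decaying like $r^2e^{-\mathrm{const}\cdot r}$ \emph{uniformly} is false as $z_2\to 0^-$: the decay constants in \refb{e496yx} are $\tilde C'=\min\{\sigma_2+z_2,-z_2\}$ and $\tilde C'''=\min\{-z_2,\tau_2+z_2\}$, both equal to $-z_2\to 0$, so the bound you invoke gives no decay at the wall. At $z_2=0$ the exponent of the primed family in \refb{e487x} is $2\pi\{\tau_2+\sigma_2-n(r-n)\sigma_2\}$; one must note that for $1\le n\le r-1$ one has $n(r-n)\ge r-1$, so those terms still decay in $r$, and that the single non-decaying terms ($n=0$ in the primed family, $n=r-1$ in the triple-primed family) correspond precisely to the excluded matrices $\big{(}\begin{smallmatrix}0&-1\\1&0\end{smallmatrix}\big{)}$ and $\big{(}\begin{smallmatrix}1&0\\r&1\end{smallmatrix}\big{)}=g_r\big{(}\begin{smallmatrix}0&1\\-1&0\end{smallmatrix}\big{)}$. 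Without this term-by-term identification the $r$-sum in $\CF_{3,\rm hol}$ is not controlled. Second, the $\CF_{2,\rm hol}$ case is left as a prospective "delicate enumeration"; the missing step is in fact concrete and does not require returning to the symmetrized form of $\CF_2$: in the $\CF_2^{>\eps}$ piece the sum is already reparametrized by $\big{(}\begin{smallmatrix}a&b\\-k&-\ell\end{smallmatrix}\big{)}$ with $(c,d)=(a\lceil r_0\rceil-k,\,b\lceil r_0\rceil-\ell)$ and $r=\lceil r_0\rceil$ fixed by the matrix, so the only degenerate cases $\big{(}\begin{smallmatrix}a&b\\-k&-\ell\end{smallmatrix}\big{)}=I$ or $\big{(}\begin{smallmatrix}0&1\\-1&0\end{smallmatrix}\big{)}$ translate exactly into the two $r$-dependent excluded families $\big{(}\begin{smallmatrix}1&0\\r&1\end{smallmatrix}\big{)}$ and $\big{(}\begin{smallmatrix}0&1\\-1&r\end{smallmatrix}\big{)}$, while the $\CF_2^{>}$ and $\CF_2^{<\eps}$ pieces only need the $I,S$ exclusions. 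Supplying these two identifications would close your argument.
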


\begin{proof}
Our strategy will be to reexamine the proof of convergence of 
$\CF_1$, $\CF_2$, $\CF_3$ in $\CR$ given in
Section \ref{stildeFconverge} and identify the steps where we 
used the strict inequality $z_2<0$.
Then we shall show that the inequality $z_2<0$ can be replaced by $z_2\le 0$
once we remove the terms described in \refb{eq:S_hol}.

Reexamining the analysis in Section \ref{stildeFconverge} we see that 
the main step where we need the $z_2<0$ inequality is to ensure that $C(\Omega)$
defined in \refb{exx4} is strictly positive. 
For $z_2=0$, $C(\Omega)=0$ but we can define
a new quantity
\be
\wt C(\Omega) := {\rm Min} \{\tau_2, \sigma_2\}\, ,
\ee 
which  is strictly positive. Therefore if in the
analysis of $\CF_{1,\rm hol}$,   $\CF_{2,\rm hol}$ and  
$\CF_{3,\rm hol}$, we can
replace $C(\Omega)$ by $\wt C(\Omega)$, we would
prove convergence of $S_{\mr{hol}}(\Omega)$ for $\Omega\in \CR\cup \{z_2=0\}$.

The first place in the analysis of Section \ref{stildeFconverge} where 
$C(\Omega)$  was used, is in \refb{ecrucialinequality} in the analysis of
$\CF_1$.
For $z_2=0$, the left hand side of \refb{ecrucialinequality} can be written as
\be\label{etrial}
|ac \tau_2+ bd \sigma_2|\, .
\ee
Recalling that $\mu(a,b,c,d)={\rm Max}(|a|,|b|,|c|,|d|)$, we can see that as long as
$a,b,c,d$ are all non-zero, \refb{etrial} is larger than $\wt C(\Omega)\, \mu(a,b,c,d)$.
Even when one of $a,b,c,d$ vanish, we still have $|ac \tau_2+ bd \sigma_2|
= \wt C(\Omega)\, \mu(a,b,c,d)$. For example if $a=0$ then $bc=-1$ and we can
take $\mu(a,b,c,d)=|d|$. This will give $|ac \tau_2+ bd \sigma_2|
=|d\sigma_2| \geq \wt C(\Omega)\, \mu(a,b,c,d)$. Similar analysis can be done of the cases when
$b$, $c$ or $d$ vanish. The only case where 
$|ac \tau_2+ bd \sigma_2|< \wt C(\Omega)\, \mu(a,b,c,d)$ occurs when two of the
coefficients $a,b,c,d$ vanish. There are two cases:
$\begin{pmatrix} a & b\cr c & d\end{pmatrix}=
\begin{pmatrix} 1 & 0\cr 0 & 1\end{pmatrix}$ or
$\begin{pmatrix} 0 & -1\cr 1 & 0\end{pmatrix}$. In both cases 
$|ac \tau_2+ bd \sigma_2|=0$ but $ \wt C(\Omega)\, \mu(a,b,c,d)=\mr{Min}\{\tau_2,\sigma_2\}$. However
in the expression for $\CF_{1,\rm hol}$ given in \refb{eq:S_hol},
these terms have been removed. 
Therefore we conclude that $\CF_{1,\rm hol}$ is convergent
in $\CR\cup\{z_2=0\}$.

Next we analyze the contribution to $\CF_{2,\rm hol}$.
The first place where the $C(\Omega)>0$ condition was used 
was in \refb{ecomone} and \refb{ecomtwo} during the
analysis of $\CF^>_{2}$.
Of these, in \refb{ecomtwo} we can replace $C(\Omega)$ by $\wt C(\Omega)$ since
$a$ and $b$ cannot vanish simultaneously, but in \refb{ecomone} we need as
usual the conditions $\begin{pmatrix} a & b\cr c & d\end{pmatrix}\ne$ 
$\begin{pmatrix} 1 & 0\cr 0 & 1\end{pmatrix}$ or
$\begin{pmatrix} 0 & -1\cr 1 & 0\end{pmatrix}$. These terms have been 
removed from the definition of
$\CF_{2}$. Hence we see that we can replace $C(\Omega)$ by $\wt C(\Omega)$
in this analysis.

$C(\Omega)$ also appears in \refb{e454xxz} in the proof of convergence of
$\CF^{<\epsilon}_{2}$. The argument for replacing $C(\Omega)$ by $\wt C(\Omega)$
is identical to the one given above.

The third place where $C(\Omega)$ appears in the analysis of $\CF_2$ is in
analyzing the convergence of \refb{eq:ser_>eps} that appears in the 
expression for $\CF^{>\epsilon}_{2}$.  In this case the matrix
$\begin{pmatrix} a & b\cr c & d\end{pmatrix}$ is replaced by 
$\begin{pmatrix} a & b\cr -k & -\ell\end{pmatrix}$ and as before we have
two possible choices for this matrix where we cannot replace $C(\Omega)$ by
$\wt C(\Omega)$. 
To translate this into a condition on $a,b,c,d$, we use
\be
(c,d)=(a\,n-k, b\,n-\ell) = (a\, \r0-k, b\, \r0-\ell) = (ar-k, br-\ell)~,
\ee
where we used the fact that $n:=\r0$ and the fact that 
$\CF^{>\epsilon}_{2}$ is part of the contribution to $\CF_2$ where we
have set $r=\r0$. Setting $\begin{pmatrix} a & b\cr -k & -\ell\end{pmatrix}=\begin{pmatrix} 1 & 0\cr 0 & 1\end{pmatrix}$ and
$\begin{pmatrix} 0 & 1\cr -1 & 0\end{pmatrix}$ gives the matrices $\begin{pmatrix} a & b\cr c & d\end{pmatrix}
=\begin{pmatrix} 1 & 0\cr r & 1\end{pmatrix}, \begin{pmatrix} 0 & 1\cr -1 & r\end{pmatrix}$,
both of which  are removed in the definition of
$\CF_{2,\rm hol}$ in \refb{eq:S_hol}. This shows that $\CF_{2,\rm hol}$
represents a convergent sum for $\Omega\in \CR\cup\{z_2=0\}$.

In the analysis of $\CF_3$ we used $C(\Omega)$ in \refb{eBless}. However in that
case we had $\big{(}\begin{smallmatrix} a & b\cr c & d\end{smallmatrix}\big{)}
\in \tilde{\CS}_3$. Since the set $ \tilde{\CS}_3$ does not include 
either $\begin{pmatrix} 1 & 0\cr 0 & 1\end{pmatrix}$ or 
$\begin{pmatrix} 0 & -1\cr 1 & 0\end{pmatrix}$, we can replace $C(\Omega)$ by
$\wt C(\Omega)$. 

The $z_2<0$ condition was used in another place in the analysis of $\CF_3$
that does not directly use $C(\Omega)$. In the analysis of \refb{e496yx},
\refb{e497yx} we needed the conditions
\be
\tilde{C}'={\rm Min}\{\tilde{\tau}'_2-\tilde{z}_2', \tilde{z}'_2\}>0, \qquad
\tilde{C}''={\rm Min}\{\tilde{\tau}''_2-\tilde{z}_2'', \tilde{z}''_2\}>0, \qquad
\tilde{C}''={\rm Min}\{\tilde{\tau}'''_2-\tilde{z}_2''', \tilde{z}'''_2\}>0,
\ee
where $\tilde{\tau}'_2$, $\tilde{z}'_2$, $\tilde{\tau}''_2$, $\tilde{z}''_2$, 
$\tilde{\tau}'''_2$, $\tilde{z}'''_2$ have been defined in
\refb{eq:tsz_primes_def}. For $z_2=0$, using \eqref{eq:tsz_primes_def}, we see that 
\begin{eqsp}
    \tilde{C}'=0~,\quad \tilde{C}''=\tau_2>0~,\quad \tilde{C}'''=0~.
\end{eqsp}
Thus, using the bound in \eqref{e496yx}, the second sum in the third line of \eqref{e487x} converges for $z_2=0$ but to prove the convergence of first sum in third line of \eqref{e487x} and fourth line of \eqref{e487x}, we need to examine the exponents more closely. Using \eqref{eq:tsz_primes_def}, the exponent in the first term in the third line of \eqref{e487x} for $z_2=0$ can be written as
\begin{eqsp}
    2\pi (\tau_2+\sigma_2-n(r-n)\sigma_2)~.
\end{eqsp}
Thus, only the $n=0$ term causes a problem in the convergence of this sum.  Similarly, the exponent of the fourth line of \eqref{e487x} for $z_2=0$ is given by
\begin{eqsp}
    2\pi (2\tau_2+\sigma_2-n(r-n)\tau_2+(2n-r)\tau_2)=2\pi (\tau_2+\sigma_2-(n+1)(r-1-n)\tau_2)~.
\end{eqsp}
Thus, only the $n=r-1$ term causes a problem in the convergence of this sum. 
From \eqref{eq:CSr_def}, we see that these two problematic cases correspond to the matrices
\begin{eqsp}
\begin{pmatrix} a & b\cr c & d\end{pmatrix}=\begin{pmatrix} 0 & -1\cr 1 & 0\end{pmatrix}~,\quad \begin{pmatrix} a & b\cr c & d\end{pmatrix}=\begin{pmatrix} 1 & 0\cr r-1 & 1\end{pmatrix}\begin{pmatrix} 1 & 0\cr 1 & 1\end{pmatrix}=\begin{pmatrix} 1 & 0\cr r & 1\end{pmatrix}=g_r\begin{pmatrix} 0 & 1\cr -1 & 0\end{pmatrix}~.    
\end{eqsp}
Since this is excluded from
the definition of $\CF_{3,\rm hol}$ in \refb{eq:S_hol}, we conclude that
$\CF_{3,\rm hol}$  represents a convergent sum. 
\end{proof}

This shows that $S(\Omega) - S_{\rm pole}(\Omega)$ has a convergent expansion as $z_2\to 0^-$
from the $\CR$ chamber. 
S-duality transformation by $\gamma_0:=\begin{pmatrix} 0 & -1\cr 1 & 0\end{pmatrix}$
exchanges $\tau$ and $\sigma$ and maps $z$ to $-z$. This
takes the $\CR$ chamber to the $\CL$ chamber  leaving fixed the boundary $z_2=0$. This
also leaves $S_{\rm pole}$ invariant. 
Furthermore, since the set of
matrices $\begin{pmatrix} a & b\cr c & d\end{pmatrix}$ that have been
removed from the sum in \refb{eq:S_hol} is invariant under right multiplication
by $\gamma_0$,
this also leaves $\CF_{1,\rm hol}$, $\CF_{2,\rm hol}$ and $\CF_{3,\rm hol}$ unchanged.
This shows that $S(\Omega) - S_{\rm pole}(\Omega)$ also has a convergent expansion as
$z_2\to 0^+$ from the $\CL$ chamber. Finally the removal of the special matrices from the
sum in \refb{eq:S_hol} has removed all factor of $H(\pm z_2)$ from the sum and hence
$\CF_{1,\rm hol}$, $\CF_{2,\rm hol}$ and $\CF_{3,\rm hol}$ 
is analytic in $\CR\cup\CL\cup\{z_2=0\}$. 

This proves the holomorphicity of $S(\Omega) - S_{\rm pole}(\Omega)$ in the region $\CR\cup\CL\cup\{z_2=0\}$.
$\mathrm{PSL}(2,\IZ)$-invariance of
$S(\Omega)$ then establishes
Theorem \ref{thm:S_mero_poles}.

\subsection{Proof of Theorem \ref{thm:F_rel_Igusa}}\label{sec:hol_Ftilde}
\begin{proof}
We need to prove that the function $\wt{F}(\Omega)$ given by the RHS of \eqref{eguessfin} is holomorphic in the domain $\det\rm Im \, \Omega)>1/4$. First, note that, as proved below \eqref{eq:poles_new},
$1/{\Phi_{10}}$ has no poles of the form \eqref{epoles} with $n_2\neq 0$ in the domain $\det \rm Im \, \Omega>1/4$ \cite{Sen:2007qy}. 
Eqs.\refb{eq:S_pole_def} and \refb{espoleresults} shows us that $S(\Omega)$ 
is an analytic function in $\CR\cup\CL\cup\{z_2=0\}$ except for a double pole at $z=0$
where it behaves as
\be
(e^{\pi i z}- e^{-\pi i z})^{-2} f(\sigma) f(\tau)\, .
\ee
This is precisely how $1/\Phi_{10}$ behahves at $z=0$. 
Therefore $1/\Phi_{10}(\Omega) - S(\Omega)$ is analytic in the intersection of the domains
$\CR\cup\CL\cup\{z_2=0\}$ and $\mr{det\,Im\,}\Omega>\frac{1}{4}$.
$\mathrm{PSL}(2,\IZ)$-invariance of
$S(\Omega)$ and $1/\Phi_{10}(\Omega)$ then establishes that
$\wt F(\Omega) =1/\Phi_{10}(\Omega) - S(\Omega)$ does not have a pole
in the domain $\det\rm Im \, \Omega)>1/4$, since any point in this domain can be
mapped to $\CR\cup\{z_2=0\}$ under an appropriate $\mathrm{PSL}(2,\IZ)$-transformation.
\end{proof}

\subsection{Proof of Theorem \ref{thm:sing_deg_gen}}\label{sec:Ftilde_ana_cont}
\begin{proof}
By Theorem \ref{thm:S_mero_poles}, the RHS of \eqref{eguessfin} defines the meromorphic continuation of $\widetilde{F}$ to the entire $\IH_2$. From the proof of Theorem \ref{thm:F_rel_Igusa}, we see that all the poles of $1/\Phi_{10}$ of the form \eqref{epoles} with $n_2=0$ is canceled by $S(\Omega)$. Thus the only poles of $\widetilde{F}$ are of the form \eqref{epoles} with $n_2\neq 0$. Using the invariance of the hypersurface \eqref{epoles} under the transformation  
\begin{eqsp}
    (m_1,n_1,m_2,n_2,j)\to (-m_1,-n_1,-m_2,-n_2,-j)~,
\end{eqsp}
we can restrict to $n_2\geq 1$. 
\end{proof}

\subsection{Proof of Theorem \ref{thm:sing_cent_att_cont}} \label{sconvergenceproof}

\begin{proof}
We shall now give a proof of formula \refb{edstarT}. Due to Theorem \ref{thm:F_rel_Igusa}, $\widetilde{F}(\Omega)$ does not have any singularity in the domain $\mr{det\,Im}\,\Omega>\frac{1}{4}$. 
Hence its Fourier expansion is unique in that domain:
\be\label{ed_star_tildeF}
\wt d^*(m,n,\ell) = (-1)^{\ell+1} \int_{\widetilde{\mathcal{C}}} d\tau d\sigma dz\, e^{-2 \pi i\left(m\tau+n\sigma+\ell z\right)} \wt{F}(\Omega)\, ,
\ee
for any contour $\wt{\mathcal{C}}\subset \det \rm Im \, \Omega > {1\over 4}$ of the form
\be \label{eq 3.3new}
\wt{\mathcal{C}}: 0\leq \tau_1,\sigma_1,z_1< 1\,,\quad \tau_2,\sigma_2,z_2 ~~\text{fixed}~.
\ee
In view of \refb{ed_star_tildeF}, 
to prove the desired result, we need to show that
\be\label{eequality}
\int_{\widetilde{\mathcal{C}}}d\tau d\sigma dz\, e^{-2 \pi i\left(m\tau+n\sigma+\ell z\right)} \wt{F}(\Omega) = \int_{\mathcal{C}_{m,n,\ell}} d\tau d\sigma dz\, e^{-2 \pi i\left(m\tau+n\sigma+\ell z\right)} \frac{1}{\Phi_{10}(\Omega)}~,
\ee
for $\CC_{m,n,\ell}$ given by \eqref{eq:att_cont} and for any choice of contour $\widetilde{\mathcal{C}}$ of the form given in
\refb{eq 3.3new}.

We first consider charges $(m,n,\ell)$ with
\be\label{econdition}
m\geq 0, \qquad n\ge 0, \qquad 4mn-\ell^2\ge 0\, . 
\ee
It follows from this that the charge matrix $T$ has non-negative eigenvalues:
\be
T\ge \mathbb{0}\, .
\ee
Since by Theorem \ref{thm:F_rel_Igusa}, $\wt F(\Omega)$ is analytic
for $\det\,$Im$\,\Omega>1/4$, it is sufficient to prove \refb{eequality} for the choice
$\widetilde{\mathcal{C}} =\mathcal{C}_{m,n,\ell}$. We shall proceed with this choice.
Since $\wt F(\Omega)=1/\Phi_{10}(\Omega) - S(\Omega)$, we need to show that
\be \label{esomegaidentity}
\int_{\mathcal{C}_{m,n,\ell}} d\tau d\sigma dz\, e^{-2 \pi i\left(m\tau+n\sigma+\ell z\right)}
S(\Omega)\, ,
\ee
vanishes.

We shall first show that
the terms proportional
to $f_pf_{-1}$ and $f_{-1}^2$ in \refb{eSdef} 
do not contribute to \refb{esomegaidentity}.
First take $(a,b,c,d)=(1,0,0,1)$ in \refb{eSdef}.
In this case the coefficients of the $f_pf_{-1}$ and $f_{-1}^2$ terms represent contributions
from the charge matrix with eigenvalues $(-1,p)$ and $(-1,-1)$ respectively. Since 
$\gamma^t T \gamma$ for $\gamma\in \mr{SL}(2,\mathbb{Z})$ will have eigenvalues whose signs
match those of $T$, we see that each term in \refb{eSdef} multiplying $f_{-1}f_p$
or $f_{-1}^2$ will have at least one eigenvalue of $T$ negative. Hence they will not contribute to
\refb{eequality} for $T\ge \mathbb{0}$.

Thus we are left with the
terms in the first two lines of \refb{eSdef}. 
We need to show that these also do not contribute to
\refb{esomegaidentity}. 
We shall first prove this for the choice $(a,b,c,d)=(1,0,0,1)$; and then extend the proof to the other terms using $\mr{SL}(2,\IZ)$-invariance. The relevant term is:
\begin{equation}\label{eendT}
\int_{\mathcal{C}_{m,n,\ell}} d\tau d\sigma dz\, e^{-2 \pi i\left(m\tau+n\sigma+\ell z\right)} 
(e^{\pi i z} - e^{-\pi i z})^{-2}\,
f(\sigma) f(\tau)\, .
\end{equation}
First consider the case when $\ell>0$. In that case it follows from \eqref{eq:att_cont} that 
on $\mathcal{C}_{m,n,\ell}$, 
${\rm Im}(z)<0$. In this region, the $(e^{\pi i z} - e^{-\pi i z})^{-2}$ factor
in \refb{eendT} can be expanded to 
\be 
(e^{\pi i z} - e^{-\pi i z})^{-2} = \sum_{k=1}^\infty k\, e^{- 2\pi i k z}. 
\ee
Substituting this into \refb{eendT} we see that the integral over $z$ vanishes since $k,\ell>0$.
A similar argument holds for the case $\ell<0$. For $\ell=0$ the integrand has a double
pole at $z=0$ but does not have a single pole term, and the integral vanishes
for either choice of the contour. 

For general $(a,b,c,d)$, we express the integrand as 
\begin{eqsp}
    e^{-2\pi i(m'\tau'+n'\sigma'+\ell'z')}(e^{\pi iz'}-e^{-i\pi z'})^{-2}~,
\end{eqsp}
where 
\begin{eqsp}
    \begin{pmatrix}
        m'&\ell'/2\\\ell'/2&n'
    \end{pmatrix}:=(\gamma^{-1})^t\begin{pmatrix}
        m&\ell/2\\\ell/2&n
    \end{pmatrix}\gamma^{-1}~,\quad \begin{pmatrix}
        \tau'&z'\\z'&\sigma'
    \end{pmatrix}:=\gamma\begin{pmatrix}
        \tau&z\\z&\sigma
    \end{pmatrix}\gamma^t~,\quad \gamma=\begin{pmatrix}
        a&b\\c&d
    \end{pmatrix}\in\mr{SL}(2,\IZ)~.
\end{eqsp}
The contour $\CC_{m,n,\ell}$ may be expressed as
\begin{eqsp}
\tau_2'=\frac{2n'}{\varepsilon}, \quad \sigma_2'=\frac{2m'}{\varepsilon}, \quad z_2'=-\frac{\ell'}{\varepsilon}~.    
\end{eqsp}
The vanishing of the contour integral can now be shown as before by replacing $z$ by $z'$ and $\ell$ by $\ell'$.
\def\ngeq{\ge \hskip -.12in / \hskip .1in}

This proves the first part of \refb{edstarT}. To prove the second part we need to
show that $\wt d^*(m,n,\ell)$ vanishes if $T\ngeq \mathbb{0}$ where the symbol
$T\ngeq \mathbb{0}$ means that at least one eigenvalue of $T$ is strictly negative.
We shall prove this by showing 
that the sum 
\be \label{edefFOtilde}
\sum_T \wt d^*(T) e^{2\pi i {\rm Tr}(\Omega T)}\, ,
\ee
would diverge if $\wt d^*(T)$ had been non-zero
for any $T\ngeq 0$.
In particular, we will show that \refb{edefFOtilde} would diverge on the subspace 
\begin{equation}\label{eq:theta_div_space_pre}
    \{\Omega\in\IH_2: \Omega=i\, c\, \mathds{1}\}~,
\end{equation}
for any positive constant $c$.
Let $T^{(0)}$ be a matrix such that $T^{(0)}\not\geq 0$ and $d^*(T^{(0)})\ne 0$.
Then without the loss of generality, we can assume that 
\begin{eqsp}
    T^{(0)}_{11}<0~.
\end{eqsp}
If not, then 
since $T^{(0)}\not\geq\mathbb{0}$, we can pick a primitive integer vector $v\in\IZ^2$ such that $v^tT^{(0)}v<0$. Then there exists another primitive vector $w\in\IZ^2$ such that $\mr{det}[v,w]=1$. Consider $\gamma=[v,w]\in\mathrm{SL}(2,\IZ)$. Then the $(1,1)$-entry of $\gamma^t T^{(0)}\gamma$ is negative:
\begin{eqnarray}
(\gamma^t T^{(0)}\gamma)_{11}=v^tT^{(0)}v<0~.    
\end{eqnarray}
Now, from the manifest $\mr{PSL}(2,\IZ)$-invariance of $S(\Omega)$ and $\Phi_{10}^{-1}$, we see that $\wt d^*(T)$ is PSL(2,$\mathbb{Z}$) invariant and hence
$\widetilde{d}^*(\gamma^t T^{(0)}\gamma)=\widetilde{d}^*(T^{(0)})\neq 0$.
So we can declare $\gamma^t T^{(0)}\gamma$ as the new $T^{(0)}$ and proceed with the analysis.

Let $G^{(0)}\subset \mr{PSL}(2,\mathbb{Z})$ be defined as
\be
\gamma\in G^{(0)} \quad \hbox{iff} \quad \gamma^t T^{(0)} \gamma = T^{(0)}\, .
\ee
Due to the PSL(2,$\mathbb{Z}$)-invariance of $\widetilde{d}^*(T)$, 
the sum over $T$ in \refb{edefFOtilde} must include a sum of the form
\be
\wt d^*(T^{(0)}) \sum_{g\in G^{(0)}\backslash\mr{PSL}(2,\mathbb{Z})} e^{2\pi i \mr{Tr}(\Omega g^t T^{(0)} g)}
= \wt d^*(T^{(0)}) \sum_{g\in G^{(0)}\backslash\mr{PSL}(2,\mathbb{Z})} e^{-2\pi c \mr{Tr}( g^t T^{(0)} g)}\, ,
\ee
where we have used \refb{eq:theta_div_space_pre}.
Now consider restricting the sum over $g$ to a subset $\Gamma^+_\infty$ of $\mr{SL}(2,\mathbb{Z})$: 
\begin{eqsp}
    \Gamma^+_\infty:=\left\{\begin{pmatrix}
        1&k\\0&1
    \end{pmatrix}:k\in\IN\right\}~.
\end{eqsp} 
Then, we have 
\begin{eqsp}\label{eq:theta_div_proof}
   &  \sum_{g\in G^{(0)}\backslash\mr{PSL}(2,\mathbb{Z})} e^{-2\pi c \mr{Tr}( g^t T^{(0)} g)}
   \ge  \sum_{g\in \Gamma^+_\infty} e^{-2\pi c \mr{Tr}( g^t T^{(0)} g)}
    = \, e^{-2\pi (T_{11}+T_{22})}\sum_{k=1}^\infty e^{-4\pi kT_{12}}e^{-2\pi k^2T_{11}}=\infty~,
\end{eqsp}
since $T_{11}<0$. This shows that the convergence of $\wt F(\Omega)$ in the entire 
region $\det \rm Im \, \Omega>1/4$ requires
$\wt{d}^*(T^{(0)})$ to vanish.
Since by theorem \ref{thm:F_rel_Igusa} we know that the expansion of 
$\wt F(\Omega)$ converges in the region
$\det\,$Im$\,\Omega>1/4$, we conclude that $\wt d^*(T)$ must vanish for $T\not\geq \mathbb{0}$.

This shows that
\be
\wt d^*(T) = d^*(T) \quad \forall \ T, 
\ee
and 
\be
\wt F(\Omega) = \sum_T \, d^*(T) \, e^{2\pi i {\rm Tr}(\Omega T)} = F(\Omega)\, .
\ee
Since $\wt F(\Omega)$ is free from singularities in the domain $\det\,$Im$\,\Omega>1/4$
and hence has a convergent expansion, $F(\Omega)$ must also have a convergent expansion in this
domain.
\end{proof}

\noindent{\bf Acknowledgments:}
A.B. and R.K.S. would like to acknowledge the hospitality of ICTS Bangalore where part of this work was done. R.K.S. would like to thank Karam Deo Shankhadhar for some useful discussions about Siegel modular forms. The work of R.K.S. is supported by the US Department of Energy under grant 
DE-SC0010008. The work of A.B. was supported by grant MTR/2019/000582 from the SERB, 
Government of India. The work of A.S. was supported by ICTS-Infosys Madhava Chair 
Professorship and the Department of Atomic Energy, Government of India, under project no. RTI4001.

\bibliography{main.bib}

\end{document}